\documentclass[11pt]{article}
\usepackage{amsmath,amssymb,amsfonts,amsthm,epsfig}
\usepackage[usenames,dvipsnames]{xcolor}

\usepackage{bm,xspace}

\usepackage{cancel}
\usepackage{fullpage}
\usepackage{liyang}
\usepackage{framed}
\usepackage{verbatim}
\usepackage{enumitem}
\usepackage{array}
\usepackage{multirow}
\usepackage{afterpage}
\usepackage{mathrsfs}  

\usepackage{dsfont} 
\usepackage[normalem]{ulem}

\newcommand{\lnote}[1]{\footnote{{\bf \color{blue}Li-Yang}: {#1}}}
\newcommand{\rnote}[1]{\footnote{{\bf \color{red}Rocco}: {#1}}}
\newcommand{\pparagraph}[1]{\medskip \noindent {\bf {#1}}}

\usepackage{caption} 

\usepackage{todonotes}

\makeatletter
\newtheorem*{rep@theorem}{\rep@title}
\newcommand{\newreptheorem}[2]{
\newenvironment{rep#1}[1]{
 \def\rep@title{#2 \ref{##1}}
 \begin{rep@theorem}\itshape}
 {\end{rep@theorem}}}
\makeatother
\theoremstyle{plain}

\newcommand{\ignore}[1]{}

\makeatletter

\makeatother

\def\colorful{0}

\ifnum\colorful=1

\newcommand{\red}[1]{{\color{red} {#1}}}

\newcommand{\gray}[1]{{\color{gray}{#1}}}

\fi
\ifnum\colorful=0

\newcommand{\red}[1]{{{#1}}}

\newcommand{\gray}[1]{{{#1}}}

\fi

\usepackage{boxedminipage}

\newreptheorem{theorem}{Theorem}
\newtheorem*{theorem*}{Theorem}
\newreptheorem{lemma}{Lemma}
\newreptheorem{proposition}{Proposition}
\newtheorem*{noclaim*}{Claim}

\newcommand{\uhr}{\upharpoonright}

\newcommand{\encode}{\mathrm{encode}}

\newcommand{\acz}{\mathsf{AC^0}}

\newcommand{\ds}{\displaystyle}

\newcommand{\CDT}{\textsc{CDT}}
\newcommand{\CCDT}{\textsc{CCDT}}
\newcommand{\depth}{\mathrm{depth}} 

\newcommand{\SYM}{\mathsf{SYM}}

\newcommand{\simple}{\mathrm{simple}}

\newcommand{\stars}{\mathrm{stars}}
\newcommand{\gentle}{\mathrm{gentle}}

\newcommand{\PRG}{\mathrm{PRG}}
 
\newcommand{\SL}{\mathrm{SL}}

\newcommand{\Fixed}{\mathrm{Fixed}}

\renewcommand{\N}{\mathds{N}}
\renewcommand{\F}{\mathds{F}}

\begin{document}

\title{
Improved pseudorandom generators from \\ pseudorandom multi-switching lemmas
}

\author{ Rocco A.~Servedio\thanks{Supported by NSF grants CCF-1420349 and CCF-1563155. Email: {\tt rocco@cs.columbia.edu}}\\ 
Columbia University \and Li-Yang Tan\thanks{Supported by NSF grant CCF-1563122.  Part of this research was done during a visit to Columbia University. Email: {\tt liyang@cs.columbia.edu}} \\ Toyota Technological Institute}

\maketitle

\begin{abstract}

We give the best known pseudorandom generators for two touchstone classes in unconditional derandomization: an $\eps$-PRG for the class of size-$M$ depth-$d$ $\acz$ circuits with seed length $\log(M)^{d+O(1)}\cdot \log(1/\eps)$, and an $\eps$-PRG for the class of $S$-sparse $\F_2$ polynomials with seed length $2^{O(\sqrt{\log S})}\cdot \log(1/\eps)$. 
These results bring the state of the art for unconditional derandomization of these classes into sharp alignment with the state of the art for computational hardness for all parameter settings: improving on the seed lengths of either PRG would require breakthrough progress on longstanding and notorious circuit lower bounds. 

The key enabling ingredient in our approach is a new \emph{pseudorandom multi-switching lemma}.  We derandomize recently-developed \emph{multi}-switching lemmas, which are powerful generalizations of H{\aa}stad's switching lemma that deal with \emph{families} of depth-two circuits. Our pseudorandom multi-switching lemma---a randomness-efficient algorithm for sampling restrictions that simultaneously simplify all circuits in a family---achieves the parameters obtained by the (full randomness) multi-switching lemmas of Impagliazzo, Matthews, and Paturi~\cite{IMP12} and H{\aa}stad~\cite{Has14}.  This optimality of our derandomization translates into the optimality (given current circuit lower bounds) of our PRGs for $\acz$ and sparse $\F_2$ polynomials.

\end{abstract}

 \thispagestyle{empty}

\newpage

\setcounter{page}{1}

\section{Introduction} 

{\bf Switching lemmas.}  Switching lemmas, first established in a series of breakthrough works in the 1980s~\cite{ajtai1983,FSS:84,yao1985,Hastad86}, are fundamental results stating that depth-two circuits (ORs of ANDs or vice versa) simplify dramatically when they are ``hit with a random restriction.''  They are a powerful technique in circuit complexity, and are responsible for a remarkable suite of hardness results concerning small-depth Boolean circuits ($\acz$). Switching lemmas are at the heart of several near-optimal bounds on $\acz$ circuits, such as essentially optimal correlation bounds against the {\sc Parity} function~\cite{IMP12,Has14} and the worst-case and average-case depth hierarchy theorems of \cite{Hastad86,RST15,Hastad16}.  Indeed, comparably strong results are lacking (and are major open problems) for seemingly small extensions of $\acz$, such as $\acz$ augmented with parity or mod-$p$ gates, for which switching lemmas do not apply; this gap highlights the importance of switching lemmas as a proof technique.

Switching lemmas are versatile as well as powerful:  many results in circuit complexity rely on sophisticated variants and generalizations of the ``standard'' switching lemmas.  Recent examples include
the aforementioned correlation bounds and average-case depth hierarchy theorems, as well as powerful lower bounds on the circuit complexity of the {\sc Clique} problem \cite{Beame90,Rossman08}, lower bounds on the small-depth circuit complexity of {\sc st-Connectivity}~\cite{COST16}, and lower bounds against $\acz$ formulas~\cite{Rossman15}. Beyond the immediate arena of circuit lower bounds, switching lemmas are also important tools in diverse areas including propositional proof complexity \cite{PBI93,KPW94,PRST16}, computational learning theory \cite{LMN:93}, and the design of circuit satisfiability algorithms \cite{BIS12,IMP12}. 

This paper is about the role of switching lemmas in the study of \emph{unconditional pseudorandomness}.  Switching lemmas have a long history in this area; indeed, arguably the first work in unconditional derandomization, the seminal paper of Ajtai and Wigderson \cite{AjtaiWigderson:85}, was based on a \emph{pseudorandom} switching lemma, which they used to give the first non-trivial pseudorandom generator for $\acz$.  (Interestingly, after many subsequent developments described in detail in Section~\ref{sec:applic}, we come full circle in this paper and use the \cite{AjtaiWigderson:85} framework to give a new pseudorandom generator for $\acz$ that is essentially best possible without improving longstanding circuit lower bounds.)  One key contribution that we make in this paper is to bring together two important generalizations of standard switching lemmas, one quite old and one very new:

\begin{itemize}

\item [(i)] \emph{pseudorandom} switching lemmas (originating in \cite{AjtaiWigderson:85}), which employ pseudorandom rather than ``fully random'' restrictions, and 

\item [(ii)] recently developed \emph{multi-switching lemmas} \cite{IMP12,Has14} which simultaneously simplify all of the depth-two circuits in a family of such circuits, rather than a single depth-two circuit as is the case for standard switching lemmas.

\end{itemize}

Let us discuss each of these generalizations in turn.

\vspace{-5pt}
\paragraph{Pseudorandom switching lemmas.}  The (truly) random restrictions that are used in standard switching lemmas make a coordinatewise-independent random choice for each input variable $x_1,\dots,x_n$ of whether to map it to 0, to 1, or to leave it unassigned (map it to $\ast$); standard switching lemmas show that a depth-two circuit simplifies dramatically with very high probability when it is hit with such a random restriction.  Such ``truly random'' restrictions are inherently incompatible with unconditional derandomization, which naturally motivates the notion of a \emph{pseudorandom} switching lemma.  Such a result defines a much smaller probability space of ``pseudorandom'' restrictions, and proves that a restriction drawn randomly from this space also has the effect of simplifying a depth-two circuit with high probability.   While pseudorandom switching lemmas have been the subject of much research since they were first introduced by Ajtai and Wigderson \cite{AjtaiWigderson:85,Ajt93,CR96,AAIPR01,GMRTV12,IMP12,GMR13,TX13,GW14}, and have been applied in a range of different ways in unconditional derandomization, they are not yet fully understood.

The designer of a pseudorandom switching lemma faces an inherent tension between achieving strong parameters---intuitively, having a depth-two circuit simplify as much as possible while keeping a large fraction of variables alive---and using as little randomness as possible.  Prior to the work of Trevisan and Xue \cite{TX13}, known pseudorandom switching lemmas fell short of achieving the parameters of H{\aa}stad's influential ``full randomness'' switching lemma~\cite{Hastad86}.  In particular, a parameter of central importance in essentially all applications of switching lemmas is the probability that a given coordinate $x_i$ remains alive under a random (or pseudorandom) restriction; this is often referred to as the ``$\ast$-probability'' and denoted by $p$.  
A crucial quantitative advantage of H{\aa}stad's switching lemma over previous works is that it can be applied even when $p$ is as large as $\Omega(1/\log n)$ for $\poly(n)$-size depth-two circuits---in contrast, the earlier works of \cite{ajtai1983,FSS:84,yao1985} required $p=n^{-\Omega(1)}$---and yields a very strong conclusion, namely that with high probability the restricted circuit collapses to a shallow decision tree\footnote{The first published version of the switching lemma with a decision tree conclusion is due to Cai~\cite{Cai86}; several authors subsequently noted that H{\aa}stad's argument also yields such a conclusion.}.  (For example, while the recent pseudorandom switching lemma of \cite{GMR13} is able to achieve a relatively large $p$, the conclusion of that switching lemma is that the restricted depth-two circuit can w.h.p.~be sandwiched by depth-two circuits with small bottom fan-in, which is weaker than the aforementioned decision tree conclusion.)  

Trevisan and Xue \cite{TX13} give a \emph{pseudorandom} switching lemma that is highly randomness efficient and yet achieves the parameters of H{\aa}stad's fully random switching lemma (i.e.~\cite{TX13} achieves the same simplification, collapsing to a shallow decision tree, that follows from~\cite{Hastad86}, with the same $\ast$-parameter $p$ as \cite{Hastad86}).  The key conceptual ingredient enabling this is a beautiful idea of ``fooling the proof'' of the H{\aa}stad's switching lemma, exploiting its ``computational simplicity."
Trevisan and Xue leverage their pseudorandom switching lemma to construct a new pseudorandom generator for $\acz$, obtaining the first improvement of Nisan's celebrated PRG~\cite{Nis91} in over two decades. We elaborate on Trevisan and Xue's ideas and how they obtain their PRG later in Section~\ref{sec:acz}.

\ignore{
BEGIN IGNORE

}

\vspace{-5pt}
\paragraph{Multi-switching lemmas.}  The switching lemma shows that any width-$k$ CNF formula collapses to a shallow decision tree with high probability under a random restriction.  Via a simple union bound it is of course possible to extend this result to say that a family of width-$k$ CNF formulas will all collapse to a shallow decision tree with high probability under a random restriction; but this naive approach leads to a quantitative loss in parameters if the argument is iterated, as it typically is, $d-1$ times to analyze a depth-$d$ circuit.  (The exact nature of this quantitative loss is important but somewhat subtle; see Section~\ref{sec:hastad-vs-hastad} for a detailed explanation.)

Via an ingenious extension of the ideas underlying the original switching lemma, H{\aa}stad \cite{Has14} developed ``multi-switching lemmas'' that essentially bypass this quantitative loss in parameters that results from iterating a naive union bound (see also the work of Impagliazzo, Matthews, and Paturi~\cite{IMP12}  for closely related results).  Roughly speaking,~\cite{Has14} shows that a \emph{family} of width-$k$ CNF formulas will with high probability have a shallow \emph{common partial decision tree}.  Without explaining this structure in detail here (again see Section~\ref{sec:hastad-vs-hastad} for a detailed explanation), this makes it possible to iterate the argument and tackle depth-$d$ circuits without incurring a quantitative loss in parameters.  The savings thus achieved is the key new ingredient that allowed \cite{IMP12,Has14} to achieve essentially optimal correlation bounds for $\acz$ against the  {\sc Parity}  function, capping off a long line of work~\cite{ajtai1983,yao1985,Hastad86,Cai86,babai1987,BIS12}.  These ideas have also been leveraged to achieve new algorithmic results such as better-than-brute-force satisfiability algorithms and distribution-free PAC learning algorithms for $\acz$ \cite{BIS12,IMP12,STitcs17}.

\vspace{-5pt}
\paragraph{A pseudorandom multi-switching lemma.}  A core technical contribution of this paper is to bring together these two lines of work, on pseudorandom switching lemmas and on multi-switching lemmas.  
Since the precise statement of our pseudorandom multi-switching lemma, Theorem~\ref{thm:derand-H14}, is somewhat involved we defer it to Section~\ref{sec:derand-H14} and here merely make some remarks about it. 
In the spirit of Trevisan and Xue's derandomization of the original switching lemma, to obtain Theorem~\ref{thm:derand-H14} we ``fool the proof'' of H{\aa}stad's multi-switching lemma~\cite{Has14}, exploiting its ``computational simplicity." 
 This enables us to achieve optimal parameters in the same sense as \cite{TX13}, namely, that it establishes the same dramatic simplification---now of the family $\mathscr{F}$ of depth-two circuits---as \cite{Has14}, and while only requiring the same $\ast$-probability $p$ as \cite{Has14}. Our pseudorandom switching lemma is highly efficient in its use of randomness;
this randomness efficiency is crucial in the constructions of our pseudorandom generators for $\acz$ circuits and sparse $\F_2$ polynomials using Theorem~\ref{thm:derand-H14}, which we now describe in the next section.

\ignore{
\rnote{The notes included  
\red{

\begin{itemize}

\ignore{
}
\item One cool aspect of~\cite{TX13}'s/our approach: it shows that \emph{any} distribution over restrictions that fools depth-$4$ $\acz$ satisfies a switching lemma. Understanding the most general conditions under which a distribution over $\{0,1,\ast\}^n$ satisfies an SL hold an important goal in complexity theory (e.g. this is the whole name of the game in Frege lower bounds.)
\end{itemize}
}

but I am not sure how to get more discussion of this into the flow of what's currently written.}
}

\section{PRGs for $\acz$ and sparse $\F_2$ polynomials} \label{sec:applic}

We employ our pseudorandom multi-switching lemma to give the best known pseudorandom generators for two canonical classes in unconditional derandomization: $\acz$ circuits and sparse $\F_2$ polynomials. 
As we describe in this section, our results bring the state of the art for unconditional derandomization of these classes into sharp alignment with the state of the art for computational hardness: improving on the seed lengths of either PRG would require breakthrough progress on longstanding and notorious circuit lower bounds.   In this sense, our results are in the same spirit as those of Imagliazzo, Meka, and Zuckerman~\cite{IMZ12}, which gave optimal (assuming current circuit lower bounds) pseudorandom generators for various classes of Boolean formulas and branching programs; however, our techniques are very different from those of~\cite{IMZ12}.

\subsection{PRGs for $\acz$ circuits} \label{sec:acz}

The class of small-depth Boolean circuits ($\acz$) is a class of central interest in unconditional derandomization, and has been the subject of  intensive research in this area over the past 30 years~\cite{AjtaiWigderson:85,LinialNisan:90,Nis91,NW94,LVW93,LV96,Kli01,Trevisan:04,Vio06,Baz09,Raz09,Bra10,KLW10,DETT10,Aaronson10,Aaronson10b,SimaZak10,LS11,FSUV12,GMRTV12,GMR13,TX13,GW14,Tal:15tightbounds,HS16}. 
This highly successful line of work on derandomizing $\acz$ has generated a wealth of ideas and techniques that have become mainstays in the field of pseudorandomness. A prominent example is Nisan's celebrated PRG for $\acz$ circuits~\cite{Nis91}, which introduced ideas that enriched the surprising connections between pseudorandomness and computational hardness~\cite{blumic82,Yao:82,NW94}.  The \emph{hardness-versus-randomness paradigm} asserts, qualitatively, that strong explicit PRGs exist if and only if strong explicit circuit lower bounds exist.  In the context of unconditional derandomization (the subject of this work), this strongly motivates the goal of constructing, for every circuit class $\mathscr{C}$, unconditional PRGs for $\mathscr{C}$ that are best possible given the current best lower bounds for $\mathscr{C}$. In other words, this is the goal of achieving a \emph{quantitatively optimal hardness to randomness conversion} for~$\mathscr{C}$, converting ``all the hardnesss" in our lower bounds for $\mathscr{C}$ into pseudorandomness for~$\mathscr{C}$.

For $\mathscr{C}$ being the class of $n$-variable size-$M$ depth-$d$ $\acz$ circuits this amounts to constructing PRGs with seed length $\log^{d-1}(Mn)\log(1/\eps)$: such seed length is best possible without improving longstanding $\acz$ lower bounds that date back to the 1980s~\cite{Hastad86}. (More precisely, it is well known, see e.g.~\cite{TX13}, that achieving seed length say $\log^{d-1.01}(Mn)\log(1/\eps)$ would yield $\exp(\omega(n^{1/(d-1)})$ size lower bounds against depth-$d$ $\acz$ circuits, which is a barrier that has stood for over 30 years even in the $d=3$ case.)  We give the first construction of a PRG that achieves this seed length up to an \emph{additive} absolute constant in the exponent of $\log(Mn)$:

\begin{theorem}[PRG for $\acz$ circuits] \label{thm:first}
For every $d\ge 2$, $M\in \N$ and $\eps > 0$, there is an $\eps$-PRG for the class of $n$-variable size-$M$ depth-$d$ circuits with seed length $\log^{d+O(1)}(Mn)\log(1/\eps)$. 
\end{theorem}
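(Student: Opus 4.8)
The plan is to instantiate the Ajtai--Wigderson iterated pseudorandom restriction framework, using our pseudorandom multi-switching lemma (Theorem~\ref{thm:derand-H14}) to perform the simplification at each stage. Fix a size-$M$ depth-$d$ circuit $C$ on $n$ variables and error $\eps$, and set $k := \Theta(\log(M/\eps))$ and $\ast$-probability $p := \Theta(1/k)$. After a preprocessing step that applies one pseudorandom restriction to bring the bottom fan-in of $C$ down to $O(\log(M/\eps))$, the generator runs $d-1$ \emph{switching rounds}: in round $i$ it draws a pseudorandom restriction $\bm{\rho}_i$ (on the currently alive coordinates, with $\ast$-probability $p$) from the short-seed distribution of Theorem~\ref{thm:derand-H14}, and composes it with the previous ones to get $\bm{\rho}_{\le i} := \bm{\rho}_1 \circ \cdots \circ \bm{\rho}_i$. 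The bottom two levels of $C\uhr{\bm{\rho}_{\le i-1}}$ form a family $\mathscr{F}$ of width-$k$ depth-two circuits (one per gate at the third level from the bottom); Theorem~\ref{thm:derand-H14} guarantees that, except with probability $\eps/(2d)$ over $\bm{\rho}_i$, the restricted family $\mathscr{F}\uhr{\bm{\rho}_i}$ admits a common partial decision tree of depth $O(k)$ at every leaf of which each circuit of $\mathscr{F}$ has collapsed to a decision tree of depth $O(k)$. We then absorb each such leaf-subtree into the gate directly above it --- writing a depth-$t$ decision tree as a width-$t$ CNF or as a width-$t$ DNF, whichever matches that gate's type --- which decrements the depth while keeping the bottom fan-in $O(k)$, and we query the common partial decision tree into a growing ``global'' decision tree. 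After $d-1$ rounds $C\uhr{\bm{\rho}_{\le d-1}}$ has become, on the surviving variables, a single decision tree of depth $D = O(dk) = O(d\log(M/\eps))$, and a final clean-up step fills in those variables from an $(\eps/2^{D+2})$-almost $D$-wise independent distribution, which fools every depth-$D$ decision tree with error $\eps/4$.

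For correctness I would run the standard hybrid argument over the $d-1$ rounds and the clean-up, using the ``fooling-the-proof'' mechanism built into Theorem~\ref{thm:derand-H14}. Going from round $1$ to round $d-1$, replace $\bm{\rho}_i$ by a truly random $p$-restriction one round at a time; conditioned on the earlier restrictions, the only event about $\bm{\rho}_i$ that affects the output is the ``bad'' event that the multi-switching conclusion fails for $\mathscr{F}$, which is computed by a small-size, small-depth circuit that the distribution of Theorem~\ref{thm:derand-H14} fools to within $\eps/(2d)$ --- and on the good event $C\uhr{\bm{\rho}_{\le i}}$ is the promised shallow object, so all later rounds (pseudorandom restrictions, which preserve shallow decision trees under $O(D)$-wise independence) and the clean-up behave as if $\bm{\rho}_i$ were uniform. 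Summing the $d-1$ hybrid errors and the clean-up error $\eps/4$ gives total error at most $\eps$. The point of using the \emph{multi}-switching lemma rather than a union bound over the gates feeding the bottom two levels is that its conclusion is about all of $\mathscr{F}$ simultaneously, so the $\ast$-probability $p$ need not shrink from round to round; this is precisely what lets the exponent of $\log(Mn)$ stay at $d+O(1)$ and what keeps the dependence on $\eps$ to a single multiplicative factor $\log(1/\eps)$ rather than $(\log(Mn/\eps))^{d+O(1)}$.

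For the seed length, the generator's seed is the concatenation of the preprocessing seed, the $d-1$ round seeds, and the clean-up seed. The clean-up contributes only $O(D\log n + \log(1/\eps)) = \poly(d)\cdot\poly(\log(M/\eps))\cdot\log n$; the dominant term is the $d-1$ invocations of Theorem~\ref{thm:derand-H14}, where round $i$ must fool the bad-event circuit of a family of width-$O(\log(M/\eps))$ depth-two circuits whose total size is $M$ blown up by the $2^{O(i\log(M/\eps))}$ leaves accumulated so far, with error $\eps/(2d)$. Plugging these parameters into the seed-length bound of Theorem~\ref{thm:derand-H14} and summing over the rounds produces the $\log^{d+O(1)}(Mn)$ factor, while the randomness efficiency of that theorem --- its seed depending on the error only through an additive/multiplicative $\log(1/\eps)$ --- is what keeps the overall bound at $\log^{d+O(1)}(Mn)\cdot\log(1/\eps)$.

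The step I expect to be the main obstacle is making the multi-round hybrid go through cleanly. One must check that after each round the partially-restricted, partially-merged object is still an honest depth-$(d-i)$ circuit of polynomially-blown-up size with bottom fan-in $O(\log(M/\eps))$, so that Theorem~\ref{thm:derand-H14} can be reapplied; that the bad event each round's generator must fool is genuinely in a class simple enough for that theorem, and that ``fooling the proof'' composes across rounds even though the circuit handled in round $i$ depends on the pseudorandom choices of rounds $1,\dots,i-1$; and that the accumulated global-decision-tree depth, which controls both the union-bound losses inside each round and the cost of the clean-up step, really does stay $O(d\log(M/\eps))$. Getting the interacting parameters --- the $\ast$-probability $p$, the per-round decision-tree depth $t$, the widths, and the per-round error budgets --- to fit together is delicate but routine once the framework is set up.
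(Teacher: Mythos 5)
There is a genuine gap, and it sits at the heart of your hybrid argument. Your generator outputs, as actual coordinates of the pseudorandom string, the \emph{values} that each restriction $\rho_i$ assigns to the coordinates it fixes, and you then claim that ``conditioned on the earlier restrictions, the only event about $\rho_i$ that affects the output is the bad event that the multi-switching conclusion fails.'' That is not true: the acceptance probability of the circuit depends directly on the bits $\rho_i$ places on the fixed coordinates, not merely on whether $\mathscr{F}\uhr\rho_i$ collapses. Theorem~\ref{thm:derand-H14} only guarantees that the distribution fools the small depth-$3$/depth-$4$ ``badness'' circuits arising from the proof of the switching lemma; it gives no control over the quantity $\Ex_{\text{later rounds}}[(\calC\uhr\rho_{\le i})(\cdot)]$ viewed as a function of the values fixed by $\rho_i$, which is essentially as complex as $\calC$ itself. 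Conditioned on the good event the pseudorandom values on the fixed coordinates can still systematically bias $\calC$, and nothing in the multi-switching lemma rules this out. A quick sanity check confirms something is wrong: your scheme uses only $d-1$ rounds, each with seed $\mathrm{poly}(d)\cdot\log^{O(1)}(Mn/\eps)$, so if the analysis were sound you would get seed length $\mathrm{poly}(d)\log^{O(1)}(Mn)\log(1/\eps)$ with no $d$ in the exponent---far below the $\log^{d-1}(Mn)$ barrier that, as discussed in Section~\ref{sec:acz}, would imply breakthrough circuit lower bounds.

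The paper's proof is structured precisely to avoid this issue. All $d-1$ depth-reduction steps (plus preprocessing) are composed \emph{inside a purely structural} pseudorandom simplification lemma (Lemma~\ref{lem:PSL-AC0}, built from Lemma~\ref{lem:PSL} and Proposition~\ref{prop:swap}), in which only the set of live coordinates is pseudorandom while the fixed coordinates are filled in \emph{truly at random}---so no fooling claim about the fixed values is ever needed there. The PRG itself then comes from the Ajtai--Wigderson recursion (Theorem~\ref{thm:AW}, after Proposition~\ref{prop:condition}): in each stage one pseudorandomly fills only the $pn$ \emph{live} coordinates using a PRG for the simple class (depth-$t$ decision trees, handled by exact $t$-wise independence), which is justified because for most uniform settings of the other coordinates the restricted circuit is simple; the remaining coordinates stay untouched and are handled in later stages. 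This forces $\Theta(p^{-1}\log n)=\Theta(\log^{d-1}(M)\log n)$ stages, and that multiplier is exactly where the $\log^{d+O(1)}(Mn)$ exponent comes from. A secondary point: you set $k=\Theta(\log(M/\eps))$ and $p=\Theta(1/k)$, reintroducing $\eps$-dependence into the restriction parameter; the paper instead takes $\ell=\log M$ and $p=\Omega(1/\log M)$ independent of $\eps$ (only the common partial decision tree depth $t$ depends on $\eps$), which is the whole point of using the multi-switching lemma and what yields the multiplicative $\log(1/\eps)$.
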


\subsubsection{Background and prior PRGs for $\acz$ circuits}  \label{sec:background-acz}

As noted above there has been a significant body of work on PRGs for $\acz$ circuits, spanning over 30 years. In this section we give a brief overview of the history and prior state-of-the-art for this touchstone problem in unconditional derandomization. 

\vspace{-5pt} 

\paragraph{Ajtai--Wigderson and Nisan.}
Ajtai and Wigderson, in their seminal work~\cite{AjtaiWigderson:85} pioneering the study of unconditional derandomization, constructed the first non-trivial PRG for $\acz$ circuits with an $n^{o(1)}$ seed length; we will discuss their techniques in detail later. \cite{AjtaiWigderson:85}'s seed length was improved significantly in the celebrated work of Nisan~\cite{Nis91}, using what is now known as the Nisan--Wigderson framework~\cite{NW94}, which provides a generic template for converting correlation bounds against a circuit class to PRGs for a closely related class (in the case of $\acz$ these two classes essentially coincide).  Via this approach Nisan showed how correlation bounds for $\acz$ against the {\sc Parity} function~\cite{Hastad86} yield a PRG with seed length  $\log^{2d+O(1)}(Mn/\eps)$. 

We remark that the generality of the Nisan--Wigderson framework comes at a quantitative price: it is straightforward to verify that a seed length of $(\log^{d}(Mn) + \log(1/\eps))^2$ is the best that can be achieved via this framework given current $\acz$ circuit lower bounds (see e.g.~\cite{TX13,HS16}). This is roughly quadratically worse than the sought-for $\log^{d-1}(Mn)\log(1/\eps)$, the best that can be achieved assuming \emph{only} current $\acz$ circuit lower bounds.

\vspace{-5pt}
\paragraph{Bounded independence fools $\acz$.} 
Nisan's seed length for $\acz$ circuits stood unmatched for more than two decades. 
However, in this interim period there was significant progress on showing that distributions with bounded independence fool $\acz$, a well-known conjecture posed by Linial and Nisan \cite{LinialNisan:90}.  Braverman's breakthrough result~\cite{Bra10} showed that $\polylog(n)$-wise independence fools $\acz$, which (along with standard constructions of $k$-wise independent distributions) gave a PRG with seed length $\log^{O(d^2)}(Mn/\eps)$; this was subsequently sharpened to $\log^{3d+O(1)}(Mn/\eps)$ by Tal~\cite{Tal:15tightbounds}.  Recently, Harsha and Srinivasan~\cite{HS16} further improved the seed length of Braverman's generator to $\log^{3d+O(1)}(Mn) \log(1/\eps)$, which is notable for its optimal dependence on the error parameter~$\eps$.

\vspace{-5pt} 
\paragraph{The work of Trevisan and Xue.} Recent work of Trevisan and Xue~\cite{TX13} makes a significant advance towards achieving seed length $\log^{d-1}(Mn)\log(1/\eps)$: their work circumvents the ``quadratic loss" associated with the Nisan--Wigderson framework with a PRG of seed length $\log^{d+O(1)}(Mn/\eps)$.  This is the first PRG to achieve a $\log^{d+O(1)}(Mn)$ dependence, an exponent that is within an \emph{additive} absolute constant of the sought-for $\log^{d-1}(Mn)$, and is also the first strict improvement on Nisan's seed length in more than two decades. (Note however, that like Nisan's PRG the dependence on $\eps$ is suboptimal: $\log^{d+O(1)}(1/\eps)$ instead of $\log(1/\eps)$.)

Rather than going through the Nisan--Wigderson framework---which, as noted above, carries with it an associated quantitative loss in parameters---Trevisan and Xue construct their PRG by \emph{derandomizing the proof} of $\acz$ lower bounds, ``opening up the black-box" of $\acz$ lower bounds, so to speak.  At a high level,~\cite{TX13} adopts the strategy employed in the early work of Ajtai and Wigderson~\cite{AjtaiWigderson:85}. We describe this strategy in detail in Section~\ref{sec:AW}, but roughly speaking, Ajtai and Wigderson introduced a powerful and generic framework for constructing PRGs from pseudorandom switching lemmas. In~\cite{AjtaiWigderson:85}, they instantiated this framework with a derandomization of Ajtai's switching lemma~\cite{ajtai1983}---which underlies his proof of the first superpolynomial lower bounds against $\acz$---to obtain the first non-trivial PRG for $\acz$.  Trevisan and Xue obtain their PRG by revisiting this early framework of~\cite{AjtaiWigderson:85}, instantiating it with their derandomization of H{\aa}stad's switching lemma~\cite{Hastad86}. (And as we will soon discuss, in this work we obtain our PRG by instantiating the~\cite{AjtaiWigderson:85} framework with our derandomization of the~\cite{Has14} multi-switching lemmas.)

\subsubsection{Our PRG and approach} 
To summarize, prior to our work there were two incomparable best known PRGs for $\acz$: the PRG of Trevisan and Xue~\cite{TX13}, which has seed length $\log^{d+O(1)}(Mn/\eps)$, and Harsha and Srinivasan's improvement of Braverman's generator~\cite{HS16}, which has seed length $\log^{3d+O(1)}(Mn)\log(1/\eps)$.

Theorem~\ref{thm:first} unifies and improves these incomparable seed lengths.  Our PRG  achieves an essentially optimal hardness to randomness conversion for $\acz$: our seed length of $\log^{d+O(1)}(Mn) \log(1/\eps)$ comes very close to $\log^{d-1}(Mn)\log(1/\eps)$, which is best possible without improving longstanding $\acz$ circuit lower bounds that date back to the 1980s.  (We reiterate that any PRG obtained within the Nisan--Wigderson framework must have seed length at least $(\log^{d}(Mn) + \log(1/\eps))^2$ given the current state of circuit lower bounds.)

Table~\ref{table:ac0} provides a comparison of the seed length of our PRG (and the techniques that underlie our construction) and those of previous work.

\begin{table}[h]
\renewcommand{\arraystretch}{1.6}
\centering
\begin{tabular}{|m{7em}|m{11em}|m{21em}|}
\hline
 Reference &  Seed length & Techniques \\ \hline
\cite{AjtaiWigderson:85} & $n^{o(1)}$ for $M=\poly(n)$  & derandomize~\cite{ajtai1983} switching lemma \\ \hline
\cite{Nis91} &  $\log^{2d+O(1)}(Mn/\eps)$ & \cite{NW94} framework, \cite{Hastad86} correlation bounds \\ \hline
\cite{Bra10} & $\log^{O(d^2)}(Mn/\eps)$ & bounded independence\\ \hline 
\cite{TX13} & $\log^{d+O(1)}(Mn/\eps)$ & \cite{AjtaiWigderson:85} framework, derandomize~\cite{Hastad86} switching lemma\\ \hline  
\cite{Tal:15tightbounds} & $\log^{3d+O(1)}(Mn/\eps)$ & bounded independence \\ \hline
\cite{HS16} & $\log^{3d+O(1)}(Mn)  \log(1/\eps)$ & bounded independence \\ \hline \hline
{\bf This work} & $\log^{d+O(1)}(Mn) \log(1/\eps)$ & \cite{AjtaiWigderson:85} framework, derandomize~\cite{Has14} multi-switching lemma,
bounded independence\\ \hline
 \end{tabular}
 \caption{PRGs for $\eps$-fooling $n$-variable size-$M$ depth-$d$ $\acz$ circuits.}
 \label{table:ac0}
 \end{table}


\pparagraph{Our approach.}  Our approach draws on and unifies ideas in the works of~\cite{AjtaiWigderson:85,TX13,HS16} discussed above, which we use in conjunction with our derandomization of the~\cite{Has14} multi-switching lemma to obtain our PRG.  

At a high level, we adopt the overall conceptual strategy of Ajtai and Wigderson~\cite{AjtaiWigderson:85} and Trevisan and Xue~\cite{TX13}, and obtain our PRG by derandomizing the proof of $\acz$ lower bounds.  The key technical ingredient in our PRG construction is our pseudorandom multi-switching lemma, a derandomization of the multi-switching lemmas which underlie the~\cite{IMP12,Has14} optimal correlation bounds for $\acz$ against {\sc Parity}.  Our pseudorandom multi-switching lemma improves both the pseudorandom switching lemma of~\cite{TX13} (a derandomization of H{\aa}stad's switching lemma~\cite{Hastad86} which underlies his exponential lower bounds against $\acz$) and the pseudorandom switching lemma of~\cite{AjtaiWigderson:85} (a derandomization of Ajtai's switching lemma~\cite{ajtai1983} which underlies his superpolynomial lower bounds against $\acz$).

Our derandomization of the~\cite{Has14} multi-switching lemma is largely influenced by Trevisan and Xue's derandomization of the H{\aa}stad's original switching lemma~\cite{Hastad86}.  We describe our approach in detail in Section~\ref{sec:derand-H14}, but highlight here the simple but ingenious new idea underlying~\cite{TX13}'s argument. Very roughly speaking, they derandomize the~\cite{Hastad86} switching lemma by ``fooling its proof'':  showing that H{\aa}stad's proof of his switching lemma ``cannot $\delta$-distinguish" between truly random restrictions and pseudorandom restrictions drawn from $\polylog(n)$-wise independent distributions.  Since H{\aa}stad's switching lemma holds for truly random restrictions, it thus follows that it also holds for pseudorandom restrictions drawn from $\polylog(n)$-wise independent distributions (up to a $\delta$ additive loss in the failure probability). 

To accomplish this, Trevisan and Xue exploit the fact that H{\aa}stad's proof of the switching lemma is ``computationally simple": for a fixed $k$-CNF $F$, there is a small depth-$3$ circuit that takes as input an encoding of a restriction $\rho$, and outputs $1$ iff $\rho$ is a bad restriction for the desired conclusion of H{\aa}stad's switching lemma, contributing to its failure probability (more precisely, the failure event is that the ``canonical decision tree" for $F \uhr \rho$ has large depth).  In similar spirit, our derandomization of the~\cite{Has14} multi-switching lemma also exploits the ``computational simplicity" of their proofs. In our case, for a fixed family $\mathscr{F}$ of $k$-CNF formulas we construct a small depth-$4$ circuit for recognizing bad restrictions (the one additional layer of depth reflects the fact that multi-switching lemmas are, roughly speaking, ``one quantifier more complex" than switching lemmas). To obtain optimal parameters in our PRG constructions, we use the $d=3$ case of Harsha and Srinivasan's strengthening of Braverman's generator~\cite{HS16} to fool this depth-$4$ circuit, and hence show that~\cite{Has14}'s proofs of the multi-switching lemmas ``cannot distinguish" between truly random and pseudorandom restrictions. The fact that~\cite{HS16} achieves an optimal $\log(1/\eps)$ seed length dependence plays a crucial role in the optimal $\log(1/\eps)$ seed length dependence of our PRG.

\subsection{PRGs for sparse $\F_2$ polynomials} \label{sec:F2}

Our second main result deals with the class of sparse $\F_2$ polynomials.  Like $\acz$ circuits, sparse $\F_2$ polynomials and low-degree $\F_2$ polynomials have been extensively studied in unconditional derandomization~\cite{NN93,AGHP92,LVW93,Bog05,Vio06,Lov09,Vio09,BV10,LS11,Lu12}.   

Via the hardness-versus-randomness paradigm, the problem of derandomizing $\F_2$ polynomials is intimately related to that of proving correlation bounds for $\F_2$ polynomials.  A prominent open problem in the latter context---arguably the current flagship challenge in this area---is that of obtaining superpolynomially small correlation bounds against $\F_2$ polynomials of degree $\log n$.  Degree $\log n$ represents the fundamental limit of our current suite of powerful techniques for proving $\F_2$ correlation bounds~\cite{BNS92,Bou05,Cha07,VW07}, and breaking this ``degree $\log n$ barrier" would constitute a significant technical breakthrough\footnote{Breaking this ``degree $\log n$ barrier" is also well-known (via a simple and beautiful observation of H{\aa}stad and Goldmann~\cite{HG91}) to be a prerequisite for breaking the notorious ``$\log n$  party barrier" in multi-party communication complexity~\cite{BNS92}, a longstanding open problem that has resisted attack for over two decades.}.  See Open Question 1 of Viola's excellent survey~\cite{Viola09now} for a detailed discussion of this important open problem and its relationship with other central challenges in complexity theory.

As a second application of our pseudorandom multi-switching lemma, we give an $\eps$-PRG for $S$-sparse $\F_2$ polynomials with seed length $2^{O(\sqrt{\log S})}\log(1/\eps)$, which is best possible without breaking the aforementioned ``degree $\log n$ barrier" for $\F_2$ correlation bounds: 


\begin{theorem}[PRG for sparse $\F_2$ polynomials] \label{thm:second}
For every $S =  2^{\omega(\log \log n)^2}$ and $\eps > 0$ there is a PRG with seed length $2^{O(\sqrt{\log S})}\log(1/\eps)$ that $\eps$-fools the class of $n$-variable $S$-sparse $\F_2$ polynomials. 
\end{theorem}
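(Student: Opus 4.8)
The plan is to follow the Ajtai--Wigderson template that underlies Theorem~\ref{thm:first}: apply a pseudorandom restriction to collapse an $S$-sparse $\F_2$ polynomial $p$ to one of dramatically smaller $\F_2$-degree, then fool the low-degree residual polynomial directly. The sweet spot is $\F_2$-degree $D = \Theta(\sqrt{\log S})$: it is small enough that a degree-$D$ polynomial is $\eps$-fooled with seed $2^{O(\sqrt{\log S})}\log(1/\eps)$ by Viola's generator~\cite{Vio09} (the XOR of $D$ independent $\eps_0$-biased distributions, which has seed $O(D\log n + D 2^{D}\log(1/\eps))$), yet large enough that the restriction collapsing $p$ to degree $D$ can be taken to have $\ast$-probability $q = 2^{-\Theta(\sqrt{\log S})}$, a value for which a randomness-efficient pseudorandom analogue exists. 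So the three ingredients are Viola's degree-$d$ polynomial generator, the pseudorandom multi-switching lemma (Theorem~\ref{thm:derand-H14}), and the $\acz$ generator of Theorem~\ref{thm:first}.

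For the degree-reduction step I would write $p = \bigoplus_{j=1}^{S} m_j$ as a parity of $S$ monomials and regard the $m_j$ as a family $\mathscr{F}$ of conjunctions. Monomials of degree much larger than $\log(S/\eps)$ are easy: such a monomial is killed outright (some variable is set to $0$) under a $q$-restriction except with probability $\le \eps/S$, so a union bound kills all of them except with probability $\eps$, and ``some long monomial survives'' is a size-$\le S$, width-$O(\log(S/\eps))$ DNF in the encoding of $\rho$. For the remaining monomials, now of degree $\le\log(S/\eps)$, I would apply Theorem~\ref{thm:derand-H14} to $\mathscr{F}$ with $\ast$-probability $q$ and switching parameters $t,\ell = \Theta(\sqrt{\log S})$: for a pseudorandom $\rho$ from a small probability space, with probability $\ge 1-\eps$ the family $\mathscr{F} \uhr \rho$ has a common $\ell$-partial decision tree of depth $\le t$, i.e.\ there is a depth-$\le t$ decision tree $T$ over the surviving variables such that below each leaf of $T$ every $m_j \uhr \rho$ is a function of at most $\ell$ variables, hence an $\F_2$-polynomial of degree $\le\ell$. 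Consequently, below each leaf of $T$ the polynomial $p \uhr \rho = \bigoplus_j (m_j \uhr \rho)$ is a parity of degree-$\le\ell$ polynomials and so has degree $\le\ell$; and since $T$ has depth $\le t$, the whole function $p \uhr \rho$ has $\F_2$-degree at most $t+\ell = \Theta(\sqrt{\log S}) = D$.

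The PRG $G$ would sample a pseudorandom restriction $\rho$ and then fill the surviving variables with an independent draw from Viola's degree-$D$ generator. The restriction $\rho$ is drawn so that (a) its fixed coordinates are near-uniform, and (b) it fools, with error $\eps$, the small constant-depth circuits that recognize the ``bad'' restrictions --- the circuit for ``$p \uhr \rho$ has $\F_2$-degree $> D$'' (depth four when Theorem~\ref{thm:derand-H14} is invoked as a black box; depth two or three for the direct union-bound argument) and the DNF for ``some long monomial of $p$ survives'' --- concretely, its ``alive set'' and fixed bits are produced by the $\acz$ generator of Theorem~\ref{thm:first} applied to circuits of depth $O(1)$ and size $\poly(S,n)$. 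The Ajtai--Wigderson analysis then gives $\bigl|\E_{x\sim G}[(-1)^{p(x)}] - \E_x[(-1)^{p(x)}]\bigr| = O(\eps)$: except with probability $O(\eps)$ over the pseudorandom $\rho$, all long monomials of $p$ have died and $p \uhr \rho$ has $\F_2$-degree $\le D$, so the fill by Viola's generator $\eps$-fools it; and averaging over $\rho$, whose fixed coordinates are near-uniform, recovers $\E_x[(-1)^{p(x)}]$. Summing seed lengths, Theorem~\ref{thm:first} contributes $\log^{O(1)}(Sn)\log(1/\eps)$, Viola's generator contributes $O(\sqrt{\log S}\cdot\log n) + 2^{O(\sqrt{\log S})}\log(1/\eps)$, and the remaining randomness in $\rho$ contributes $\log^{O(1)}(S)\log(1/\eps)$; since $S = 2^{\omega((\log\log n)^2)}$ forces $\sqrt{\log S} = \omega(\log\log n)$, we get $\log^{O(1)}(Sn) \le 2^{o(\sqrt{\log S})}$ and $\log n \le 2^{o(\sqrt{\log S})}$, so the total collapses to $2^{O(\sqrt{\log S})}\log(1/\eps)$.

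The main obstacle will be the simultaneous optimization of the coupled parameters $q$, $t$, $\ell$ so that all of the following hold: (1) the multi-switching failure probability, which for a family of $S$ conjunctions behaves like $S\cdot(Cq)^{\Theta(t+\ell)}$, is at most $\eps$; (2) the residual degree $t+\ell$ is $O(\sqrt{\log S})$, so that the $2^{t+\ell}$ blow-up inside Viola's generator stays $2^{O(\sqrt{\log S})}$; and (3) $q$ is at least $1/\poly(n)$ and, more delicately, large enough that the ``proof circuit'' fooled by Theorem~\ref{thm:first} has size $\poly(S,n)$ rather than something super-polynomial. Tuning the constants --- and, in borderline parameter regimes, iterating the restriction a bounded number of times, where the non-compounding nature of the multi-switching lemma keeps the losses in check --- so that all three hold at once is the delicate point. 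Obtaining an exactly linear $\log(1/\eps)$ dependence in the seed length (rather than, say, $\polylog(1/\eps)$) rests, exactly as in the $\acz$ case, on the optimal $\log(1/\eps)$ seed-length dependence of the Harsha--Srinivasan generator~\cite{HS16} used inside Theorem~\ref{thm:first}, together with the linear $\log(1/\eps)$-dependence of Viola's generator at each fixed degree.
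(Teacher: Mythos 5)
Your high-level ingredients are the right ones, but as written the construction has two genuine gaps. First, the single-restriction architecture does not work. With $\ast$-probability $q=2^{-\Theta(\sqrt{\log S})}$ the restriction fixes a $1-q$ fraction, i.e.\ almost all, of the $n$ coordinates, and the only guarantee you have about the pseudorandom fixed bits is that (together with the star pattern) they fool the constant-depth ``bad restriction'' circuits of Theorem~\ref{thm:derand-H14}; nothing says they fool $P$ itself. So the step ``averaging over $\rho$, whose fixed coordinates are near-uniform, recovers $\mathbb{E}_x[(-1)^{P(x)}]$'' is precisely the unproven step --- marginal near-uniformity of the fixed coordinates is far from sufficient, and asking the fixed bits to preserve the bias of $P$ is essentially asking for the PRG you are trying to build. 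The paper avoids this by running the full Ajtai--Wigderson recursion (Theorem~\ref{thm:AW}): in each stage only the \emph{live} $p$-fraction of coordinates is fixed, by the PRG for the simple class, while the non-live coordinates are left \emph{truly uniform} (this is exactly why Lemma~\ref{lem:PSL} and Proposition~\ref{prop:swap} are formulated with non-stars filled fully at random), and those uniform bits are replaced recursively over roughly $p^{-1}\ln n = 2^{\Theta(\sqrt{\log S})}\ln n$ stages --- not ``a bounded number of times'' --- with the per-stage errors and seeds summing to the final bound.

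Second, the parameter regime you hope for in your obstacle (1)--(2) is unattainable, and folding the output of the multi-switching lemma into a single low-degree polynomial kills the $\log(1/\eps)$ dependence. With $t,\ell=\Theta(\sqrt{\log S})$, $k=1$ and $p=2^{-\Theta(\sqrt{\log S})}$, the failure bound $16^{t+\ell}S^{\lceil t/\ell\rceil}(32pk)^t$ is only $S^{-\Theta(1)}$, whereas the AW analysis needs per-stage failure about $\eps p/(2\ln n)$; hence the tree depth must grow like $\log(1/\eps)$, as in Lemma~\ref{lem:PSL-F2} where $t=O(\sqrt{\log S})+\log(2/\eps_2)$. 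Once $t$ depends on $\eps$, treating $P\uhr\rho$ as one polynomial of degree $t+\ell$ and invoking Viola costs seed $\Omega(2^{t+\ell}\log(1/\eps))\ge\poly(1/\eps)$. The missing ingredient is the paper's Lemma~\ref{lem:fool-simple}: keep the $(t,\F_2(\text{degree }\sqrt{\log S}))$-decision tree structure, fool the depth-$t$ tree with $t$-wise independence (cost only $O(t\log n)$, linear in $t$), and apply Viola only to the $\eps$-independent degree-$\sqrt{\log S}$ leaf polynomials, with the small-bias parameter strengthened by $4^{-t}$ so that it survives conditioning on a length-$t$ branch (Fact~\ref{fact:conditioning}, plus the mixture/reorientation facts). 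This structural separation, not constant tuning, is what yields $2^{O(\sqrt{\log S})}\log(1/\eps)$. Two smaller points: the long-monomial preprocessing is unnecessary (the paper views each monomial as a $Q\le n$-clause width-$1$ CNF, so $k=1$), and the proof circuits to be fooled have size $S\,n^{O(\sqrt{\log S})}$, not $\poly(S,n)$, with the required error being $\delta/(S^{\lceil t/\ell\rceil}n^{O(t)})$; the paper fools them directly with the $d=3$ case of the Harsha--Srinivasan generator (Theorem~\ref{thm:HS16}) rather than with Theorem~\ref{thm:first}, which is both lighter and avoids any appearance of circularity.
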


\paragraph{Background and prior PRGs for $\F_2$ polynomials.}

The first unconditional PRGs for $\F_2$ polynomials were given in early influential work of Luby, Veli{\v{c}}kovi{\'c}, and Wigderson~\cite{LVW93}, who constructed a PRG that $\eps$-fools size-$S$ $\SYM \circ \AND$ circuits---including $S$-sparse $\F_2$ polynomials as an important special case---with seed length $2^{O(\sqrt{\log (S/\eps)})}$.  To obtain their PRG, Luby et al.~employed the Nisan--Wigderson framework~\cite{NW94} together with multi-party number-on-the-forehead (NOF) communication complexity lower bounds from the seminal work of Babai, Nisan, and Szegedy~\cite{BNS92}. Viola \cite{Vio06} subsequently extended this $2^{O(\sqrt{\log (S/\eps)})}$ seed length to the broader class of $\SYM \circ \acz$ circuits with a more modular proof.  

In a related line of work, PRGs for \emph{low-degree} $\F_2$ polynomials have also been intensively studied.  Starting with the fundamental results of Naor and Naor~\cite{NN93} on $\eps$-biased distributions (which resolved the degree-$1$ case), this research continued through an exciting line of work on the degree $k \ge 2$ case \cite{Bog05,BV10} and culminated in the breakthroughs of Lovett~\cite{Lov09} and Viola~\cite{Vio09} which are described in more detail below.\ignore{ for fixed $d$, give seed length $O(\log n + \log(1/\eps)).$}  It is interesting to note that prior to our work, the underlying techniques used for the sparse case (multi-party communication complexity) are completely different from the techniques used for the low-degree case (Fourier analysis).

\vspace{-5pt} 
\paragraph{Our PRG and approach.} Theorem~\ref{thm:second} gives an exponential and optimal improvement of the PRG of~\cite{LVW93} in terms of its dependence on the error parameter $\eps$. Our PRG  achieves an optimal hardness to randomness conversion for $\F_2$ polynomials: since every $\log(n)$-degree $\F_2$ polynomial has at most $n^{\log n}$ monomials, it can be shown (using the simple Proposition 3.1 of \cite{Vio09}) that a PRG with seed length $2^{o(\sqrt{\log S})}\log(1/\eps)$ would break the degree $\log n$ barrier. (Similar to the situation for $\acz$ circuits, it is straightforward to verify that our optimal $\log(1/\eps)$ dependence is not achievable via the Nisan--Wigderson framework without dramatic breakthroughs in correlation bounds for $\F_2$ polynomials, going well beyond breaking the degree $\log n$ barrier.) 

Our approach to obtaining Theorem~\ref{thm:second} bridges the two previously disparate lines of work on pseudorandomness for sparse and low degree polynomials: roughly speaking, it can be viewed as a reduction from PRGs for $S$-sparse polynomials to PRGs for degree-$\sqrt{\log S}$ polynomials. This allows us to leverage the result of Viola~\cite{Vio09} (building on the work of Lovett~\cite{Lov09}), which gives PRGs for $n$-variable degree-$k$ $\F_2$ polynomials with seed length 
\[ O(k\log n + k2^{k}\log(1/\eps)). \] 
More precisely, at the heart of our reduction is a new pseudorandom switching lemma for sparse $\F_2$ polynomials, showing that such a polynomial is very likely to collapse to a \emph{small-depth decision tree with low-degree $\F_2$ polynomials at its leaves} under a suitable pseudorandom restriction.  This is essentially a special case of our pseudorandom multi-switching lemma.  With this reduction in hand, we then exploit the strength and generality of Viola's result---roughly speaking, that the sum of $k$ independent copies of a sufficiently strong $\eps$-biased distribution fools degree-$k$ polynomials---to show that his PRG extends to fool not only low-degree polynomials, but also small-depth decision trees with low-degree polynomials at their leaves. 

Table~\ref{table:F2} provides a comparison of the seed length of our PRG (and the techniques that underlie our construction) and those of previous work.

\begin{table}[h]
\renewcommand{\arraystretch}{1.6}
\centering
\begin{tabular}{|m{6em}|m{11.5em}|m{21em}|}
\hline
 Reference/ Class&  Seed length & Techniques \\ \hline
\cite{LVW93} $\text{~~~~~}$ $S$ sparse & $2^{O(\sqrt{\log(S/\eps)})}$ & \cite{NW94} framework, \cite{BNS92} multi-party NOF communication complexity\\ \hline
\cite{Lov09} $\text{~~~~~~~}$ degree $k$ & $O(2^k \log n + 4^{k} \log(1/\eps))$ & Fourier analysis \\ \hline 
\cite{Vio09} $\text{~~~~~}$ degree $k$ & $O(k\log n + k2^{k}\log(1/\eps))$  & Fourier analysis \\ \hline  \hline
{\bf This work} & $2^{O(\sqrt{\log S})}  \log(1/\eps)$ & \cite{AjtaiWigderson:85} framework, derandomize~\cite{Has14} multi-switching lemma,
Fourier analysis, bounded independence\\ \hline
 \end{tabular}
 \caption{PRGs for $\eps$-fooling $\F_2$ polynomials.}
 \label{table:F2}
 \end{table}



\subsection{Organization}
Section \ref{sec:prelim} recalls some basic preliminaries from unconditional pseudorandomness.
We describe and contrast the original H{\aa}stad switching lemma~\cite{Hastad86} versus the \cite{Has14} multi-switching lemma in Section~\ref{sec:hastad-vs-hastad}.
Section~\ref{sec:ccdt} establishes some infrastructure towards derandomizing the \cite{Has14} switching lemma, and the actual derandomization is carried out in Section~\ref{sec:derand-H14}, culminating in the proof of Theorem~\ref{thm:derand-H14}.
Section~\ref{sec:AW} describes a general framework for constructing pseudorandom generators that is implicit in the work of Ajtai and Wigderson~\cite{AjtaiWigderson:85}; a crucial ingredient in this framework for constructing a pseudorandom generator for a class $\mathscr{C}$ is a ``pseudorandom simplification lemma'' for $\mathscr{C}.$  In Section~\ref{sec:prsl} we apply our derandomized multi-switching lemma from Section~\ref{sec:derand-H14} to obtain the required pseudorandom simplification lemmas for $\acz$ circuits and for sparse $\F_2$ polynomials.  Finally, Section~\ref{sec:puttogether} puts the pieces together and establishes the PRGs for $\acz$ and for sparse $\F_2$ polynomials that are our main PRG results.

\subsection{Preliminaries} \label{sec:prelim}


For $r < n$, we say that a distribution $\calD$ over $\zo^n$ can be \emph{sampled efficiently with $r$ random bits} if (i) $\calD$ is the uniform distribution over a multiset $z^{(1)},\dots,z^{(s)}$ of strings from $\zo^n$  where $s \in [{\frac 1 {\poly(n)}} \cdot 2^r, 2^r]$ and (ii) there is a deterministic algorithm $\mathrm{Gen}_{\calD}$ which, given as input a uniform random element of $[s]$, runs in time $\poly(n,s)$ and outputs a string drawn from $\calD$.

%
%

For $\delta>0$ and a class $\mathscr{C}$ of functions from $\zo^n$ to $\zo$, we say that a distribution $\calD$ over $\zo^n$ \emph{$\delta$-fools $\mathscr{C}$ with seed length $r$} if (a) $\calD$ can be sampled efficiently with $r$ random bits via algorithm $\mathrm{Gen}_\calD$, and (b) for every function $f \in \mathscr{C}$, we have
\[
\bigg|\Ex_{\bs \leftarrow \{0,1\}^r}\big[f(\mathrm{Gen}_{\calD}(\bs))\big] - 
\Ex_{\bx \leftarrow \{0,1\}^n}\big[f(\bx)\big]\bigg| \leq \delta.
\]
Equivalently, we say that $\mathrm{Gen}_\calD$ is a \emph{$\delta$-PRG for $\mathscr{C}$ with seed length $r$.}

Two kinds of distributions which are extremely useful in derandomization are \emph{$\delta$-biased} and \emph{$k$-wise independent} distributions.
We say that a distribution $\calD$ over $\zo^n$ is \emph{$\delta$-biased} if it $\delta$-fools the class 
of all $2^n$ parity functions $\{\text{{\sc Parity}}_S\}_{S \subseteq [n]}$, where $\text{{\sc Parity}}_S: \zo^n \to \zo$ is defined
by $\text{{\sc Parity}}_S(x) = \sum_{i \in S} x_i \mod 2.$
We say that a distribution $\calD$ over $\zo^n$ is \emph{$k$-wise independent with parameter $p$} if for every $1 \leq i_1 < \cdots < i_k \leq n$ and every $(b_1,\dots,b_k) \in \zo^k$, we have
\[
\Prx_{\bx \leftarrow \calD}\big[\bx_{i_1} = b_1 \text{~and~} \cdots \text{~and~}\bx_{i_k} = b_k\big] = p^{\sum_{j=1}^k b_j}  \cdot
(1-p)^{k - \sum_{j=1}^k b_j},
\]
i.e. every subset of $k$ coordinates is distributed identically to a product distribution with parameter~$p$.

A \emph{restriction} $\rho$ of variables $x_1,\dots,x_n$ is an element of $\{0,1,\ast\}^n$. We write $\supp(\rho)$ to denote the set of coordinates that are fixed to 0 or 1 by $\rho$.
Given a function $f(x_1,\dots,x_n)$ and a restriction $\rho$, we write $f \uhr \rho$ to denote the function obtained by fixing $x_i$ to $\rho(i)$ if $\rho(i) \in \{0,1\}$ and leaving $x_i$ unset if $\rho(i)=\ast.$
 For two restrictions $\rho,\rho'\in \{0,1,\ast\}^{n}$, their \emph{composition}, denoted $\rho\rho'\in \{0,1,\ast\}^{n}$, is the restriction defined by
\[ (\rho\rho')_i = \left\{
\begin{array}{cl}
\rho_i & \text{if $\rho_i \in \{0,1\}$} \\
\rho'_i & \text{otherwise.}
\end{array}
\right.\]
Given a collection $\mathscr{F} = \{f_1,\ldots,f_M\}$ of functions and a restriction $\rho$ we write $\mathscr{F} \uhr \rho$ to denote the family $\{f_1 \uhr \rho,\ldots,f_M\uhr\rho\}$. 

Given an $\acz$ circuit, we define its size to include the input variables (along with the number of gates in the circuit). We adopt this convention for notational convenience, since we may then always assume that the size $M$ of an $n$-variable circuit is always at least $n$. (We do \emph{not} adopt this convention for $\F_2$ polynomials: as is standard, we define the sparsity of an $\F_2$ polynomial to be the number of monomials in its support.)

Finally, if $g$ is a Boolean function and $\mathscr{C}$ is a class of circuits, we say that
$g$ is \emph{computed by a $(t,\mathscr{C})$-decision tree} if $g$ is computed by a decision tree of depth $t$ (with single Boolean 
variables $x_i$ at internal nodes as usual) in which each leaf is labeled by a function from $\mathscr{C}.$


\section{Multi-switching lemmas} \label{sec:hastad-vs-hastad}

At the heart of almost all applications of H{\aa}stad's original switching lemma~\cite{Hastad86} is a powerful structural fact about $\acz$ circuits: every $\acz$ circuit ``collapses" (i.e.~simplifies dramatically) to a depth-$t$ decision tree with high probability, at least $1-\eps$, under a random restriction that randomly fixes a $(1-p)$-fraction of coordinates.  In the precise quantitative statement of this fact, both $t$ and $p$ depend on~$\eps$: as the desired failure probability $\eps$ tends to $0$, the $\ast$-probability $p$ tends to $0$ (more coordinates are fixed) and $t$ tends to $n$ (the resulting decision tree is of larger depth). It is easy to see that this dependence is inherent given the statement of the~\cite{Hastad86} switching lemma, and indeed this will be clear from the discussion later in this section.

The recent multi-switching lemma of H{\aa}stad~\cite{Has14} (see also~\cite{IMP12}) achieves a remarkable strengthening of the above: essentially the same structural fact about $\acz$ holds (in terms of the quantitative relation between the decision tree depth $t$ and the failure probability $\eps$) \emph{with the $\ast$-probability $p$ being independent of $\eps$}. This is the key qualitative difference underlying the optimal $\acz$ correlation bounds for {\sc Parity} obtained in~\cite{IMP12,Has14}; likewise, in this work, this is the key qualitative difference underlying the optimal $\eps$-dependence in the seed lengths of our PRGs for $\acz$ circuits and sparse $\F_2$ polynomials.\medskip

Let $\calR_p$ denote the random restriction which independently sets each variable $x_i$ to $0$ with probability
$(1-p)/2$, to 1 with probability $(1-p)/2$, and to $\ast$ with probability $p.$  We first recall the original switching lemma
from \cite{Hastad86}:

\begin{theorem}[H{\aa}stad's switching lemma]
\label{thm:HSL}
Let $F$ be a $k$-CNF. Then for all $t \ge 1$, we have that 
\[ \Prx_{\brho\leftarrow\calR_p}[F\uhr\brho \text{~does not have a decision tree of depth~}t\, ] \le (5pk)^t. \] 
\end{theorem}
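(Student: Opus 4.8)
My plan is to run the classical ``encoding'' (adversary) argument. First I would reduce to the canonical decision tree: given a $k$-CNF $F = C_1 \wedge \cdots \wedge C_m$ and a restriction $\rho$, form $\DTc(F\uhr\rho)$ greedily --- if $F\uhr\rho$ is constant, halt; otherwise let $C_i$ be the first clause not yet satisfied by $\rho$, query (in a fixed order) all variables of $C_i$ that $\rho$ leaves free, and recurse on each resulting branch. Since $\DTc(F\uhr\rho)$ computes $F\uhr\rho$, the minimum decision-tree depth of $F\uhr\rho$ is at most $\depth\big(\DTc(F\uhr\rho)\big)$, so it suffices to bound $\Prx_{\brho\leftarrow\calR_p}\big[\depth(\DTc(F\uhr\brho)) \ge t\big]$, i.e.\ the probability that $\DTc(F\uhr\brho)$ contains a root-to-node path of length $\ge t$.

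Next, the encoding. Let $\mathrm{Bad}$ be the set of restrictions $\rho$ for which $\DTc(F\uhr\rho)$ has a path of length $\ge t$. To each $\rho\in\mathrm{Bad}$ I would attach a canonical such path of length exactly $t$ (say the lexicographically-first one along which $F$ remains undetermined, truncated to $t$ queries). Let $S_1,S_2,\ldots$ be the successive blocks of variables this path queries, so each $S_j$ consists of the variables that some clause $C_{i_j}$ had left free, and $\sum_j|S_j|=t$; let $\pi$ be the assignment the path gives these $t$ variables, and let $\bar\pi$ be the assignment to the \emph{same} variables that instead falsifies, in every $C_{i_j}$, the literals through which $S_j$ appears (this is forced by $F$). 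I then encode
\[
\rho \;\longmapsto\; \big(\,\rho\bar\pi,\;\sigma,\;z\,\big),
\]
where $\rho\bar\pi$ has exactly $t$ more coordinates fixed than $\rho$; $\sigma$ records, block by block, which subset of $\mathrm{vars}(C_{i_j})$ equals $S_j$ (and where the block ends); and $z\in\zo^{t}$ records, for each of the $t$ queried variables in order, one bit: whether $\pi$ and $\bar\pi$ agree on it.

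The crux --- and the step I expect to take the most care --- is that this map is injective. Decoding proceeds left to right: since the variables of $C_{i_1}$ left free by $\rho$ are exactly $S_1$, which $\bar\pi$ falsifies, and the other literals of $C_{i_1}$ were already falsified by $\rho$ (a variable satisfying $C_{i_1}$ under $\rho$ cannot be free, hence is untouched by $\bar\pi$), the restriction $\rho\bar\pi$ \emph{falsifies} $C_{i_1}$ while still \emph{satisfying} $C_1,\dots,C_{i_1-1}$; hence $C_{i_1}$ is recoverable from $\rho\bar\pi$ and $F$ as the first falsified clause. Then $\sigma$ identifies $S_1$, the first $|S_1|$ bits of $z$ identify $\pi$ on $S_1$, and overwriting $\bar\pi$ by $\pi$ there reduces us to the same situation with $\rho\,\pi|_{S_1}$ in place of $\rho$; iterating recovers all of $S_1,S_2,\ldots$ and all of $\pi$, hence $\rho$ itself (reinstate $\ast$ on the path variables). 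The delicate design choices --- storing the \emph{falsifying} assignment $\bar\pi$ (not $\pi$) inside the codeword so that ``first falsified clause'' is a function of the codeword, and recording block boundaries in $\sigma$ --- are exactly what make the decoding possible.

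Finally I would count, summing $\Prx[\rho]$ over $\rho\in\mathrm{Bad}$ using injectivity. In $\calR_p$, fixing a $\ast$-coordinate to a specified value multiplies the probability by $\tfrac{(1-p)/2}{p}$, so $\Prx[\rho]=\big(\tfrac{2p}{1-p}\big)^{t}\Prx[\rho\bar\pi]$; grouping the $\rho$'s by the $(\sigma,z)$ part of their codeword and using that the $\rho\bar\pi$'s within a group are distinct restrictions gives
\[
\Prx_{\brho\leftarrow\calR_p}\big[\brho\in\mathrm{Bad}\big] \;\le\; \Big(\tfrac{2p}{1-p}\Big)^{t}\cdot 2^{t}\cdot \#\{\sigma\}.
\]
A crude bound $\#\{\sigma\}\le(2k)^{t}$ --- charge each queried variable its index within its $\le k$-literal clause plus a ``block ends here'' bit --- already yields $(Ckp)^{t}$ for an absolute constant $C$; the tighter bookkeeping of the classical argument, exploiting that for the falsified clause $C_{i_j}$ the subset $S_j$ and the agreement bits of $z$ are correlated rather than free, sharpens the constant to the stated $(5pk)^{t}$.
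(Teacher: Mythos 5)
Your proposal is the classical Razborov encoding--decoding proof, which is essentially the approach this paper works with: the theorem itself is imported from H{\aa}stad, and Fact~\ref{fact:raz} in Appendix~\ref{sec:HSL14-proof} encapsulates exactly the injection you construct (extended there to the multi-switching setting), so your route matches the paper's framework. Two small points to tidy when writing it out: during decoding, identify $C_{i_j}$ as the first clause \emph{not satisfied} by the current hybrid restriction rather than the first \emph{falsified} clause (your rule can fail on the possibly-partial final block, which need not be falsified), and note that your crude count of the auxiliary information only yields $(Ckp)^t$ for some absolute constant, with the stated constant $5$ indeed requiring the finer bookkeeping you allude to.
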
 

In the context of $\acz$ circuits the switching lemma is used to achieve \emph{depth reduction} under random restrictions: we apply Theorem~\ref{thm:HSL} separately to each of the bottom-layer depth-$2$ subcircuits, choosing $t$ appropriately so that all of them ``switch" to depth-$t$ decision trees with high probability.  The following corollary is what is typically used:

\begin{corollary}[$\acz$ depth reduction via Theorem~\ref{thm:HSL}]
\label{cor:HSL-depth-reduction} 
Let $\calC$ be a size-$M$ depth-$d$ $\acz$ circuit with bottom fan-in~$k$, and let $p = 1/(10k)$.  Then for all $\eps > 0$,
\[ \Prx_{\brho\leftarrow\calR_p}\big[\, \text{$\calC \uhr \brho$ is not computed by a depth-$(d-1)$ circuit with bottom fan-in $\log(M/\eps)$}\big] \le \eps.\] 
\end{corollary}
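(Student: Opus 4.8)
The plan is to apply H\aa stad's switching lemma (Theorem~\ref{thm:HSL}) to each of the depth-two subcircuits sitting at the bottom of $\calC$, with a single, carefully chosen decision-tree depth parameter $t$, and then take a union bound over these subcircuits.

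First I would put $\calC$ into the standard layered alternating form, so that the gates at the bottom level are AND (or OR) gates of fan-in at most $k$ and the level immediately above consists of OR (resp.\ AND) gates; then $\calC$ is an AND/OR of at most $M$ depth-two subcircuits $G_1,\dots,G_m$ (with $m \le M$, since $\calC$ has at most $M$ gates), each of which is a $k$-DNF or a $k$-CNF. I would set $t := \log(M/\eps)$ (rounded up to an integer if necessary). H\aa stad's switching lemma is stated for $k$-CNFs, but its decision-tree conclusion is self-dual --- a function and its negation have decision trees of the same depth --- so applying Theorem~\ref{thm:HSL} to $\overline{G_j}$ when $G_j$ is a DNF shows that the lemma applies verbatim to each $G_j$. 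With $p = 1/(10k)$ we have $5pk = \tfrac12$, so Theorem~\ref{thm:HSL} gives, for each $j$,
\[
\Prx_{\brho\leftarrow\calR_p}\big[\,G_j\uhr\brho\text{ has no decision tree of depth }t\,\big]\;\le\;(5pk)^t\;=\;2^{-t}\;\le\;\eps/M .
\]
A union bound over the $m\le M$ subcircuits then shows that, except with probability at most $\eps$, every $G_j\uhr\brho$ is computed by a decision tree of depth at most $t$.

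Conditioned on this event I would conclude with the usual "collapse a layer'' argument: a depth-$t$ decision tree can be written both as a $t$-DNF and as a $t$-CNF (one term per $1$-leaf, one clause per $0$-leaf), so I replace each $G_j\uhr\brho$ by whichever of these two representations has \emph{top} gate of the same type (AND or OR) as the gate of $\calC$ immediately above $G_j$. The two now-adjacent layers of identical gates merge, reducing the depth of $\calC\uhr\brho$ from $d$ to $d-1$ while the new bottom fan-in is at most $t=\log(M/\eps)$, as claimed. (For $d=2$ this degenerates to the statement that $\calC\uhr\brho$ is itself computed by a depth-$\log(M/\eps)$ decision tree with probability at least $1-\eps$.)

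The only real subtlety --- and therefore the step I would be most careful about --- is the layer-merging bookkeeping: one must make the CNF-vs-DNF choice locally at each $G_j$ so that it matches the (possibly different across $j$) gate type above it, and check that simplifying subcircuits that become constant or unit under $\brho$ does not break the layered structure or inflate the depth. Everything else is a direct substitution of $p=1/(10k)$ and $t=\log(M/\eps)$ into Theorem~\ref{thm:HSL}, followed by a union bound.
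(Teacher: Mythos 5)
Your proposal is correct and matches the paper's own proof: apply Theorem~\ref{thm:HSL} with $t = \log(M/\eps)$ to each of the at most $M$ bottom-layer depth-two subcircuits (noting $5pk = 1/2$ at $p = 1/(10k)$), union bound, and then use the fact that a depth-$t$ decision tree is both a $t$-DNF and a $t$-CNF to collapse a layer. The extra care you take with duality for DNF subcircuits and with layer-merging bookkeeping is fine but not a departure from the paper's argument.
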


\begin{proof}
This follows from applying Theorem~\ref{thm:HSL} with $t = \log(M/\eps)$ to each of the bottom-layer depth-$2$ subcircuits of $\calC$ (at most $M$ of them), along with the basic fact that a depth-$t$ decision tree can be expressed as both a $t$-DNF as well as a $t$-CNF. 
\end{proof} 

\ignore{
\gray{

\begin{corollary}[Theorem~\ref{thm:HSL} + union bound]
\label{cor:HSL}
Let $\mathscr{F} = \{F_1,\ldots,F_M\}$ be a collection of $k$-CNFs and $p := 1/(10k)$. Then 
\[ \Prx_{\brho\leftarrow\calR_p}[\, \text{some $F_i \in \mathscr{F}$ does not have a decision tree of depth~}t := \log(M/\eps) \,] \le \eps. \] 
\end{corollary} 

This corollary enables one to collapse the depth of an $\acz$ circuit by one, by replacing the bottom three levels of
the circuit (say OR-AND-OR) with OR of depth-$t$ decision trees, which is equivalent to a $t$-DNF.}  
}

The same argument is then repeated again on the $(k= \log(M/\eps))$-DNFs at the bottom two layers of the new circuit (applying the dual form of the switching lemma for $k$-DNFs rather than $k$-CNFs) to further reduce the depth to $d-2$.  However, observe that in this second application of the switching lemma (and in later applications as well), in order to use Corollary \ref{cor:HSL-depth-reduction}, the parameter $p$ of the random restriction must now depend on $\eps$, since we must now take $p < 1/({5k}) =  1/(5\log(M/\eps))$ in order to get a nontrivial bound in Theorem~\ref{thm:HSL}. This is why standard applications of the~\cite{Hastad86} switching lemma (involving $d-1$ iterative applications of Corollary~\ref{cor:HSL-depth-reduction}) show that every size-$M$ depth-$d$ $\acz$ circuit collapses to depth-$(t = \log(M/\eps))$ decision tree with high probability, at least $1-\eps$, under a random restriction with $\ast$-probability $p = \Theta(1/\log^{d-1}(M/\eps))$. Note that $t$ and $p$ both depend on~$\eps$.

As alluded to above, the recent multi-switching lemma of~\cite{Has14} shows, remarkably, that essentially the same simplification holds under a random restriction with $\ast$-probability $p = \Theta(1/\log^{d-1}(M))$, independent of $\eps$. Let us establish some terminology and notation to present these results.

\begin{definition}[Common partial decision tree] 
Let $\mathscr{F} = \{F_1,\ldots,F_M\}$ be a collection of Boolean functions. We say that a decision tree $T$ is a \emph{common $\ell$-partial decision tree for $\mathscr{F}$} if every $F_i \in \mathscr{F}$ can be expressed as $T$ with depth-$\ell$ decision trees at its its leaves. (Equivalently, for every $F_i \in \mathscr{F}$ and root-to-leaf path $\pi$ in $T$, we have that $F_i \uhr \pi$ is computed by a depth-$\ell$ decision tree.)  
\end{definition}

The multi-switching lemma of \cite{Has14} is as follows:

\begin{theorem}[Multi-switching lemma, Lemma 3.8 of~\cite{Has14}]
\label{thm:HSL14}
Let $\mathscr{F} = \{F_1,\ldots,F_M\}$ be a collection of $k$-CNFs and $\ell := \log(2M)$. Then for all $t \ge 1$, 
\[ \Prx_{\brho\leftarrow\calR_p}[\, \mathscr{F}\uhr\brho \text{~does not have a common $\ell := \log(2M)$-partial DT of depth $t$}\,] \le M(24pk)^t.  \] 
\end{theorem}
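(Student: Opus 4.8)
The plan is to prove Theorem~\ref{thm:HSL14} by adapting H{\aa}stad's classical ``adversary/encoding'' argument for the single-function switching lemma (Theorem~\ref{thm:HSL}) to the multi-function setting. The key conceptual object is a \emph{canonical common partial decision tree} $\DTc(\mathscr F \uhr \brho)$: processing $F_1,\dots,F_M$ in order, repeatedly find the first $F_i$ whose canonical (single-function) decision tree under the current restriction has depth $>\ell$, and query the variables of its first ``deep'' path until that subtree has been resolved, then continue. The tree $T$ built this way is, by construction, a common $\ell$-partial decision tree whenever every branch of it has depth $\le t$; so the failure event is exactly the event that $\DTc(\mathscr F\uhr\brho)$ has some root-to-leaf path of length $> t$.

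First I would fix such a ``bad'' long path $\pi$ of length exactly $t$ (it suffices to bound the probability that $\DTc$ has depth $\ge t$, and we may truncate). Along $\pi$, the queried variables are partitioned into consecutive blocks, where each block comes from running the single-function canonical decision tree of some $F_{i_j}$ (with $i_1 \le i_2 \le \cdots$) past depth $\ell$. The heart of the argument is an injective \emph{encoding}: given $\brho$ (which sets a $(1-p)$-fraction of coordinates) together with the bad path $\pi$, I would produce (a) a restriction $\brho'$ that additionally fixes all the variables on $\pi$ (so $\brho'$ has $t$ fewer stars), and (b) a small amount of auxiliary information that records, for each block, which clause of the relevant $F_{i_j}$ was ``hit,'' which subset of that clause's $\le k$ variables appeared, and what partial assignment was made. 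This is precisely H{\aa}stad's encoding, applied block-by-block; the auxiliary data per block costs roughly $k$ bits per queried variable, yielding a total of at most $(O(k))^{t}$ choices of auxiliary string — more carefully one gets a factor like $(ck)^t$ for an absolute constant $c$. The one genuinely new ingredient relative to the single-function case is that the encoding must also record, for each block, the \emph{index} $i_j$ of the function being switched; since the indices are nondecreasing and there are at most $M$ of them, specifying where the index increments along $\pi$ costs at most a $\binom{t + M}{M}$-type overhead — but H{\aa}stad's slick trick is instead to charge a single global factor of $M$ (one ``first bad function'' per application), which is why the bound reads $M(24pk)^t$ rather than $M^t$ something. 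I would reconstruct $i_1$ from $\mathscr F$ and $\brho$ directly (it is the first function whose canonical DT is deep), and only pay for the \emph{increments}, absorbing everything into the constant and the single factor $M$.

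With the encoding in hand, the probability bound is the standard counting step: the map $\brho \mapsto (\brho', \text{aux})$ is injective, and for each fixed aux string the preimage $\brho'$ has $t$ more fixed coordinates than $\brho$, so under $\calR_p$ the ``weight'' ratio is $\Pr[\brho]/\Pr[\brho'] = (p/((1-p)/2))^{t} \cdot (\text{sign factors}) \le (2p/(1-p))^{t} \le (something)^t$ after accounting for the two choices of value on each newly-fixed coordinate; folding the $2$ from the value choice and the constant from the clause/subset choices into a single constant gives the claimed $(24pk)^t$. Summing the injective-image bound over all aux strings, and multiplying by the global factor $M$ for the choice of the first bad function, yields $\Pr[\,\mathscr F\uhr\brho \text{ has no common }\ell\text{-partial DT of depth }t\,] \le M(24pk)^t$.

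The main obstacle I expect is bookkeeping the encoding cleanly across block boundaries: one must argue that the decoder, walking back along $\pi$ from the end, can at each step determine whether it is still inside the current block (i.e.\ still inside the canonical DT of $F_{i_j}$) or has crossed into the previous block, and can identify the relevant clause of $F_{i_j}$ \emph{before} un-fixing that block's variables — exactly as in H{\aa}stad's proof, but now the block lengths are variable (each is between $\ell+1$ and $\ell + k$, say) and there are several of them. The correct way to handle this is to have the aux string, for each block, first record the length of the block (or equivalently a ``stop'' marker), which costs another $O(1)$ factor per queried variable and is harmless. Everything else is a routine repetition of the single-function analysis; I would, in the write-up, isolate the single-block step as a lemma (essentially Theorem~\ref{thm:HSL}'s proof) and then compose.
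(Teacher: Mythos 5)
Your overall route is the same as the one the paper uses for its variant of this statement (Theorem~\ref{thm:HSL14-canonical}, proved in Appendix~\ref{sec:HSL14-proof}): define a canonical common $\ell$-partial decision tree, decompose a witnessing length-$t$ path into blocks coming from single-formula canonical decision trees, and run Razborov's encoding--decoding argument block by block, with per-block auxiliary information and a final weight-comparison between $\brho$ and its extension $\brho'$. That part of your plan is sound.

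The genuine gap is in your accounting for the \emph{identity of the formula switched in each block}. You propose to avoid recording the indices $i_1\le\cdots\le i_u$ by ``reconstructing $i_1$ from $\mathscr{F}$ and $\brho$ directly'' and then ``only paying for the increments,'' charging a single global factor of $M$. Two problems. First, the reconstruction is circular: the decoder does not have $\brho$ --- recovering $\brho$ is the whole point --- it only has the further-fixed restriction $\brho'$ (plus aux bits), and under $\brho'$ the ``first formula whose canonical DT is deep'' need not be $F_{i_1}$ at all. Second, even granting nondecreasing indices, specifying them costs about $u\log(M/u)$ bits (a $\binom{M+u}{u}$-type count), not $\log M$ bits, so ``pay only for increments'' does not collapse to a single factor of $M$. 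The correct accounting --- which is what both \cite{Has14} and the paper's Appendix do --- is to simply spend $\log M$ bits of auxiliary information \emph{per block}, giving the factor $M^{\lceil t/\ell\rceil}$ (this is exactly the bound of Theorem~\ref{thm:HSL14-canonical}), and then to use the specific setting $\ell=\log(2M)$ together with the fact that every block has length at least $\ell+1$: since $u\le \lceil t/\ell\rceil$, one has $M^{\lceil t/\ell\rceil}\le M\cdot 2^{t\log M/\log(2M)}\le M\cdot 2^t$, and the $2^t$ is absorbed into the constant, leaving the single factor $M$ and the bound $M(24pk)^t$. Your write-up never invokes $\ell=\log(2M)$ anywhere in the counting, so as written it cannot produce the claimed bound; without tying block length to $\log M$ you would be stuck at $M^{\Theta(t/\ell)}(O(pk))^t$. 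A secondary point to be careful about (handled in the paper via the ``full path'' requirement and the ``Furthermore'' clause of Fact~\ref{fact:raz}): for every block except the last, the decoder must be able to identify and undo $\sigma^{(j)}$ even though the restriction in hand is additionally extended by $\sigma^{(j+1)}\circ\cdots\circ\sigma^{(u)}$; recording block lengths alone does not give this, one needs the intermediate witnessing paths to be full.
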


The following corollary should be contrasted with Corollary~\ref{cor:HSL-depth-reduction}:

\begin{corollary}[$\acz$ depth reduction via Theorem~\ref{thm:HSL14}; c.f.~Corollary~\ref{cor:HSL-depth-reduction}] 
\label{cor:HSL14-depth-reduction}
Let $\calC$ be a size-$M$ depth-$d$ $\acz$ circuit with bottom fan-in~$k$, and let $p = 1/(48k)$.  Then for all $\eps > 0$,
\[ \Prx_{\brho\leftarrow\calR_p}\big[\, \text{$\calC \uhr \brho$ is not computed by a $((\log(M/\eps),\acz(\text{depth $d-1$, bottom fan-in $\log (2M)$})$-decision tree})\,\big] \le \eps.\] 
\end{corollary}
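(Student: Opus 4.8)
The plan is to follow the proof of Corollary~\ref{cor:HSL-depth-reduction} essentially verbatim, with the single-function switching lemma plus union bound replaced by a \emph{single} invocation of the multi-switching lemma (Theorem~\ref{thm:HSL14}) applied at once to the whole family of bottom-layer depth-two subcircuits of $\calC$. First I would put $\calC$ in the usual layered, alternating form and let $\mathscr{F} = \{F_1,\dots,F_{M'}\}$ with $M' \le M$ be the collection of the (at most $M$) depth-two subcircuits occupying its bottom two levels; each $F_i$ is a $k$-CNF, or a $k$-DNF if the bottom gates are ANDs, in which case one uses the dual form of Theorem~\ref{thm:HSL14} for $k$-DNFs (the two cases are entirely symmetric). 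Set $\ell := \log(2M)$ and $t := \lceil\log(M/\eps)\rceil$.

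Next I would invoke Theorem~\ref{thm:HSL14} with these $\ell$ and $t$ and with $p = 1/(48k)$. The point of this choice of $p$ is that $24pk = 1/2$, so the failure probability is at most $M'(24pk)^t \le M\cdot 2^{-\log(M/\eps)} = \eps$. Hence, except with probability $\eps$ over $\brho\leftarrow\calR_p$, the restricted family $\mathscr{F}\uhr\brho$ admits a common $\ell$-partial decision tree $T$ of depth $t$: a depth-$t$ decision tree in the variables left alive by $\brho$ such that for every $i$ and every root-to-leaf path $\pi$ of $T$, the function $F_i\uhr\brho\uhr\pi$ is computed by a depth-$\ell$ decision tree.

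Finally I would argue that this event implies the desired conclusion. Condition on it and fix such a $T$. Along each root-to-leaf path $\pi$ of $T$, every bottom subcircuit $F_i\uhr\brho\uhr\pi$ is a depth-$\ell$ decision tree and hence expressible as an $\ell$-DNF (equivalently an $\ell$-CNF); substituting, for each $F_i$, the form whose top gate matches the type of the gate directly above $F_i$ in $\calC$ lets that layer be absorbed, so that $\calC\uhr\brho\uhr\pi$ becomes an $\acz$ circuit of depth $d-1$ with bottom fan-in $\ell = \log(2M)$. Therefore $\calC\uhr\brho$ is computed by $T$ with these depth-$(d-1)$, bottom-fan-in-$\log(2M)$ circuits hung at its leaves, i.e.\ by a $(\log(M/\eps),\acz(\text{depth } d-1,\text{bottom fan-in }\log(2M)))$-decision tree, which is exactly the claim. (In the base case $d=2$, where $\calC$ is a single $k$-CNF, one instead simply applies Theorem~\ref{thm:HSL} to $\calC$: since $5pk < 1/2$ the failure probability is at most $2^{-t} \le \eps$, and a depth-$t$ decision tree is trivially a $(t,\acz(\text{depth }1,\cdot))$-decision tree.)

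Since the corollary is a direct consequence of Theorem~\ref{thm:HSL14}, I do not anticipate a genuine obstacle; the only place warranting care is the composition step in the last paragraph---checking that a decision tree that is common to \emph{all} of the bottom subcircuits (not a separate tree per subcircuit, as a naive union bound would yield) really does glue onto the top $d-2$ layers of $\calC$ to produce a decision tree each of whose leaves is a single depth-$(d-1)$ circuit. This is precisely where the word ``common'' in ``common partial decision tree'' does its work, and it is the structural feature that later allows one to iterate this corollary down a depth-$d$ circuit without the $\eps$-dependent deterioration of $p$ described in Section~\ref{sec:hastad-vs-hastad}.
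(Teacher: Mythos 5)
Your proposal is correct and matches the paper's proof: both apply Theorem~\ref{thm:HSL14} once to the family of bottom-layer depth-two subcircuits with $t = \log(M/\eps)$ and $\ell = \log(2M)$, then rewrite the depth-$\ell$ decision trees at the leaves as $\ell$-CNFs/DNFs so they merge with the gate above, yielding depth-$(d-1)$ circuits of bottom fan-in $\log(2M)$ at the leaves of the common partial decision tree. The paper states this in one sentence; your write-up simply spells out the same steps (including the choice $24pk = 1/2$) in more detail.
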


\begin{proof}
This follows by applying Theorem~\ref{thm:HSL14} with $\mathscr{F}$ being the bottom-layer depth-$2$ subcircuits of $\calC$ and $t = \log(M/\eps)$, along with the fact that a depth-$\ell$ decision tree can be expressed as both a $\ell$-DNF and an $\ell$-CNF. 
\end{proof} 

\ignore{
\gray{
\begin{corollary}[cf.~Corollary~\ref{cor:HSL}]
\label{cor:HSL14-cor} 
Let $\mathscr{F} = \{F_1,\ldots,F_M\}$ be a collection of $k$-CNFs, $p := 1/(48k)$ and $\ell := \log(2M)$. Then 
\[ \Prx_{\brho\leftarrow\calR_p}[\, \mathscr{F}\uhr\brho\text{~does not have a common $\ell := \log(2M)$-partial DT of depth $t := \log(M/\eps)$}\,] \le \eps. \] 
\end{corollary} 
}
}

We highlight a crucial qualitative aspect of Corollary~\ref{cor:HSL14-depth-reduction}: while the depth $t=\log(M/\eps)$ of the decision tree whose existence it asserts does depend on $\eps$, the depth-$(d-1)$ $\acz$ circuits at its leaves have bottom fan-in $k = \log(2M)$ which does \emph{not} depend on $\eps$.  This means that in successive application of Corollary~\ref{cor:HSL14-depth-reduction}, the values of $p = 1/(48k) = \Theta(1/\log M)$ will remain independent of $\eps$. This leads to much better quantitative bounds than can be obtained through repeated applications of Corollary~\ref{cor:HSL-depth-reduction}: $d-1$ iterative applications of Corollary~\ref{cor:HSL14-depth-reduction} imply  that every size-$M$ depth-$d$ $\acz$ circuit collapses to a depth-$O(2^d\log(M/\eps))$ decision tree with high probability, at least $1-\eps$, under a random restriction with $\ast$-probability $p = \Theta(1/\log^{d-1} M)$. Note that the overall $\ast$-probability $p$ is independent of $\eps$.


\vspace{-5pt} 
\paragraph{Multi-switching lemmas and sparse $\F_2$ polynomials.} The qualitative advantage of multi-switching lemmas---in particular, the crucial role of a common partial decision tree---can also be seen within the context of $\F_2$ polynomials.  

Let $P$ be an $S$-sparse $\F_2$ polynomial. It is an easy observation that $P$ becomes a low-degree polynomial with high probability when hit with a random restriction: for all $\eps, p\in (0,1)$ and $k\in \N$, 
\begin{equation} \Prx_{\brho\leftarrow\calR_{\frac{p}{2}}}[\, P\uhr \brho \text{~is not a degree-$k$ polynomial}\,] \le \frac{\eps}{2} + S{w\choose k}p^k \quad \text{where $w = \Theta(\log(S/\eps))$.} \label{eq:naive}
\end{equation} 
(The proof follows by considering each monomial of $P$ individually and taking a union bound over all $S$ of them. For a fixed monomial, the probability that more than $\Omega(\log(S/\eps))$ variables survive a random restriction from $\calR_{\frac{1}{2}}$ is at most $\eps/(2S)$; next, the probability that at least $k$ variables in a width-$w$ monomial survive a random restriction from $\calR_p$ is at most ${w \choose k}p^k$.)   The failure probability of (\ref{eq:naive}) can be made at most $\eps$ by choosing $p$ and $k$ appropriately, but note that at least one of $p$ (the $\ast$-probability) or $k$ (the degree of the resulting polynomial) must depend on~$\eps$. 

Using a slight extension of the ideas in the multi-switching lemmas of~\cite{Has14}, we can instead bound the probability that $P \uhr \brho$ becomes a \emph{depth-$t$ decision tree with degree-$k$ polynomials at its leaves}.  While this provides weaker structural information than the simple observation above (cf.~Corollary~\ref{cor:HSL-depth-reduction} vs.~Corollary~\ref{cor:HSL14-depth-reduction} in the context of $\acz$), the crucial win will come from the fact that $p$ and $k$ can \emph{both} be taken to be independent of the failure probability $\eps$ (and only $t$ will depend on~$\eps$).

\subsection{Canonical common $\ell$-partial decision trees} \label{sec:ccdt}

An important concept in the proof of Theorem~\ref{thm:HSL14} is that of a \emph{canonical} common $\ell$-partial decision tree for an ordered collection $\mathscr{F}$ of $k$-CNFs, which we define in this section.

Given a $k$-CNF formula $F$ (which we view as an ordered sequence of width-$k$ clauses $C_1 \wedge C_2 \wedge \cdots$), we recall the notion of the \emph{canonical decision tree} for $F$, denoted $\CDT(F)$.  This is a decision tree which computes $F$ and is obtained as follows:  

\begin{itemize}

\item If any clause $C_i$ is identically-0, then the tree is the constant 0.  

\item If every clause $C_i$ is identically-1, then the tree is the constant 1.

\item Otherwise, let $C_{i_1}$ be the first clause that is not identically-1, and let $\kappa \in [k]$ be the number of variables in $C_{i_1}$. The first $\kappa$ levels of $\CDT(F)$ exhaustively query these $\kappa$ variables.  At each of the $2^{\kappa}$ resulting leaves of the tree (each one corresponding to some restriction ${\eta} \in \zo^{\kappa}$ fixing those $\kappa$ variables), recursively put down the canonical decision tree $\CDT(F \uhr {\eta}).$

\end{itemize}
We observe that the tree $\CDT(F)$ is unique given a fixed ordering $C_1,C_2,\ldots$ of the clauses in $F$.

H{\aa}stad's proof of his original switching lemma (Theorem~\ref{thm:HSL}) actually shows that if $F$ is a $k$-CNF, then the canonical decision tree $\CDT(F \uhr \brho)$ is shallow w.h.p.~over $\brho \leftarrow \calR_p$.  This is crucially important for the arguments of Trevisan and Xue~\cite{TX13}, who give a \emph{derandomized} version of H{\aa}stad's original switching lemma:  they construct a pseudorandom distribution over restrictions to take the place of $\calR_p$, and show that with high probability a restriction drawn from this pseudorandom distribution causes a $k$-CNF to collapse to a small-depth decision tree.  Their argument uses the structure of a canonical decision tree in an essential way.

Turning to H{\aa}stad's multi-switching lemma~\cite{Has14}, we observe that analogous to his original switching lemma, the proof of Theorem~\ref{thm:HSL14} given in \cite{Has14} implicitly establishes a stronger statement: $\mathscr{F}\uhr\brho$ has a small-depth \emph{canonical} common $\ell$-partial decision tree w.h.p.~over $\brho\leftarrow\calR_p$.  In fact, we will use the fact that it actually establishes an even stronger statement: w.h.p.~over $\brho\leftarrow\calR_p$,  \emph{every} canonical common $\ell$-partial decision tree for $\mathscr{F}\uhr\brho$ is shallow---as we explain below, there is more than one canonical common $\ell$-partial decision tree for a {sequence} $\mathscr{F}$ of CNFs.
 
Let us  explain what a canonical common $\ell$-partial decision tree for a {sequence} of CNFs $\mathscr{F}$ is. We will see that there is a set of canonical common $\ell$-partial decision trees for a given $\mathscr{F}$ rather than just one tree; note that this is the case even though we assume a fixed ordering $F_1,F_2,\dots$ on the elements of $\mathscr{F}$ as well as on the clauses within each CNF.  (Observe the contrast with the case of a canonical decision tree for a single formula $F$, where we assume a fixed ordering on the clauses of $F$; in that setting, as explained above there is a single canonical decision tree $\CDT(F).$)

We need a preliminary definition to handle a technical issue related to the final segment of paths through a canonical decision tree.

\begin{definition}[Full paths in the CDT]
\label{def:full}
Let $F = C_1 \wedge C_2 \wedge \cdots$ be a $k$-CNF and consider the canonical decision tree $\CDT(F)$ for $F$. Every path $\eta$ in $\CDT(F)$ can be written as the the disjoint union of segments $\eta = \eta^{(1)} \circ \eta^{(2)} \circ \cdots \circ \eta^{(u)}$,  where for all $j\in [u]$, the segment $\eta^{(j)}$ is an assignment to the surviving variables in the restricted clause $C_{i_j} \uhr \eta^{(1)} \circ \cdots \circ \eta^{(j-1)}$, and $C_{i_j}$ is the first clause in $F \uhr \eta^{(1)} \circ \cdots \circ \eta^{(j-1)}$ that is not identically-$1$.

Furthermore, note that for $j\in [u-1]$, the segment $\eta^{(j)}$ is in fact an assignment fixing \emph{all} the surviving variables in $C_{i_j} \uhr \eta^{(1)} \circ \cdots \circ \eta^{(j-1)}$.  We say that $\eta$ is \emph{full} if this is also the case for the final segment: $\eta$ is \emph{full} if $\eta^{(u)}$ is an assignment fixing all the surviving variables in $C_{i_u} \uhr \eta^{(1)} \circ \cdots \circ \eta^{(u-1)}$. 
\end{definition} 

\begin{observation}
Let $F$ be a $k$-CNF and suppose $\depth(\CDT(F)) > \ell$. Then there is a full path $\eta$ of length $|\eta| \in \{\ell+1,\ldots,\ell+k\}$ in $\CDT(F)$. 
\end{observation}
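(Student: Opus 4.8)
The plan is to extract the required full path as a prefix of a long root-to-leaf path, cut off at the first ``clause boundary'' that overshoots length $\ell$. First I would use the hypothesis $\depth(\CDT(F)) > \ell$ to fix a root-to-leaf path $\eta^{\ast}$ in $\CDT(F)$ with $|\eta^{\ast}| > \ell$, and write it in the segment decomposition $\eta^{\ast} = \eta^{(1)} \circ \cdots \circ \eta^{(u)}$ described just before the observation. Two elementary facts about this decomposition do all the work. (i) Each segment has length $1 \le |\eta^{(j)}| \le k$: a width-$k$ clause has at most $k$ variables, and the clause $C_{i_j}$ being processed is not identically-$1$, so it has at least one surviving variable. (ii) A root-to-leaf path is automatically \emph{full}, because the canonical construction only declares a leaf once $F$ restricted by the path is constant, which happens exactly after all surviving variables of the current clause have been queried; hence $\eta^{(u)}$ fixes all surviving variables of $C_{i_u}$, and by the ``Furthermore'' remark so do the earlier segments.

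Next I would consider the prefixes $\eta_j := \eta^{(1)} \circ \cdots \circ \eta^{(j)}$ for $0 \le j \le u$ (with $\eta_0 = \emptyset$), each of which names a node of $\CDT(F)$, and their lengths $L_j := |\eta_j|$. By fact (i) the sequence $0 = L_0 < L_1 < \cdots < L_u$ has all consecutive gaps at most $k$, and $L_u = |\eta^{\ast}| > \ell$. Taking $j^{\ast}$ to be the least index with $L_{j^{\ast}} > \ell$ (which exists since $L_u > \ell$, and satisfies $j^{\ast} \ge 1$ since $L_0 = 0 \le \ell$), minimality gives $L_{j^{\ast}-1} \le \ell$, whence $\ell < L_{j^{\ast}} \le L_{j^{\ast}-1} + k \le \ell + k$; that is, $|\eta_{j^{\ast}}| \in \{\ell+1,\ldots,\ell+k\}$. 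Finally I would check that $\eta := \eta_{j^{\ast}}$ is full: its own segment decomposition is exactly $\eta^{(1)} \circ \cdots \circ \eta^{(j^{\ast})}$, and its last segment $\eta^{(j^{\ast})}$ fixes all surviving variables of its clause --- by the ``Furthermore'' remark if $j^{\ast} < u$, and by fact (ii) if $j^{\ast} = u$. This produces the claimed full path.

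I do not expect a real obstacle here; the only point needing a little care is justifying that the segment decomposition of a prefix of a root-to-leaf path is the corresponding prefix of that path's decomposition, so that ``fullness'' of the final segment is inherited correctly. This holds because the decomposition is built greedily from the root --- at each stage one queries the surviving variables of the first non-trivially-true clause --- so it is insensitive to how the path is later extended. Everything else is the one-line pigeonhole on the gap sequence $(L_j)$.
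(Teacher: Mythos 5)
Your proof is correct, and it is exactly the argument the paper leaves implicit (the observation is stated without proof): take a root-to-leaf path of length greater than $\ell$, use its greedy segment decomposition into clause-blocks of length at most $k$ whose non-final (and, for a root-to-leaf path, final) segments fix all surviving variables of their clauses, and truncate at the first segment boundary past depth $\ell$. The one point you flag --- that the decomposition of such a prefix is the corresponding prefix of the decomposition, so fullness of the last segment is inherited --- is handled correctly by noting the decomposition is determined greedily from the root.
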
 

To help minimize confusion, we will reserve ``$\eta$'' for paths or segments of paths in CDTs, and ``$\pi$'' for paths (or segments of paths) in CCDTs.

We are now ready to define the set of canonical common $\ell$-partial decision trees:

\begin{definition}[Canonical common $\ell$-partial DT]
\label{def:ccpdt}
Let $\mathscr{F} = (F_1,\ldots,F_M)$ be an ordered collection of $k$-CNFs.  The set of all \emph{canonical} common $\ell$-partial decision trees for $\mathscr{F}$, which we denote $\CCDT_\ell(\mathscr{F})$, is defined inductively as follows: 
\begin{enumerate}

\item [0.] If $M=0$ (i.e. $\mathscr{F}$ is an empty collection of $k$-CNFs) then $\CCDT_\ell(\mathscr{F})$ contains a single tree, the empty tree with no nodes.  (Note that otherwise $M\geq 1$, so there is some first formula $F_1$ in $\mathscr{F}.$)

\item If $\CDT(F_1) \le \ell$, then $\CCDT_\ell(\mathscr{F})$ is simply $\CCDT_\ell(\mathscr{F}')$, where $\mathscr{F}' = (F_2,\dots,F_M)$. (Note that in this case, since inductively each tree in $\CCDT_\ell(\mathscr{F}')$ is a common $\ell$-partial DT for $\mathscr{F}'$, each such tree is also a common $\ell$-partial DT for $\mathscr{F}$.)

\item Otherwise, since $\CDT(F_1) > \ell$ there must be a witnessing {\emph{full}} path {$\eta$} of length {between $\ell+1$ and $\ell + k$} in $\CDT(F_1)$, and there are at most {$2^{\ell + k}$} such witnessing full paths. Let $P$ be the set of all such witnessing full paths.  For each path ${\eta} \in P$,  let $T_{\eta}$ be the tree of depth {$|\eta|$} obtained by exhaustively querying all the variables in {$\eta$} in the first {$|\eta|$} levels.  Recurse at the end of each path in $T_{{\eta}}$: for each path $\pi$ in $T_{{\eta}}$, attach a tree $T'$ from $\CCDT_\ell(\mathscr{F}\uhr \pi)$ at the end of the path.  So in this case $\CCDT_\ell(\mathscr{F})$ is the set of all trees that can be obtained in this way (across all possible choices of ${\eta} \in P$ and all possible choices of a tree $T' \in \CCDT_\ell(\mathscr{F}\uhr \pi)$ for each path $\pi \in T_{{\eta}}$).

\end{enumerate} 
We write $\depth(\CCDT_\ell(\mathscr{F}))$ to denote the maximum depth of any tree in the set 
$\CCDT_\ell(\mathscr{F})$.
\end{definition}

The following slight variant of Theorem~\ref{thm:HSL14} can be extracted, with some effort, from a slight modification of the proof given in~\cite{Has14}, which we provide in Appendix~\ref{sec:HSL14-proof}:

\begin{theorem}[Slight variant of H{\aa}stad's multi-switching lemma. Theorem~\ref{thm:HSL14}]
\label{thm:HSL14-canonical}
Let $\mathscr{F} = (F_1,\ldots,F_M)$ be an ordered collection of $k$-CNFs. Then for all $\ell, t \ge 1$, 
\[ \Prx_{\brho\leftarrow\calR_p}[\, \depth(\CCDT_{\ell}(\mathscr{F}\uhr\brho)) \ge t\,] \le M^{\lceil t/\ell \rceil}(32pk)^t.  \] 
\end{theorem}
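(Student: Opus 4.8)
The plan is to follow the encoding (``adversary argument'') strategy of H{\aa}stad, adapting the proof of Theorem~\ref{thm:HSL14} sketched in Appendix~\ref{sec:HSL14-proof}: I will show that a restriction $\brho$ for which some tree in $\CCDT_\ell(\mathscr{F}\uhr\brho)$ has depth at least $t$ can be recovered from $\brho$ together with a short ``advice string'', once $\brho$ has a few more coordinates fixed. We may assume $32pk \le 1$, since otherwise the claimed bound already exceeds $1$. Fix a bad $\brho$. Because $\depth(\CCDT_\ell(\mathscr{F}\uhr\brho)) \ge t$, there is a \emph{run} of the inductive construction of Definition~\ref{def:ccpdt} that reaches depth $\ge t$: it proceeds through CCDT-blocks, where block $j$ singles out the first formula $F_{i_j}$ whose canonical decision tree under the restriction accumulated so far has depth $> \ell$, selects a witnessing \emph{full} path $\eta_j$ of that tree (of length $|\eta_j| \in \{\ell+1,\dots,\ell+k\}$, by the Observation following Definition~\ref{def:full}), and then branches to one leaf $\pi^{(j)}$ of $T_{\eta_j}$. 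Let $s$ be least so that the first $s$ (necessarily complete) CCDT-blocks of this run have total length $\ge t$; then $t' := \sum_{j\le s}|\eta_j| \in [t, t+\ell+k)$, and since each block queries more than $\ell$ variables, $s \le \lceil t/\ell\rceil$.

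Next I would set up the encoding. Let $V := \bigcup_{j\le s}\supp(\eta_j)$, a set of exactly $t'$ coordinates on which $\brho$ is $\ast$, and let $\brho^\ast$ agree with $\brho$ off $V$ and fix each coordinate of $V$ to the value that falsifies the literal, in the relevant clause of $F_{i_j}$, that caused that coordinate to be queried (as in H{\aa}stad's original switching-lemma encoding). The advice records: (i) the $s$ formula indices $i_1,\dots,i_s \in [M]$, giving at most $M^{\lceil t/\ell\rceil}$ choices; and (ii) a bounded amount of local data per coordinate of $V$---its position inside its clause, its true value in the run, and markers for the ends of clauses and of blocks---giving at most $(O(k))^{t'}$ choices in total, the ``clause-position'' part telescoping to a factor $k^{t'}$ via $\prod_m\binom{k}{b_m}\le k^{\sum_m b_m}$. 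The decoder replays the run: with $i_j$ in hand it identifies the relevant clauses of $F_{i_j}$ one at a time as ``the first clause not identically-$1$ under the partial restriction reconstructed so far,'' reads off from the local data which of that clause's variables lie in $V$ and their true values, and continues; after the $s$ blocks it knows $V$, and recovers $\brho$ by resetting $V$ to $\ast$ in $\brho^\ast$. Correctness of the replay hinges on the invariant that at each step the sought clause is identically-$0$ under the decoder's current (over-)restriction while all earlier clauses stay identically-$1$; the falsifying choice of values on $V$ guarantees this, checked by induction along the replay. Finally, since $\brho$ and $\brho^\ast$ differ on exactly the $t'$ coordinates of $V$---where $\calR_p$ places a coordinate at $\ast$ with probability $p$ and at a fixed value with probability $(1-p)/2$---the $\calR_p$-weight of $\brho$ is $(2p/(1-p))^{t'}$ times that of $\brho^\ast$; as the map $\brho\mapsto(\brho^\ast,\mathrm{advice})$ is injective and $2p/(1-p)\le 1$, summing over bad $\brho$ yields
\[
\Prx_{\brho\leftarrow\calR_p}\bigl[\depth(\CCDT_\ell(\mathscr{F}\uhr\brho))\ge t\bigr] \;\le\; \Bigl(\tfrac{2p}{1-p}\Bigr)^{t'}(O(k))^{t'}\,M^{\lceil t/\ell\rceil} \;\le\; (32pk)^{t}\,M^{\lceil t/\ell\rceil},
\]
the last step using $32pk\le 1$ to absorb both the constant and the $\le\ell+k$ overshoot of $t'$ past $t$.

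The strengthening to ``\emph{every} canonical tree is shallow'' comes for free here: the encoder just fixes whichever run witnesses depth $\ge t$, and the argument never union-bounds over the (possibly many) trees in $\CCDT_\ell(\mathscr{F}\uhr\brho)$. The step I expect to be the main obstacle is making the decoding correct while keeping the advice this small, and two points deserve care. First, a CCDT-block is not a single path but the complete subtree $T_{\eta_j}$ on $\supp(\eta_j)$, and the run may leave $T_{\eta_j}$ at a leaf $\pi^{(j)}\neq\eta_j$; moreover $\supp(\eta_j)$ itself depends on the \emph{values} of $\eta_j$ (through which later clauses of $F_{i_j}$ get processed), so the decoder must recover both $\supp(\eta_j)$ and the branch $\pi^{(j)}$ from the per-variable local data together with the fixed structure of $\CDT(F_{i_j})$, without paying $2^{\ell+k}$ per block---this is exactly the ``slight modification'' of H{\aa}stad's proof alluded to above the theorem. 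Second, one must confirm that truncating to \emph{full} blocks both keeps the number of formula changes at most $\lceil t/\ell\rceil$ and overshoots $t$ by at most $\ell+k$, so that the extra factors are harmless once $32pk\le 1$; this is where the ``full path'' machinery of Section~\ref{sec:ccdt} pays off.
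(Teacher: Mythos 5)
Your proposal follows the same route as the paper's own proof (Appendix~\ref{sec:HSL14-proof}): a Razborov-style encoding/decoding argument in which the advice records the at most $\lceil t/\ell\rceil$ formula indices (the $M^{\lceil t/\ell\rceil}$ factor) plus $\log k + O(1)$ bits of local data per queried coordinate, injectivity of $\rho\mapsto(\rho^\ast,\mathrm{advice})$ is proved by replaying the run, and the weight-ratio computation gives the $\,(O(pk))^{t}$ factor; your observation that no union bound over the trees in $\CCDT_\ell(\mathscr{F}\uhr\brho)$ is needed, and your use of fullness so that later segments' falsifying values cannot disturb the replay of earlier blocks, both match the paper.

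The one substantive deviation is where your numbers do not close as written: you complete the last block, so the witnessing structure has total length $t'\in[t,t+\ell+k)$, whereas the paper truncates the witnessing path at length \emph{exactly} $t$. With your choice the advice has variable length, so the correct accounting is per value of $t'$: restrictions sharing an advice string of size $t'$ have total $\calR_p$-weight at most $\bigl(\frac{2p}{1-p}\bigr)^{t'}$, and there are at most $M^{\lceil t/\ell\rceil}(Ck)^{t'}$ such advice strings, where $C$ is your unspecified $O(1)$ (clause position, value, end-of-clause and end-of-block markers give $C\approx 8$ or more). Summing $M^{\lceil t/\ell\rceil}\bigl(\frac{2Cpk}{1-p}\bigr)^{t'}$ over the up to $\ell+k$ admissible values of $t'$ is \emph{not} absorbed merely by $32pk\le 1$: when $\frac{2Cpk}{1-p}$ is close to $32pk$ the sum can exceed $(32pk)^t$ by a geometric-sum factor (up to $\ell+k$ in the extreme), and pairing the weight exponent $t$ with the worst-case advice count $(Ck)^{t+\ell+k}$ instead loses an unaffordable $(Ck)^{\ell+k}$. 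So as stated your final chain of inequalities establishes $M^{\lceil t/\ell\rceil}(C'pk)^t$ only for some constant $C'$ depending on your bookkeeping, and only after the missing sum over $t'$ is carried out. The paper's device avoids this entirely: it takes the path of length exactly $t$, so the final segment $\eta^{(u)}$ has length $t-\sum_{j<u}|\eta^{(j)}|$ and need \emph{not} be full; this is still decodable because it is the last segment --- nothing is appended after its $\sigma^{(u)}$, so the plain (non--``any extension'') part of Fact~\ref{fact:raz} applies to it, while fullness and the ``any extension'' part are needed only for $j<u$. The extension then adds exactly $t$ coordinates, the advice is a fixed $u\log M + t(3+\log k)$ bits, and $M^{\lceil t/\ell\rceil}(8k)^t\bigl(\frac{2p}{1-p}\bigr)^t \le M^{\lceil t/\ell\rceil}(32pk)^t$ for $p\le 1/2$ (the case $32pk>1$ being trivial). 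If you adopt this truncation --- or alternatively tighten your per-coordinate advice and do the sum over $t'$ explicitly, settling for a larger absolute constant --- your argument goes through.
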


\vspace{-8pt} 
\paragraph{A comparison of Theorem~\ref{thm:HSL14} (H{\aa}stad's multi-switching lemma) and Theorem~\ref{thm:HSL14-canonical} (our variant of it).}  We emphasize that the differences are technical in nature, and all the ideas in our proof of Theorem~\ref{thm:HSL14-canonical} are from~\cite{Has14}. First, we observe that $\ell$ is now a free parameter rather than being fixed to $\log(2M)$; this flexibility will be necessary in our PRG construction for sparse $\F_2$ polynomials (where we take $\ell = \Theta(\sqrt{\log M})$).  Second, our notion of a canonical common partial decision tree differs slightly from the one that is implicit in~\cite{Has14}: in case 2 of Definition~\ref{def:ccpdt}, we query a witnessing full path of length between $\ell+1$ and $\ell+k$, whereas~\cite{Has14} queries any witnessing path of length greater than $\ell$.

\section{A pseudorandom multi-switching lemma} \label{sec:derand-H14}

As suggested earlier, the crux of our PRG construction is a \emph{derandomization} of the multi-switching lemma of Theorem~\ref{thm:HSL14-canonical}: we devise a suitable \emph{pseudorandom} distribution over random restrictions in place of $\calR_p$ (the truly random distribution over restrictions) and show that a random restriction $\brho$ drawn from this pseudorandom distribution satisfies a similar guarantee to Theorem~\ref{thm:HSL14-canonical}.

Our derandomization of Theorem~\ref{thm:HSL14-canonical} is largely influenced by Trevisan and Xue's~\cite{TX13} ingenious derandomization of H{\aa}stad's original switching lemma (Theorem~\ref{thm:HSL}).  Roughly speaking, we will derandomize the multi-switching lemma of Theorem~\ref{thm:HSL14-canonical} by ``fooling its proof'':  we will show that the proof of Theorem~\ref{thm:HSL14-canonical} (given in Appendix~\ref{sec:HSL14-proof}, which we again emphasize is only a slight technical modification of H{\aa}stad's proof of his multi-switching lemma, Theorem~\ref{thm:HSL14}) ``cannot $\delta$-distinguish" between truly random restrictions and pseudorandom restrictions drawn from $\polylog(n)$-wise independent distributions.  Since Theorem~\ref{thm:HSL14-canonical} holds for truly random restrictions, it thus follows that it also holds for pseudorandom restrictions drawn from $\polylog(n)$-wise independent distributions (up to a $\delta$ additive loss in the failure probability). 

To accomplish this, we exploit the ``computational simplicity" of Theorem~\ref{thm:HSL14-canonical}'s proof: for a fixed family $\mathscr{F}$ of $k$-CNF formulas, we will show that there is a small $\acz$ circuit that takes as input an encoding of a restriction $\rho$, and outputs $1$ iff $\rho$ is a bad restriction for the desired conclusion of Theorem~\ref{thm:HSL14-canonical}, contributing to its failure probability (i.e.~iff $\depth(\CCDT_\ell(\mathscr{F} \uhr \rho)) > t$).   As alluded to in Section~\ref{sec:ccdt}, this relies on the fact that Theorem~\ref{thm:HSL14-canonical} does not simply bound the depth of the \emph{optimal} common $\ell$-partial decision tree for $\mathscr{F} \uhr \brho$, but instead the depth of any \emph{canonical} common $\ell$-partial decision tree for $\mathscr{F} \uhr \brho$. Indeed, this ``constructive" aspect of the proof  is crucial for our derandomization strategy: it is not at all clear that there is a small circuit for checking if the \emph{optimal} common $\ell$-partial decision tree for $\mathscr{F} \uhr \rho$ has depth greater than $t$.

\medskip 

It will be convenient for us to represent restrictions $\rho \in \{0,1,\ast\}^n$ as bitstrings $(\varrho,y) \in \zo^{n\times q} \times \zo^n := \{0,1\}^{Y_q}$, where $q \in \N$ is a parameter. \ignore{(the semantics of $q$ is explained in Observation~\ref{obs:string-to-restriction} below).}

\begin{definition}[Representing restrictions as bitstrings] 
\label{def:string-to-restriction}
We associate with each string $(\varrho,y) \in \zo^{Y_q}$ the restriction $\rho(\varrho,y) \in \{0,1,\ast\}^{n}$ defined as follows: 
\[ 
\rho(\varrho,y)_i  = \begin{cases}
\ast & \text{if $\varrho_{i,1} = \cdots = \varrho_{i,q} = 1$} \\
y_i & \text{otherwise}. 
\end{cases}
\] 
\end{definition} 

The following observation explains the role of $q$:

\begin{observation}
\label{obs:string-to-restriction}
Let $(\bvrho,\by)$ be drawn from the uniform distribution over $\zo^{Y_q}$.  Then the random restriction $\rho(\bvrho,\by) \in \{0,1,\ast\}^n$ is distributed according to $\calR_p$ where $p = 2^{-q}$.
\end{observation}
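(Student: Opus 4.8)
The plan is to verify directly from Definition~\ref{def:string-to-restriction} that the coordinates of $\rho(\bvrho,\by)$ are mutually independent and that each coordinate has the marginal distribution prescribed by $\calR_p$ with $p = 2^{-q}$. First I would observe that the string $(\bvrho,\by)$ partitions naturally into $n$ disjoint blocks: coordinate $i$ of the restriction depends only on the $q$ bits $\bvrho_{i,1},\dots,\bvrho_{i,q}$ together with the single bit $\by_i$, and these $(q+1)$-bit blocks are disjoint across $i \in [n]$. Since $(\bvrho,\by)$ is uniform over $\zo^{Y_q}$, these blocks are mutually independent, and hence so are the coordinates $\rho(\bvrho,\by)_1,\dots,\rho(\bvrho,\by)_n$; this matches the coordinatewise-independent structure of $\calR_p$.

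Next I would compute the marginal law of a single coordinate $\rho(\bvrho,\by)_i$. By the definition, $\rho(\bvrho,\by)_i = \ast$ exactly when $\bvrho_{i,1} = \cdots = \bvrho_{i,q} = 1$; since these are $q$ independent uniform bits, this event has probability $2^{-q} = p$. Conditioned on its complement (probability $1-p$), we have $\rho(\bvrho,\by)_i = \by_i$, and $\by_i$ is a uniform bit independent of $\bvrho_{i,\cdot}$, so conditionally it is $0$ with probability $1/2$ and $1$ with probability $1/2$. Combining, $\rho(\bvrho,\by)_i$ equals $0$ with probability $(1-p)/2$, equals $1$ with probability $(1-p)/2$, and equals $\ast$ with probability $p$, which is precisely the marginal of $\calR_p$.

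Finally I would note that a distribution over $\{0,1,\ast\}^n$ is determined by its coordinate marginals together with mutual independence of coordinates, so the two computations above establish that $\rho(\bvrho,\by) \sim \calR_p$. There is no real obstacle here: the only thing to be slightly careful about is bookkeeping the disjointness of the $(q+1)$-bit blocks so that independence across coordinates is genuinely inherited from the uniform distribution on $\zo^{Y_q}$, and noting that within a block the $q$ ``$\bvrho$'' bits and the single ``$\by$'' bit are jointly independent so that the conditional-on-not-$\ast$ computation is valid.
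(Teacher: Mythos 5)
Your argument is correct and is exactly the (immediate) calculation the paper has in mind, which is why the statement is labeled an observation and given without proof: the $(q+1)$-bit blocks are independent, each coordinate is $\ast$ with probability $2^{-q}=p$, and otherwise it is a uniform bit, matching the definition of $\calR_p$. Nothing is missing.
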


Our main result in this section is a pseudorandom multi-switching lemma:

\begin{theorem}[Derandomized version of Theorem~\ref{thm:HSL14-canonical}]
\label{thm:derand-H14}
Let $\mathscr{F} = (F_1,\ldots,F_M)$ be an ordered list of $Q$-clause $k$-CNFs. Let $\delta,p \in (0,1)$ and define $q = \log(1/p)$. Let $\calD$ be any distribution over $\zo^{Y_q}$ that $(\delta/(M^{\lceil t/\ell\rceil} n^{O(t)}))$-fools the class of depth-$3$ circuits of size $M(n^{O(\ell)} + Q  2^{O(kq)})$.  Then for all $\ell \geq k$ and all $t\in \N$,
 \[ \Prx_{(\boldeta,\bz)\leftarrow\calD}\big[\, \depth(\CCDT_{\ell}(\mathscr{F}\uhr\rho(\boldeta,\bz))) \ge t\,\big] \le 16^{t+\ell}M^{\lceil t/\ell\rceil} (32pk)^t + \delta. \]
 \end{theorem}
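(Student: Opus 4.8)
The plan is to follow the ``fool the proof'' strategy of Trevisan--Xue, but applied to the (slightly modified) proof of H{\aa}stad's multi-switching lemma, Theorem~\ref{thm:HSL14-canonical}. The key structural fact I would need to extract from that proof is that the ``bad'' event $\{\depth(\CCDT_\ell(\mathscr{F}\uhr\rho)) \ge t\}$ can be recognized by a small constant-depth circuit taking a bitstring encoding $(\varrho,y)\in\zo^{Y_q}$ of $\rho$ as input. Concretely, I would first argue that $\depth(\CCDT_\ell(\mathscr{F}\uhr\rho)) \ge t$ if and only if there exists \emph{some} root-to-leaf path $\pi$ of length (roughly) $t$ that can be ``built'' by the canonical construction of Definition~\ref{def:ccpdt}: a sequence of witnessing full paths $\eta^{(1)},\eta^{(2)},\dots$ through successive CDTs whose concatenation has length $\ge t$. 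Since $\ell \ge k$ and each full path contributes length between $\ell+1$ and $\ell+k$ before the last one (and at most $\ell+k$ overall for the last), there are at most $\lceil t/\ell\rceil$ many such segments, and each segment is a full path of length $\le \ell+k$, so the total number of candidate constructions $\pi$ is at most $M^{\lceil t/\ell\rceil} \cdot (\text{number of full paths per formula})^{\lceil t/\ell\rceil} \le M^{\lceil t/\ell\rceil} n^{O(t)}$ using $2^{\ell+k}\le n^{O(\ell)}$ bounds (here I would need $\ell,k = O(\log n)$ implicitly, or carry the $2^{O(\ell)}$ factor explicitly — which is exactly where the $16^{t+\ell}$ prefactor in the statement comes from).

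Next I would show that for each fixed candidate $\pi$ (a fixed sequence of variables to be queried and values, together with the bookkeeping of which clause of which $F_i$ is witnessed at each step), there is a depth-$3$ circuit $C_\pi$ of size $M(n^{O(\ell)} + Q2^{O(kq)})$ that on input $(\varrho,y)$ outputs $1$ iff $\pi$ is a legal construction for $\rho(\varrho,y)$. The checks $C_\pi$ must perform are local and simple: (i) for each segment, that the named clause is indeed the \emph{first} not-identically-$1$ clause in the relevant restricted CNF — this is an AND over clauses, where ``clause $C$ is identically-$1$ under $\rho$'' is itself a small DNF/CNF in the bits of $(\varrho,y)$ (a clause of width $k$ becomes identically-$1$ iff one of its $\le k$ literals is set to satisfy it, and ``literal $x_i=b$ is set by $\rho$'' is the conjunction ``$\varrho_{i,1}=\cdots=\varrho_{i,q}=1$ is false and $y_i=b$'', a conjunction of $q+1$ literals); (ii) that the variables queried along $\eta^{(j)}$ are precisely the surviving (i.e.\ $\ast$-set) variables of that witnessed clause, which again reduces to small conjunctions over the $q$ bits $\varrho_{i,\cdot}$; and (iii) that the restriction $\rho(\varrho,y)$ is consistent with $\pi$ up to that point. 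Taking a conjunction over the $\le \lceil t/\ell\rceil$ segments and the $\le Q$ clauses per CNF, $C_\pi$ is a depth-$3$ (AND of OR of AND, say) circuit of the claimed size; the $Q2^{O(kq)}$ term accounts for expressing ``first not-identically-$1$ clause'' over all clauses with the $q$-bit encoding of each variable, and the $n^{O(\ell)}$ term for the path-consistency checks.

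Then the argument concludes as in Trevisan--Xue. By the union-bound structure of the canonical construction, $\Prx_{\brho\leftarrow\calR_p}[\depth(\CCDT_\ell(\mathscr{F}\uhr\brho))\ge t] = \Prx[\exists \pi : C_\pi(\text{encoding of }\brho)=1] \le \sum_\pi \Prx[C_\pi=1]$, and Theorem~\ref{thm:HSL14-canonical} tells us the left side is at most $M^{\lceil t/\ell\rceil}(32pk)^t$ — actually I want to keep the per-path bound: I would note that H{\aa}stad's proof in fact gives $\sum_\pi \Prx_{\calR_p}[C_\pi=1] \le 16^{t+\ell}M^{\lceil t/\ell\rceil}(32pk)^t$ (the extra $16^{t+\ell}$ absorbing the $2^{O(\ell+t)}$ overcounting of full paths vs.\ the clean statement). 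Now for each $\pi$, since $C_\pi$ is a depth-$3$ circuit of size $M(n^{O(\ell)}+Q2^{O(kq)})$ and $\calD$ $(\delta/(M^{\lceil t/\ell\rceil}n^{O(t)}))$-fools this class, $|\Prx_{(\boldeta,\bz)\leftarrow\calD}[C_\pi=1] - \Prx_{\calR_p}[C_\pi=1]| \le \delta/(M^{\lceil t/\ell\rceil}n^{O(t)})$. Summing over the $\le M^{\lceil t/\ell\rceil}n^{O(t)}$ candidate paths $\pi$ gives total additive error $\delta$, yielding $\Prx_{(\boldeta,\bz)\leftarrow\calD}[\depth(\CCDT_\ell(\mathscr{F}\uhr\rho(\boldeta,\bz)))\ge t] \le 16^{t+\ell}M^{\lceil t/\ell\rceil}(32pk)^t + \delta$, as desired.

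The main obstacle I anticipate is carefully setting up the circuit $C_\pi$: one must pin down exactly what auxiliary combinatorial data indexes a ``candidate construction'' $\pi$ (the sequence of clause-indices, the identities of queried variables, and the assignment along each segment), verify that the \emph{set} of such $\pi$'s that succeed for a given $\rho$ has a nonempty member precisely when $\depth(\CCDT_\ell(\mathscr{F}\uhr\rho))\ge t$ (this uses the Observation after Definition~\ref{def:full} and the ``every canonical tree is deep'' strengthening noted in Section~\ref{sec:ccdt}), and count the $\pi$'s precisely enough to match the $n^{O(t)}$ and $16^{t+\ell}$ factors. The depth-$3$ bound is the other delicate point — one has to confirm that ``$C$ is the first not-identically-$1$ clause'' and the survival/consistency predicates all compose into depth $3$ rather than $4$; getting the constant-depth bookkeeping exactly right (and matching the ``one quantifier more than a switching lemma'' heuristic that gives depth $3$ here rather than depth $2$) is where the real care is needed.
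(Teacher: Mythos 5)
Your overall plan is the paper's: enumerate candidate traversals, build a small constant-depth circuit that recognizes each one from the bitstring encoding of the restriction, fool each circuit individually, and transfer back to the fully random lemma. But there is a genuine gap at the step the whole argument hinges on, and it interacts with an omission in your circuit. The checks you list for $C_\pi$ (the named clause is the first not-identically-$1$ clause, the queried variables are exactly the survivors of that clause, consistency with $\pi$) certify only that the segments $\eta^{(j)}$ are full paths in the CDTs of the \emph{selected} formulas $F_{i_1},\dots,F_{i_u}$; they do not certify that every \emph{skipped} formula $F_i$ (index not among the $i_j$'s) satisfies $\depth(\CDT(F_i\uhr \rho\circ\pi^{(1)}\circ\cdots))\le \ell$. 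Without that, $C_\pi=1$ does not imply that $\pi$ occurs in any tree of $\CCDT_\ell(\mathscr{F}\uhr\rho)$ --- the canonical construction of Definition~\ref{def:ccpdt} would have branched on an earlier deep formula, on different variables, and the tree it builds may well have depth $<t$ --- so your events $\{C_\pi=1\}$ are not subevents of the bad event. Consequently the inequality you lean on, that ``H{\aa}stad's proof in fact gives $\sum_\pi \Prx_{\brho\leftarrow\calR_p}[C_\pi(\text{enc}(\brho))=1]\le 16^{t+\ell}M^{\lceil t/\ell\rceil}(32pk)^t$,'' neither follows from the statement of Theorem~\ref{thm:HSL14-canonical} (which bounds only the probability of the bad event, i.e.\ of an OR of \emph{exact} recognizers) nor can it be recovered by a bounded-multiplicity count relative to the bad event, since your events are not contained in it. You assert this key bound rather than prove it.

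The paper closes exactly this gap in two moves. First, Claim~\ref{claim:subcircuit} makes each $\calC_P$ an \emph{exact} recognizer of ``$P$ occurs in $\CCDT_\ell(\mathscr{F}\uhr\rho(\varrho,y))$'' by AND-ing, for each of the $M-u$ non-selected formulas, a subcircuit verifying $\depth(\CDT(F_i\uhr\cdots))\le\ell$; this universally quantified check (an AND over the $(2n)^{\ell+1}$ candidate paths of negated CNFs) is precisely the ``extra quantifier'' you were looking for, and it is what forces and justifies the depth-$3$, size-$M(n^{O(\ell)}+Q2^{O(kq)})$ bound --- the checks you describe are all CNF-type, i.e.\ depth $2$. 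Second, Observation~\ref{obs:few-sat} shows that each bad restriction satisfies at most $16^{t+\ell}$ of the exact $\calC_P$'s (at most $2^{\ell+k}$ witnessing full paths and $2^{\ell+k}$ assignments per level, over $\lceil t/\ell\rceil$ levels), which gives $\sum_P \Ex_{\calU}[\calC_P]\le 16^{t+\ell}\Prx_{\brho\leftarrow\calR_p}[\depth(\CCDT_\ell(\mathscr{F}\uhr\brho))\ge t]$, after which Theorem~\ref{thm:HSL14-canonical} is invoked as a black box. (Your weaker, canonicity-free circuits could in principle be salvaged by reproving the fully random lemma in per-traversal-sum form via the encoding/decoding argument of Appendix~\ref{sec:HSL14-proof}, whose decoder indeed never uses canonicity of the formula choice; but that is a re-derivation you would have to carry out, not a remark.)
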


\subsection{Bad restrictions and the structure of witnessing paths} 
\label{sec:bad-restrictions}

Fix $\mathscr{F} = (F_1,\ldots,F_M)$.  We say that a restriction $\rho \in \{0,1,\ast\}^n$ is \emph{bad}  if 
\[ \depth(\CCDT_{\ell}(\mathscr{F}\uhr\rho)) \ge t. \] 
Fix $\rho$ to be a bad restriction.
Recalling our definition of the set of canonical common partial decision trees (Definition~\ref{def:ccpdt}), there exists a tree $T \in \CCDT_\ell(\mathscr{F} \uhr \rho)$ and a path $\Pi$ of length exactly $t$ through $T$. Furthermore, we have that 
\begin{enumerate} 
\item There exist indices $1\le i_1 \le i_2 \le \cdots \le i_u \le M$ where $u \le \lceil t/\ell\rceil$, and 
\item $\Pi = \pi^{(1)} \circ \cdots \circ \pi^{(u)}$, where for all $j\in [u]$, we have that $\supp(\pi^{(j)}) =  \supp(\eta^{(j)})$ where  $\eta^{(j)}$ is a path through the canonical decision tree 
\[ \CDT(F_{i_j} \uhr \rho \circ \pi^{(1)}\circ\cdots\circ \pi^{(j-1)}).\]  
Furthermore, for every $j\in [u-1]$ we have that $\eta^{(j)}$ is a full path of length between $\ell+1$ and $\ell+k$ through the CDT, and $\eta^{(u)}$ is a path of length exactly $t - \sum_{j=1}^{u-1} |\supp(\eta^{(j)})|$. (Note that $\eta^{(u)}$ is not necessarily a full path.)

\end{enumerate}  
(Note that by (2), these subpaths $\pi^{(j)}$ of $\Pi$ are supported on mutually disjoint sets of coordinates.) With this structure of  $\Pi$ in mind, we make the following definition: 

\begin{definition}[$\mathscr{F}$-traversal]
\label{def:F-traversal}
Let $\mathscr{F} = (F_1,\ldots,F_M)$ be an ordered list of CNFs.  An \emph{$\ell$-segmented $\mathscr{F}$-traversal of length $t$} is a tuple $P = (\mathscr{I}, \{S_1,\ldots,S_u\},\Pi,{\mathrm{H}})$  comprising: 
\begin{enumerate}
\item An ordered list of indices $\mathscr{I} = (i_1, \ldots, i_u)$ where $1\le i_1 \le \cdots \le i_u \le M$ and $u \le \lceil t /\ell \rceil$,
\item For each index $i_j \in \mathscr{I}$,  a subset $S_j \sse [n]$ such that 
\begin{enumerate}
\item These sets are mutually disjoint: $S_j \cap S_{j'} = \emptyset$ for all $j\ne j'$.
\item For $1\le j \le u-1$, each $S_j$ has size between $\ell+1$ and $\ell+k$, and $S_u$ has size exactly $t - \sum_{j=1}^{u-1} |\supp(\eta^{(j)})|$.
\end{enumerate} 
(Consequently $|S_1\cup\cdots\cup S_u| =t$.)
\item An assignment $\Pi = \pi^{(1)}\circ\cdots\circ\pi^{(u)}$ to the variables in $S_1 \cup \cdots \cup S_u$, where 
\[ \pi^{(j)} : \zo^{S_j} \to \zo \qquad \text{for $1\le j\le u$.} \] 
\item An assignment $\mathrm{H} = \eta^{(1)}\circ \cdots\circ\eta^{(u)}$ to the variables in $S_1 \cup \cdots \cup S_u$, where again
\[ \eta^{(j)} : \zo^{S_j} \to \zo \qquad \text{for $1\le j\le u$.} \] 
\end{enumerate} 
\end{definition} 

By our discussion above, for any restriction $\rho \in \{0,1,\ast\}^n$ and any tree $T \in \CCDT_\ell(\mathscr{F} \uhr \rho)$, every path $\Pi$ of length $t$ through $\CCDT_{\ell}(\mathscr{F}\uhr \rho)$ uniquely induces an $\ell$-segmented $\mathscr{F}$-traversal $P$ of length $t$.  We say that \emph{$P$ occurs in $\CCDT_{\ell}(\mathscr{F}\uhr\rho)$} if it is induced by some path $\Pi$ of length $t$ through $T$ for some $T \in \CCDT_\ell(\mathscr{F} \uhr \rho)$.

Definition~\ref{def:F-traversal} immediately yields the following: 

\begin{proposition}[Number of $\mathscr{F}$-traversals]
\label{prop:number-of-traversals} 
Fix an ordered list $\mathscr{F} = (F_1,\ldots,F_M)$ of $k$-CNFs, and let $\mathcal{P}_{\mathscr{F},\ell,t}$ denote the collection of all $\ell$-segmented $\mathscr{F}$-traversals of length $t$. Then 
\[ |\mathcal{P}_{\mathscr{F},\ell,t}| \le  M^{\lceil t/\ell \rceil} n^{O(t)}. \] 
\end{proposition}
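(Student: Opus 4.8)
The plan is to count the four components of an $\ell$-segmented $\mathscr{F}$-traversal $P = (\mathscr{I}, \{S_1,\dots,S_u\}, \Pi, \mathrm{H})$ one at a time, using the crude bound $u \le \lceil t/\ell\rceil =: U$ throughout and absorbing all $\poly(n)^{O(t)}$-type factors into the single $n^{O(t)}$ term. First I would handle the index list $\mathscr{I} = (i_1,\dots,i_u)$: since $u \le U$ and each $i_j \in [M]$, there are at most $\sum_{v=0}^{U} M^v \le (U+1) M^U \le M^{O(U)}$ choices; it is cleaner to note that the number of non-decreasing sequences of length at most $U$ from $[M]$ is at most $M^U \cdot (U+1) \le M^{U+1}$, which is $M^{\lceil t/\ell\rceil}\cdot M^{O(1)}$. (One should double-check whether the intended bound absorbs the $M^{O(1)}$ into $M^{\lceil t/\ell\rceil}$ or into $n^{O(t)}$; since $M \le n^{O(1)}$ is \emph{not} assumed for general $\mathscr{F}$, I would phrase the final bound as $M^{\lceil t/\ell\rceil}\cdot M^{O(1)}\cdot n^{O(t)}$ and observe $M^{O(1)}$ is negligible, or simply track it inside the $n^{O(t)}$ only if $M$ is polynomial in $n$ — the statement as written, $M^{\lceil t/\ell\rceil} n^{O(t)}$, suggests the $M^{O(1)}$ overhead is meant to be swallowed, and indeed $M^{O(1)} \le 2^{O(t)} \cdot (\text{something})$ only if $M$ is at most exponential in $t$; the safest reading is that in all applications $M \le n^{O(1)}$, so I will just say $M^{O(1)} = n^{O(1)} = n^{O(t)}$.)

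Next I would bound the choice of the ordered tuple of disjoint subsets $(S_1,\dots,S_u)$ and of the two assignments $\Pi,\mathrm{H}$ supported on $S := S_1 \cup \dots \cup S_u$. Since $|S| = t$ and the $S_j$ are disjoint, choosing $(S_1,\dots,S_u)$ amounts to choosing a size-$t$ subset of $[n]$ together with an ordered set partition of it into $u$ blocks; the number of size-$t$ subsets is $\binom{n}{t} \le n^t$, and the number of ways to partition $t$ labeled elements into at most $U \le t$ ordered blocks is at most $U^t \le t^t \le n^{O(t)}$ (using $t \le n$). For the assignments: $\Pi$ is a Boolean assignment to the $t$ variables in $S$, so there are at most $2^t$ choices, and likewise $\mathrm{H}$ contributes at most $2^t$; both are $n^{O(t)}$. (Here I am using the observation that once the supports $S_j$ are fixed, $\pi^{(j)}:\zo^{S_j}\to\zo$ is determined by a single bit per variable, i.e.\ an element of $\zo^{S_j}$ — strictly speaking $\pi^{(j)}$ as written is a function $\zo^{S_j}\to\zo$, but in context it is the constant value recording the path outcome on those variables, so it is just a bitstring of length $|S_j|$; I would note this parsing explicitly to avoid the spurious $2^{2^{|S_j|}}$ overcount.)

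Multiplying the four bounds gives $|\mathcal{P}_{\mathscr{F},\ell,t}| \le M^{\lceil t/\ell\rceil} \cdot M^{O(1)} \cdot n^{O(t)} \cdot 2^{O(t)} \le M^{\lceil t/\ell\rceil} n^{O(t)}$, as claimed. The only real subtlety — and the step I'd flag as the main obstacle — is the correct reading of the objects $\pi^{(j)}$ and $\eta^{(j)}$ as length-$|S_j|$ bitstrings rather than as arbitrary functions $\zo^{S_j}\to\zo$ (which would blow the count up to doubly-exponential and break the bound); once that is pinned down from Definition~\ref{def:F-traversal} and the surrounding discussion in Section~\ref{sec:bad-restrictions}, everything else is a routine product of elementary binomial and power-of-two estimates, all dominated by $n^{O(t)}$. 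I would also remark that the $M^{O(1)}$ slack from counting variable-length index tuples is harmless since in every application $M \le \poly(n)$.
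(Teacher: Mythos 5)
Your count is correct and is exactly the direct enumeration the paper intends (it treats the bound as immediate from Definition~\ref{def:F-traversal}): at most $M^{\lceil t/\ell\rceil}$ choices of the index list, $n^{O(t)}$ choices of the disjoint supports, and $2^t$ choices each for the bitstrings $\Pi$ and $\mathrm{H}$, which are indeed assignments (length-$|S_j|$ strings), as you correctly parse. The only remark is that the $M^{O(1)}$ slack you flag is an artifact of bounding $(U+1)M^{U}$ by $M^{U+1}$: since $\sum_{u\le U} M^{u}\le 2M^{U}$ for $M\ge 2$ (or simply $U+1\le t+1\le n^{O(t)}$), the overhead is absorbed into $n^{O(t)}$ without any assumption that $M\le\poly(n)$.
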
 


\subsection{A small $\acz$ circuit for recognizing bad restrictions} \label{sec:small-acz-bad}

We begin by showing that for every $\mathscr{F}$-traversal $P = (\mathscr{I}, \{S_1,\ldots,S_u\}, \Pi,{\mathrm{H}})$, there is a small circuit $\calC_P$ over $\zo^{Y_q}$ that outputs $1$ on input $(\varrho,y)\in \zo^{Y_q}$ iff $P$ occurs in $\CCDT_\ell(\mathscr{F}\uhr \rho(\varrho,y))$.  Since
\begin{align*}
\text{$\rho(\varrho,y)$ is bad} &\Longleftrightarrow \depth(\CCDT_\ell(\mathscr{F} \uhr \rho(\varrho,y))) \ge t \\
&\Longleftrightarrow \text{$\exists$ $\ell$-segmented $\mathscr{F}$-traversal $P$ of length $t$ occurring in $\CCDT_\ell(\mathscr{F}\uhr\rho(\varrho,y))$},
\end{align*}
by considering  
\begin{equation}
   \calC_{\mathscr{F},\ell,t}(\varrho,y) := \bigvee_{P\in\calP_{\mathscr{F},\ell,t}} \calC_P(\varrho,y) \label{eq:big-or} 
   \end{equation} 
we have that 
\[ \text{$\rho(\varrho,y)$ is bad $\Longleftrightarrow \calC_{\mathscr{F},\ell,t}(\varrho,y) = 1$.}
 \] 
\begin{claim}[Circuit for a single $\mathscr{F}$-traversal]
\label{claim:subcircuit} 
Let $P = (\mathscr{I}, \{S_1,\ldots,S_{u}\}, \Pi, {\mathrm{H}})$ be an $\ell$-segmented $\mathscr{F}$-traversal of length $t$. There is a depth-$3$ AND-OR-AND circuit $\calC_P : \zo^{Y_q} \to \zo$ of size $M(n^{O(\ell)}+Q2^{O(kq)})$ such that
\[ \forall\,(\varrho,y)\in \zo^{Y_q}\colon\ \ \calC_P(\varrho,y) =1 \Longleftrightarrow \text{$P$ occurs in $\CCDT_\ell(\mathscr{F}\uhr\rho(\varrho,y))$} \] 
\end{claim}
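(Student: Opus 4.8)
The plan is to build the circuit $\calC_P$ by directly transcribing the combinatorial conditions under which a given $\ell$-segmented $\mathscr{F}$-traversal $P = (\mathscr{I}, \{S_1,\ldots,S_u\}, \Pi, \mathrm{H})$ occurs in $\CCDT_\ell(\mathscr{F}\uhr\rho(\varrho,y))$, and then verifying that each condition is expressible as a small AND-OR-AND (or simpler) formula in the bits $(\varrho,y)$, so that the overall conjunction of these conditions collapses to depth $3$. First I would unpack, from Definition~\ref{def:ccpdt} and the structure described in Section~\ref{sec:bad-restrictions}, exactly what "$P$ occurs" means as an explicit predicate: for each $j \in [u]$, writing $\rho_j := \rho(\varrho,y) \circ \pi^{(1)} \circ \cdots \circ \pi^{(j-1)}$ for the restriction reached after traversing the first $j-1$ segments along the $\Pi$-branch, the following must hold: (i) every clause $F_1,\dots,F_{i_j - 1}$ strictly before $F_{i_j}$ in $\mathscr{F}$ already has $\CDT$-depth $\le \ell$ under $\rho_j$ (so that the recursion in case~1 of Definition~\ref{def:ccpdt} skips past them); (ii) $F_{i_j}\uhr\rho_j$ has $\CDT$-depth $> \ell$ and $\mathrm{H}$ restricted to $S_j$, namely $\eta^{(j)}$, is a valid \emph{full} witnessing path of the correct length through $\CDT(F_{i_j}\uhr\rho_j)$; and (iii) $S_j = \supp(\eta^{(j)})$ is exactly the set of variables queried along that path. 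The key point enabling small depth is that whether a particular sequence of variable-queries-with-specified-answers is a path in a CDT depends only on which clauses are "the first not-identically-$1$ clause" at each stage, which in turn is a local condition on the relevant clauses and the (constantly many, since width $k$) variables they involve.

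Next I would assemble the circuit. Reading off variable $x_i$ under $\rho(\varrho,y)$ costs a small constant-depth gadget in the bits $\varrho_{i,1},\dots,\varrho_{i,q}$ and $y_i$ (Definition~\ref{def:string-to-restriction}): "$x_i$ is $\ast$" is the AND of the $q$ bits $\varrho_{i,j}$, and "$x_i = b$" is $(\neg\bigwedge_j \varrho_{i,j}) \wedge (y_i = b)$, each a depth-$2$ formula of size $O(q)$. Since each of the quantities we care about along the $\Pi$- and $\mathrm{H}$-branches is determined once we know, for the $O(t)$ variables in $S_1\cup\cdots\cup S_u$ plus the variables appearing in the clauses we must inspect, whether they are live/$0$/$1$, each individual condition (i)--(iii) for a fixed $j$ and fixed clause becomes a constant-depth ($\le 3$) formula whose bottom fan-in is $O(kq)$ (reading off a clause's $k$ variables) or $O(\ell)$ (checking a stage of the witnessing path, which queries $\le \ell + k$ variables). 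The condition (i) for all clauses before $F_{i_j}$ is a conjunction over $\le M$ clauses, each of size $Q \cdot 2^{O(kq)}$ (a depth-$\le 3$ test that a $Q$-clause $k$-CNF, under a restriction determined by $O(\ell)$ already-queried variables and the bits $(\varrho,y)$, has shallow CDT — this is the $\acz$-circuit-for-shallow-CDT idea from~\cite{TX13}, expressed over the $\le 2^{O(\ell)}$ possible CDT paths and $Q$ clauses). Conjoining over all $j \le \lceil t/\ell\rceil$ and all the conditions gives a formula of the claimed size $M(n^{O(\ell)} + Q 2^{O(kq)})$; I would then invoke the standard fact that an AND of ANDs is an AND, an OR of ORs is an OR, and a depth-$3$ AND-OR-AND remains AND-OR-AND when we AND several of them together (merging top layers), so the whole thing collapses to depth $3$ of the stated type after absorbing the $O(q)$, $O(kq)$, $O(\ell)$-size leaf gadgets into the bottom layer.

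The main obstacle I anticipate is bookkeeping the depth and size so that everything genuinely fits in depth $3$ rather than, say, depth $4$ or $5$ — in particular, the "shallow CDT" test for the skipped-over clauses (condition (i)) and the "valid full witnessing path" test (condition (ii)) each naturally want a couple of alternations, and I have to be careful that when these are conjoined with the path-consistency conditions the top layers merge correctly and the bottom fan-in stays $O(\ell + kq)$. The resolution is the one hinted at in the paper's narrative: the shallow-CDT predicate for a single $k$-CNF is computable by a \emph{small depth-$3$} circuit (enumerate the $\le 2^{O(\ell)}$ candidate deep paths as a DNF, and for each, a CNF over the $\le Q$ clauses certifying it is forced — this is AND-OR-AND with the right orientation), and since the witnessing-path validity for $F_{i_j}$ is likewise "some CDT path of length $\ell{+}1$ to $\ell{+}k$ is forced and equals $\eta^{(j)}$" it has the same shape; conjoining a bounded number of depth-$3$ AND-OR-AND formulas (and OR-ing nothing extra at the top, since $P$ is fixed) stays depth $3$. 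I would also need to double-check that reading $\rho_j(i)$ for the skipped clauses — which depends on $\pi^{(1)},\dots,\pi^{(j-1)}$, i.e.\ on $\mathrm{H}$ and $\Pi$, which are \emph{fixed by $P$} — really does reduce to reading only $O(t) = O(\ell \cdot \lceil t/\ell\rceil)$ of the $(\varrho,y)$ bits for the "already-queried" variables plus the $O(kq)$ bits for the clause's own variables, so that the claimed bottom fan-in holds; this is where fixing $P$ in advance (rather than quantifying over it, which is what the outer OR in~\eqref{eq:big-or} does) is doing the real work.
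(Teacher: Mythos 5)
Your overall route is the same as the paper's: fix the traversal $P$, transcribe ``$P$ occurs'' into per-formula local conditions over the encoded restriction $(\varrho,y)$ --- a path-membership check for each $F_{i_j}$ (the paper's ``first type'' subcircuit, a width-$O(kq)$ CNF built clause-by-clause) and a shallow-$\CDT$ check for the formulas that get skipped (the paper's ``second type'', obtained by writing the \emph{deep}-$\CDT$ predicate as an OR over candidate paths of path-membership CNFs and negating via de Morgan) --- and then merge the top AND layers so the whole thing is AND-OR-AND of the claimed size. Minor quibbles aside (there are $(2n)^{\ell+1}=n^{O(\ell)}$ candidate deep paths, not $2^{O(\ell)}$, since a path specifies which variables are queried and not just the answers; and you gloss the special handling of the last, not-necessarily-full segment $\eta^{(u)}$), this matches the paper.

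However, there is one genuine gap: your condition (i) checks, at every stage $j$, that \emph{all} formulas $F_1,\dots,F_{i_j-1}$ have $\CDT$-depth $\le \ell$ under the current restriction $\rho_j=\rho(\varrho,y)\circ\pi^{(1)}\circ\cdots\circ\pi^{(j-1)}$. That is not what Definition~\ref{def:ccpdt} requires. In case~1 of that definition a formula found to have a shallow $\CDT$ is \emph{dropped from the family permanently}; the recursion in case~2 is on the current (already pruned) family, so a formula skipped at an earlier stage is never re-examined under later restrictions. Accordingly, the paper's circuit certifies each index $i\notin\mathscr{I}$ exactly once, namely that $\depth(\CDT(F_i\uhr\rho(\varrho,y)\circ\pi^{(1)}\circ\cdots\circ\pi^{(i^-)}))\le\ell$ with $i^-=\max\{j: i_j<i\}$, i.e.\ under the restriction in force at the moment $F_i$ is passed over. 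Your predicate is strictly stronger unless $\CDT$-depth is monotone non-increasing under further restrictions --- a statement you neither flag nor prove, which is not supplied by the paper, and which is far from obvious for canonical decision trees (restricting can change which clause is ``first not identically $1$'' and hence the entire tree). If that monotonicity fails, your circuit can output $0$ on inputs where $P$ does occur, breaking the ``$\Longleftarrow$'' of the claimed equivalence; this is precisely the direction the derandomization needs, since the proof of Theorem~\ref{thm:derand-H14} bounds the pseudorandom failure probability by $\Ex_\calD[\calC_{\mathscr{F},\ell,t}]$ and therefore requires that every bad restriction be accepted by the big OR in~(\ref{eq:big-or}). The fix is exactly the paper's bookkeeping: attach one shallow-$\CDT$ subcircuit per skipped formula, evaluated under the restriction current at the stage it is skipped (the $i^-$ device), rather than re-checking previously dropped formulas at every later stage.
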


\begin{proof}
Our circuit $\calC_P$ will be the AND of $M$ many depth-$3$ subcircuits of size $n^{O(\ell)}$, one for each $k$-CNF $F \in \mathscr{F}$. As we will explain later, each of these subcircuits is one of two types. We first describe these two types of ``candidate subcircuits'', and then explain precisely which $M$ subcircuits of each type are AND-ed together to give $\calC_P$. (Both these types of circuits are implicit in the work of~\cite{TX13}.) 

\begin{enumerate}
\item {\bf First type: Circuits  checking that a particular restriction $\eta$ is a path in a particular CDT.}  We claim that for any $Q$-clause $k$-CNF $F' = C_1 \wedge \cdots \wedge C_Q$ and restriction $\eta$, there is a $Q2^{O(kq)}$-clause $O(kq)$-CNF $G$ over $\zo^{Y_q}$ that outputs $1$ on input $(\varrho,y)$ iff $\eta$ is a path in $\CDT(F' \uhr \rho(\varrho,y))$. 

For each $i\in [Q]$, we write $\Fixed_i$ to denote the set 
\[ \{ j \in [n] \colon \text{$j \in \eta^{-1}(\{0,1\})$ and $x_j$ occurs in $C_i$}\}\]
of all variables that are fixed by $\eta$ and occur in $C_i$.  We write  $\sigma^{(i)} \in \zo^{\Fixed_i}$ to denote $\eta$ restricted to the coordinates in $\Fixed_i$.  It is straightforward to verify that $\eta$ is a path in $\CDT(F'\uhr \rho(\varrho,y))$ iff for all $i\in [Q]$ such that $\Fixed_1 \cup \cdots \cup \Fixed_{i-1} \subsetneq \supp(\eta)$, 
\begin{enumerate}
\item If $\Fixed_i \setminus (\Fixed_1 \cup \cdots \cup \Fixed_{i-1}) = \emptyset$ then the clause $C_i$ is satisfied by $\rho(\varrho,y) \circ \sigma^{(1)} \circ \cdots \circ \sigma^{(i-1)}$. (Hence this clause does not contribute to $\CDT(F' \uhr \rho(\varrho,y))$; it is ``skipped" in the canonical decision tree construction process.) 
\item Otherwise, writing $\Fixed_i' := \Fixed_i \setminus (\Fixed_1 \cup \cdots \cup \Fixed_{i-1})$, 
\begin{enumerate}
\item $\rho(\varrho,y)_j = \ast$ for all $j \in \Fixed_i'$, and 
\item $\rho(\varrho,y) \circ \sigma_1 \circ \cdots \circ \sigma_{i-1}$ falsifies all the remaining literals in $C_i$ and are not in $\Fixed_i'$.
\end{enumerate} 
In other words, the clause  
\[ C_i \uhr \rho(\varrho,y) \circ \sigma^{(1)} \circ \cdots\circ \sigma^{(i-1)} \]
is not satisfied and its surviving variables are precisely those in $\Fixed_i'$. (Hence the variables in $\Fixed_i'$ are exactly those queried by the canonical decision tree construction process when it reaches $C_i$.) 
\end{enumerate} 
Since both conditions (a) and (b) depend only on the coordinates of $\rho(\varrho,y)$ that occur in $C_i$ (at most $k$ such coordinates since $C_i$ has width at most $k$), and hence at most $k(q+1)$ coordinates of $(\varrho,y) \in \zo^{Y_q}$, it is clear that both conditions can be checked by a $2^{O(kq)}$-clause $O(kq)$-CNF over $\zo^{Y_q}$.  The overall CNF $G$ is simply the AND of all $Q$ many of these CNFs, one for each clause $C_i$ of $F'$, and hence $G$ is itself a $Q2^{O(kq)}$-clause $O(kq)$-width CNF.

\item {\bf Second type: Circuits checking that a particular $\CDT$ has depth at most $\ell$.} Next, we claim that for every $Q$-clause $k$-CNF $F'$, there is a depth-3 AND-OR-AND circuit with fan-in sequence $((2n)^{\ell+1}, Q2^{O(kq)}, O(kq))$ that outputs $1$ on input $(\varrho,y)$ iff $\depth(\CDT(F' \uhr \rho(\varrho,y))) \le \ell$. 

We establish this by showing that there is a depth-$3$ OR-AND-OR circuit $\Sigma$ with the claimed fan-in sequence that outputs $1$ on input $(\varrho,y)$ if $\depth(\CDT(F' \uhr\rho(\varrho,y))) > \ell$; given such a circuit $\Sigma,$ the desired AND-OR-AND circuit is obtained by negating $\Sigma$ and using de Morgan's law. Certainly $\depth(\CDT(F'\uhr \rho(\varrho,y))) > \ell$ iff there is a path $\eta$ of length $\ell+1$ in $\CDT(F' \uhr\rho(\varrho,y))$.  There are at most $(2n)^{\ell+1}$ many possible paths of length $\ell+1$ (every path is simply an ordered list of literals), and as argued in (1) above, for every path $\eta$ there is a $Q2^{O(kq)}$-clause, $O(kq)$-CNF over $\zo^{Y_q}$ that checks if $\eta$ is a path in $\CDT(F'\uhr \rho(\varrho,y))$.  The overall circuit $\Sigma$ is simply the OR of at most $(2n)^{\ell+1}$ such circuits, one for each path $\eta$. 
\end{enumerate}

With these two types of circuits in hand the overall circuit $\calC_P$ is now easy to describe. $\calC_P$ is the AND of $M$ many depth-$3$ subcircuits, one for each $k$-CNF $F \in \mathscr{F}$: 
\begin{itemize}
\item For each of the $u$ indices $i_j \in \mathscr{I}$, a circuit of the first type that checks that $\eta^{(j)}$ is a path in $\CDT(F_{i_j} \uhr \rho(\varrho,y) \circ \pi^{(1)} \circ \cdots \circ \pi^{(j-1)})$ (recall from Definition~\ref{def:F-traversal} that $\eta^{(j)}$ is ${\mathrm{H}}$ restricted to the variables in $S_j$); 
\item For all $M- u$ other indices $i \in [M] \setminus \mathscr{I}$, a circuit of the second type that checks that $\depth(\CDT(F_i \uhr \rho(\varrho,y) \circ \pi^{(1)}\circ \cdots\circ\pi^{(i^{-})})) \le \ell$, where $i^{-} = \max \{ j \in [u] \colon i_j < i\}$.
\end{itemize} 
The bound on the size of this overall circuit follows from a union bound over the sizes of the subcircuits given in (1) and (2) above. 
\end{proof} 


\subsection{Putting the pieces together: Proof of Theorem~\ref{thm:derand-H14}}

Recalling the definition (\ref{eq:big-or}) of $\calC_{\mathscr{F},\ell,t}$, 
\[ \calC_{\mathscr{F},\ell,t}(\varrho,y) := \bigvee_{P\in\calP_{\mathscr{F},\ell,t}} \calC_P(\varrho,y), \] 
Proposition~\ref{prop:number-of-traversals} giving a bound on its top fan-in, and Claim~\ref{claim:subcircuit} giving a bound on the size of its subcircuits, we have shown the following: 

\begin{claim}[Circuit for recognizing bad restrictions]
\label{claim:overall-ckt}  
Let $\mathscr{F} = (F_1,\ldots,F_M)$ be an ordered list of $Q$-clause $k$-CNFs, and let $\ell,t\ge 1$.  There is a depth-$4$ circuit $\calC_{\mathscr{F},\ell,t}$ over $\zo^{Y_q}$ such that 
\[ \calC_{\mathscr{F},\ell,t}(\varrho,y) = 1 \quad \Longleftrightarrow \quad \depth(\CCDT_\ell(\mathscr{F} \uhr \rho(\varrho,y))) \ge t. \] 
This circuit $\calC_{\mathscr{F},\ell,t}$ is the OR of $M^u n^{O(t)}$ many depth-$3$ circuits of size $M(n^{O(\ell)} + Q2^{O(kq)})$. 
\end{claim}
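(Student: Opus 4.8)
The plan is to assemble ingredients that are already in place, so this will be a short ``putting the pieces together'' argument. First I would recall the chain of equivalences from Section~\ref{sec:bad-restrictions}: the restriction $\rho(\varrho,y)$ is bad, i.e.\ $\depth(\CCDT_\ell(\mathscr{F}\uhr\rho(\varrho,y)))\ge t$, if and only if some $\ell$-segmented $\mathscr{F}$-traversal $P$ of length $t$ occurs in $\CCDT_\ell(\mathscr{F}\uhr\rho(\varrho,y))$. The ``$\Rightarrow$'' direction holds because any length-$t$ path through a tree in the set $\CCDT_\ell(\mathscr{F}\uhr\rho(\varrho,y))$ induces, via the segment decomposition recorded in property~(2) of that section, a well-defined traversal in the sense of Definition~\ref{def:F-traversal}; the ``$\Leftarrow$'' direction holds because, unwinding the inductive Definition~\ref{def:ccpdt}, an occurring traversal exhibits an explicit length-$t$ path in some tree of the set, so its depth is $\ge t$. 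Consequently the circuit
\[
\calC_{\mathscr{F},\ell,t}(\varrho,y)\ :=\ \bigvee_{P\in\calP_{\mathscr{F},\ell,t}}\calC_P(\varrho,y)
\]
of~(\ref{eq:big-or}) satisfies $\calC_{\mathscr{F},\ell,t}(\varrho,y)=1\iff\depth(\CCDT_\ell(\mathscr{F}\uhr\rho(\varrho,y)))\ge t$, which is the first assertion.

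Next I would read off the structural parameters directly from the two preceding results. By Claim~\ref{claim:subcircuit}, every $\calC_P$ is a depth-$3$ AND-OR-AND circuit of size $M(n^{O(\ell)}+Q2^{O(kq)})$; since $\calC_{\mathscr{F},\ell,t}$ is the OR of such circuits, whose top gates are ANDs, it is an OR-AND-OR-AND circuit and hence has depth $4$. Its top fan-in is $|\calP_{\mathscr{F},\ell,t}|$, which Proposition~\ref{prop:number-of-traversals} bounds by $M^{\lceil t/\ell\rceil}n^{O(t)}$; since every traversal has $u\le\lceil t/\ell\rceil$, this gives the claimed bound on the number of depth-$3$ subcircuits. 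Combining, $\calC_{\mathscr{F},\ell,t}$ is the OR of at most $M^u n^{O(t)}$ depth-$3$ circuits of size $M(n^{O(\ell)}+Q2^{O(kq)})$, as desired.

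There is essentially no obstacle: Claim~\ref{claim:overall-ckt} is a bookkeeping consequence of Claim~\ref{claim:subcircuit} and Proposition~\ref{prop:number-of-traversals}. The only point that warrants a sentence of care is the first equivalence above --- verifying that ``$P$ occurs in $\CCDT_\ell(\mathscr{F}\uhr\rho)$'' is precisely the event that witnesses $\depth\ge t$, in both directions --- but this faithfulness of the path/traversal correspondence was already established in Section~\ref{sec:bad-restrictions} and needs no new work here.
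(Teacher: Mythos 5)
Your proposal is correct and matches the paper's own treatment: the paper establishes Claim~\ref{claim:overall-ckt} exactly by combining the equivalence ``bad $\Longleftrightarrow$ some $\ell$-segmented $\mathscr{F}$-traversal occurs'' with the definition (\ref{eq:big-or}), the depth-$3$ subcircuit bound of Claim~\ref{claim:subcircuit}, and the top fan-in bound of Proposition~\ref{prop:number-of-traversals}. Nothing further is needed.
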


The following observation will be useful for us: 

\begin{observation}
\label{obs:few-sat} 
Let $\mathscr{F} = (F_1,\ldots,F_M)$ be an ordered collection of $k$-CNFs.  For $\ell \geq k$, the total number of paths $\Pi$ such that $\Pi$ is a path of length exactly $t$ in some tree $T \in \CCDT_\ell(\mathscr{F})$ is at most $(2^{\ell+k} \cdot 2^{\ell + k})^{\lceil t/\ell\rceil} \le 16^{t+\ell}$. 
Consequently, if $(\varrho,y)\in \zo^{Y_q}$ is such that $\calC_{\mathscr{F},\ell,t}(\varrho,y) = 1$, then $\calC_P(\varrho,y) =1$ for (at least one) and at most $16^{t+\ell}$ many $\ell$-segmented $\mathscr{F}$-traversals $P$ of length $t$. 
\end{observation}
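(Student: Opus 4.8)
The plan is to prove the counting claim in the first sentence by unrolling Definition~\ref{def:ccpdt}, and then to deduce the second sentence from the fact that the path-to-traversal map is injective.

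\emph{Counting paths of length $t$.} Fix a tree $T \in \CCDT_\ell(\mathscr{F})$ and a root-to-node path $\Pi$ of length exactly $t$ in $T$. Tracing $\Pi$ through the recursive construction of Definition~\ref{def:ccpdt}, it decomposes as $\Pi = \pi^{(1)} \circ \cdots \circ \pi^{(u)}$, where $\pi^{(j)}$ assigns the variable set $S_j := \supp(\eta^{(j)})$ for a witnessing full path $\eta^{(j)}$ through an appropriate canonical decision tree (exactly the segment structure recorded in Section~\ref{sec:bad-restrictions}), the $S_j$ are disjoint with $\sum_j|S_j| = t$, and $|S_j| \ge \ell+1$ for $j < u$; the last of these forces $u \le \lceil t/\ell\rceil$. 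I would then bound the number of such paths by counting the choices made in building one: per segment $j$ there are at most $2^{\ell+k}$ choices for $\eta^{(j)}$ (a full witnessing path has length between $\ell+1$ and $\ell+k$, so it passes through one of the $\le 2^\ell$ nodes at depth $\ell$ and is then determined by $\le k$ further queries --- this is exactly the count already asserted in case~3 of Definition~\ref{def:ccpdt}), which pins down $S_j$, followed by $2^{|S_j|}$ choices for the assignment $\pi^{(j)}$. The point that makes the final bound come out right is that the assignment factors telescope: $\prod_{j=1}^u 2^{|S_j|} = 2^{\sum_j|S_j|} = 2^t$, independently of $u$. Hence the number of length-$t$ paths is at most (up to a harmless constant absorbed when summing over $u$) $\;(2^{\ell+k})^{\lceil t/\ell\rceil}\cdot 2^t \le \big(2^{\ell+k}\cdot 2^{\ell+k}\big)^{\lceil t/\ell\rceil}$, using $2^t \le (2^{\ell+k})^{\lceil t/\ell\rceil}$, and a routine calculation with $\ell \ge k$ and $\lceil t/\ell\rceil \le t/\ell + 1$ bounds this by $16^{t+\ell} = 2^{4(t+\ell)}$.

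\emph{Deducing the consequence.} Fix $(\varrho,y)$. By Claim~\ref{claim:subcircuit}, $\calC_P(\varrho,y) = 1$ exactly when $P$ occurs in $\CCDT_\ell(\mathscr{F}\uhr\rho(\varrho,y))$, i.e.\ exactly when some length-$t$ path through a tree of $\CCDT_\ell(\mathscr{F}\uhr\rho(\varrho,y))$ induces $P$. Since an $\mathscr{F}$-traversal $P = (\mathscr{I}, \{S_1,\ldots,S_u\}, \Pi, \mathrm{H})$ records $\Pi$ as its third coordinate, distinct paths induce distinct traversals, so the number of $P$ with $\calC_P(\varrho,y) = 1$ is at most the number of length-$t$ paths through trees in $\CCDT_\ell(\mathscr{F}\uhr\rho(\varrho,y))$. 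As $\mathscr{F}\uhr\rho(\varrho,y)$ is again an ordered collection of CNFs of width at most $k \le \ell$, applying the first part to this family bounds that number by $16^{t+\ell}$. For ``at least one'': if $\calC_{\mathscr{F},\ell,t}(\varrho,y) = 1$ then $\depth(\CCDT_\ell(\mathscr{F}\uhr\rho(\varrho,y))) \ge t$ by Claim~\ref{claim:overall-ckt}, so some tree in $\CCDT_\ell(\mathscr{F}\uhr\rho(\varrho,y))$ has a path of length $\ge t$; truncating it to length $t$ yields a traversal $P$ that occurs, hence $\calC_P(\varrho,y) = 1$.

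\emph{Anticipated difficulty.} Essentially all the content is in the first part, and there the only places demanding care are (i) reading the segment decomposition and the ``$\le 2^{\ell+k}$ choices per segment'' off Definition~\ref{def:ccpdt} --- in particular noting that the final segment $\eta^{(u)}$ need not be a full witnessing path but is still determined by one together with the forced length $t - \sum_{j<u}|S_j|$ --- and (ii) the telescoping identity $\prod_j 2^{|S_j|} = 2^t$, which is what keeps the overall bound at $16^{t+\ell}$ rather than something like $(2^{\ell+k})^{2\lceil t/\ell\rceil}$ with no real gain. Everything else is bookkeeping.
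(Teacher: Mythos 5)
Your proposal is correct and follows essentially the same route as the paper's proof: inspect the recursive construction of $\CCDT_\ell(\mathscr{F})$, allow at most $2^{\ell+k}$ choices of witnessing full path and at most $2^{\ell+k}$ assignments per level, and use that there are at most $\lceil t/\ell\rceil$ levels, with the ``consequently'' part following because each occurring traversal is induced by one of the counted choice-sequences. Your telescoping observation $\prod_j 2^{|S_j|}=2^t$ is a mild tightening of the paper's cruder per-level assignment bound but is then relaxed back to the identical $(2^{\ell+k}\cdot 2^{\ell+k})^{\lceil t/\ell\rceil}\le 16^{t+\ell}$ estimate, so the arguments coincide in substance.
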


\begin{proof}
This follows by inspection of the recursive construction of the set $\CCDT_\ell(\mathscr{F})$ of canonical common $\ell$-partial decision trees for $\mathscr{F}$.  Each time case (2) of the definition is reached, the set $P$ of witnessing full paths has size at most $2^{\ell + k}$, and for each path in $P$ there are at most $2^{\ell + k}$ possible assignments to the variables on the path.  Finally, there are at most $\lceil t/\ell \rceil$ levels of recursive calls. 
\end{proof}

With Claim~\ref{claim:overall-ckt} and Observation~\ref{obs:few-sat} in hand, we are now ready to prove our main result of this section (Theorem~\ref{thm:derand-H14}), a derandomized version of the multi-switching lemma (Theorem~\ref{thm:HSL14-canonical}). We restate Theorem~\ref{thm:derand-H14} here for the reader's convenience:

\begin{reptheorem}{thm:derand-H14} 
Let $\mathscr{F} = (F_1,\ldots,F_M)$ be an ordered list of $Q$-clause $k$-CNFs. Let $\delta,p \in (0,1)$ and define $q = \log(1/p)$. Let $\calD$ be any distribution over $\zo^{Y_q}$ that $(\delta/(M^{\lceil t/\ell\rceil} n^{O(t)}))$-fools the class of depth-$3$ circuits of size $M(n^{O(\ell)} + Q  2^{O(kq)})$.  Then for all $\ell \geq k$ and all $t\in \N$,
 \[ \Prx_{(\boldeta,\bz)\leftarrow\calD}\big[ \depth(\CCDT_{\ell}(\mathscr{F}\uhr\rho(\boldeta,\bz))) \ge t\big] \le 16^{t+\ell}M^{\lceil t/\ell\rceil} (32pk)^t + \delta. \]
\end{reptheorem}

\begin{proof} 
\begin{align*}
&\Prx_{(\boldeta,\bz)\leftarrow\calD}\big[\,\depth(\CCDT_\ell(\mathscr{F}\uhr\rho(\boldeta,\bz))) \ge t\,\big] \\ 
&= \Ex_{(\boldeta,\bz)\leftarrow\calD}\big[\,\calC_{\mathscr{F},\ell,t}(\boldeta,\bz)\,\big] \tag*{(Claim~\ref{claim:overall-ckt})}\\
&\le \sum_{P \in \calP_{\mathscr{F},\ell,t}}\Ex_{(\boldeta,\bz)\leftarrow\calD}\big[\,\calC_P(\boldeta,\bz)\,\big]  \tag*{(union bound)} \\
&\le \sum_{P \in \calP_{\mathscr{F},\ell,t}}\left(\Ex_{(\bvrho,\by)\leftarrow\calU}[\,\calC_P(\bvrho,\by) \,] + \frac{\delta}{M^{\lceil t/\ell\rceil} n^{O(t)}}\right) \tag*{($\calD$ $(\delta/(M^{\lceil t/\ell\rceil} n^{O(t)}))$-fools $\calC_P$)} \\
&\le \delta + \Ex_{(\bvrho,\by)\leftarrow\calU}\left[\sum_{P \in \calP_{\mathscr{F},\ell,t}}\,\calC_P(\bvrho,\by) \,\right] \tag*{(Proposition~\ref{prop:number-of-traversals} 
)} \\
&\le \delta + 16^{t+\ell} \Ex_{(\bvrho,\by)\leftarrow\calU}[\,\calC_{\mathscr{F},\ell,t}(\bvrho,\by)\,] \tag*{(Observation~\ref{obs:few-sat})} \\
&= \delta + 16^{t+\ell} \Prx_{(\bvrho,\by)\leftarrow\calU}\big[\,\depth(\CCDT_\ell(\mathscr{F}\uhr\rho(\bvrho,\by))) \ge t\,\big] \tag*{(Claim~\ref{claim:overall-ckt})}\\
&= \delta + 16^{t+\ell} \Prx_{\brho\leftarrow\calR_p}\big[\,\depth(\CCDT_\ell(\mathscr{F}\uhr \brho)) \ge t\,\big] \tag*{(Observation~\ref{obs:string-to-restriction})}  \\ 
&\le \delta + 16^{t+\ell} M^{\lceil t/\ell\rceil} (32pk)^t. \tag*{(Theorem~\ref{thm:HSL14-canonical})} 
\end{align*} 
\end{proof}

\section{Applying our pseudorandom multi-switching lemma: the Ajtai--Wigderson framework for PRG constructions} 
\label{sec:AW} 

Implicit in the early work of Ajtai--Wigderson~\cite{AjtaiWigderson:85} giving the first PRG for $\acz$ circuits is a powerful, generic framework for constructing PRGs from ``pseudorandom simplification lemmas". In this section we give an explicit description of their framework in general terms. Our work shows that this framework is fairly versatile: both our PRGs, for $\acz$ circuits and sparse $\F_2$ polynomials, are obtained within it (albeit with specialized pseudorandom simplification lemmas for each class). Variants of these ideas from~\cite{AjtaiWigderson:85} are also present in the more recent PRG constructions of~\cite{GMRTV12,IMZ12,RSV13,TX13}.

\begin{itemize} 
\item Let $\mathscr{C}$ be the function class of interest, the class for which we would like to design a PRG. For us $\mathscr{C}$ will either be the class of size-$M$ depth-$d$ $\acz$ circuits, or the class of $S$-sparse $\F_2$ polynomials. (Our analysis will assume that $\mathscr{C}$ is closed under restrictions, which holds for natural function classes including our two classes of interest.)
\item Let $\mathscr{C}_\simple$ be a class of ``simple" functions. We will describe the relationship between $\mathscr{C}$ and $\mathscr{C}_\simple$ in detail shortly, but we mention here that this approach relies on the simplicity of the functions in $\mathscr{C}_\simple$ enabling PRGs of short seed length. For us, when $\mathscr{C}$ is the class of $\acz$ circuits, $\mathscr{C}_\simple$ will be the class of small-depth decision trees; when $\mathscr{C}$ is the class of sparse $\F_2$ polynomials, $\mathscr{C}_\simple$ will be the class of small-depth decision trees with low-degree $\F_2$ polynomials at its leaves. (Note that we do not require that $\mathscr{C}_\simple$ be a subclass of $\mathscr{C}$.) 
\end{itemize} 

At a high level, the plan is to give a \emph{randomness-efficient} reduction from the task of fooling $\mathscr{C}$ to that of fooling $\mathscr{C}_\simple$;  we obtain a pseudorandom distribution $\calD$ over $\zo^n$ that fools $\mathscr{C}$ by ``pseudorandomly stitching together" independent copies of a pseudorandom distribution $\calD_\simple$ over $\zo^{n'}$ that fools $\mathscr{C}_\simple$ (for some $n' \ll n$).  In more detail, the plan is to fool $\mathscr{C}$ recursively in stages, where in each stage we employ two pseudorandom constructs: 

\begin{enumerate}
\item A PRG for $\mathscr{C}_\simple$, and 
\item A ``pseudorandom $\mathscr{C}$-to-$\mathscr{C}_\simple$ simplification lemma".

Roughly speaking, such a simplification lemma says the following: there is a pseudorandom distribution $\calR$ over restrictions such that for all $\calC \in \mathscr{C}$, with high probability over $\brho\leftarrow \calR$ the randomly restricted function $\calC \uhr \brho$ belongs to $\mathscr{C}_\simple$.  This pseudorandom distribution $\calR$ over the space of restrictions $\{0,1,\ast\}^n$ should have the following structure:  
\begin{enumerate}
\item The set of ``live" positions $\bL \sse [n]$ (i.e.~the set of $\ast$'s) can be sampled with seed length $s_\SL$.  We write $\bL \leftarrow \calR_\stars$ to denote a draw from this pseudorandom distribution over subsets of $[n]$. 
\item Non-live positions $[n]\setminus \bL$ are filled in independently and uniformly with $\{0,1\}$, and do not count against the seed length $s_\SL$. We write $\brho\leftarrow \zo^{[n]\setminus \bL}$ to denote a draw of such a restriction.
\end{enumerate} 

We will require each subset $L \in \supp(\calR_\stars)$ to have size at least $pn$ for some not-too-small $p \in (0,1)$ (equivalently, we will require $\calR$ to be supported on restrictions that leave at least a $p$ fraction of coordinates unfixed); as we will soon see, this ensures that we ``make good process" in each stage.

The guarantee that we will require of this pseudorandom $\mathscr{C}$-to-$\mathscr{C}_\simple$ simplification lemma is  as follows: for every $\calC \in \mathscr{C}$, 
\begin{equation} \Ex_{\bL\leftarrow\calR_{\stars}}\bigg[ \Prx_{\brho\leftarrow\zo^{[n]\setminus \bL}}\big[\, (\calC \uhr \brho) \notin \mathscr{C}_\simple\big] \bigg]  \le \delta_\SL, \label{eq:PSL}
\end{equation} 
where the failure probability $\delta_\SL$ is as small as possible.  
\end{enumerate} 

\noindent {\bf An aside about applying Theorem~\ref{thm:derand-H14} within this framework.} The astute reader may have noticed that our pseudorandom multi-switching lemma (Thereom~\ref{thm:derand-H14}) from the previous section is established for a distribution over restrictions that does \emph{not} have the structure prescribed above: rather than a pseudorandom choice of live variables $\bL \sse [n]$ and a fully random choice of bits as values for the non-live variables ${[n] \setminus \bL}$, Theorem~\ref{thm:derand-H14} is established for a distribution over restrictions where both choices are pseudorandom. (Recalling Definition~\ref{def:string-to-restriction}, we see that $\boldeta$ in the statement of Theorem~\ref{thm:derand-H14} corresponds to the choice of $\bL \sse [n]$, and $\bz$ to the choice of bits for the coordinates in $[n] \setminus \bL$; in the proof of Theorem~\ref{thm:derand-H14} this pair $(\boldeta,\bz)$ is sampled from a single pseudorandom distribution over $Y_q$.)  However, this suggests that Theorem~\ref{thm:derand-H14} is ``stronger than it has to be", since it is more randomness efficient than necessary for this application. Indeed, in Proposition~\ref{prop:swap} we formalize this intuition, showing that our proof of Theorem~\ref{thm:derand-H14} also extends to hold for distributions over restrictions with the prescribed structure.

\vspace{-5pt}

\paragraph{One stage of the PRG construction.}  Going back to the general framework, we next describe how the two pseudorandom constructs described above---a PRG for $\mathscr{C}_\simple$ and a pseudorandom $\mathscr{C}$-to-$\mathscr{C}_\simple$ simplification lemma---are employed together within a single stage of the PRG construction for $\mathscr{C}$. 

For $L \sse [n]$ let us write $\delta(L)$ to denote the probability $\Prx_{\brho\leftarrow\zo^{[n]\setminus L}}[\, (\calC \uhr \brho) \notin \mathscr{C}_\simple\,]$; by (\ref{eq:PSL}) we have that $\Ex_{\bL\leftarrow\calR_\stars}[\delta(\bL)] \le \delta_\SL$. Fix an $L \sse [n]$. Let $\calD_\simple$ be a distribution that $\delta_\PRG$-fools $\mathscr{C}_\simple$, and suppose $\calD_\simple$ can be sampled with $s_\PRG$ many random bits. A simple but crucial fact from~\cite{AjtaiWigderson:85} is the following: the distribution over $\zo^n$ where 
\begin{enumerate}
\item The coordinates in $[n] \setminus L$ are filled in with uniform random bits;
\item The coordinates in $L$ are filled in according to the pseudorandom distribution $\calD_\simple$,
\end{enumerate}
$(\delta(L) + \delta_\PRG)$-fools $\mathscr{C}$. That is, for all $\calC \in \mathscr{C}$, 
\[ \mathop{\Ex_{\bx\leftarrow\calU}}_{\by\leftarrow\calD_\simple}\big[ \calC(\bx_{[n]\setminus L}, \by_L)\big] = \Ex_{\bx\leftarrow\calU}\big[\calC(\bx)\big] \pm (\delta(L) + \delta_\PRG). \] 
Taking expectations over $\bL \leftarrow\calR_\stars$ and using (\ref{eq:PSL}), we get that 
\begin{equation} \Ex_{\bL\leftarrow\calR_\stars}\left[\,\mathop{\Ex_{\bx\leftarrow\calU}}_{\by\leftarrow\calD_\simple}\big[ \calC(\bx_{[n]\setminus \bL}, \by_{\bL})\big]\right] = \Ex_{\bx\leftarrow\calU}\big[\calC(\bx)\big] \pm (\delta_\SL + \delta_\PRG). \label{eq:bias-preservation}
\end{equation} 
Consider the distribution $\calR_\gentle$ over the space of restrictions $\{0,1,\ast\}^n$ defined as follows: to make a draw $\bpi \leftarrow\calR_\gentle$, first make draws $\bL \leftarrow\calR_\stars$ and $\by \leftarrow\calD_\simple$, and then output the restriction $\bpi \in \{0,1,\ast\}^n$ where 
\[ 
\bpi_i = \begin{cases}
\by_i & \text{if $i \in \bL$} \\
\ast & \text{otherwise.}  
\end{cases} \quad \text{for all $i\in [n]$.}
\] 
In words, $\bpi$ is the restriction that fixes the coordinates in $\bL$ according to $\by$. With this definition of $\calR_\gentle$ in hand, we can rewrite (\ref{eq:bias-preservation}) as 
\begin{equation}
\Ex_{\bpi\leftarrow\calR_\gentle} \bigg[ \Ex_{\bpi\leftarrow\calU}\big[(\calC\uhr\bpi)(\bx)\big]\bigg] =  \Ex_{\bx\leftarrow\calU}\big[\calC(\bx)\big] \pm (\delta_\SL + \delta_\PRG). 
\label{eq:bias-preservation2} 
\end{equation}
Note that a draw $\bpi \leftarrow \calR_\gentle$ can be sampled with $s_\SL + s_\PRG$ random bits. (We need $s_\SL$ random bits to make a draw $\bL \leftarrow \calR_\stars$, and $s_\PRG$ random bits to make a draw $\by\leftarrow\calD_\simple$.)

We emphasize that the restriction $\bpi$ is supported on $\bL$ (i.e.~$\bpi^{-1}(\{0,1\}) = \bL$), rather than $[n] \setminus \bL$.  For this reason we may view $\calR_\gentle$ as being ``dual" to the distribution $\calR$ that yields a $\mathscr{C}$-to-$\mathscr{C}_\simple$ simplification lemma: while $\calR$ is supported on restrictions that leave at least a $p$ fraction of coordinates unfixed, $\calR_\gentle$ is supported on restrictions that fix at least a $p$ fraction of coordinates. This is explains why,  as alluded to above, we  require the pseudorandom simplification lemma to be such that every $L \in \supp(\calR_\stars)$ has size at least $pn$ for some not-too-small $p\in (0,1)$. 

\vspace{-5pt} 

\paragraph{Fooling $\mathscr{C}$ recursively: the overall PRG construction and its analysis.}  We have sketched the construction of a distribution $\calR_\gentle$ over restrictions in $\{0,1,\ast\}^n$ that preserves $\calC$'s bias up to an error of $(\delta_\SL + \delta_\PRG)$ in the sense of (\ref{eq:bias-preservation2}); furthermore, $\calR_\gentle$ is supported on restrictions that fix at least a $p$ fraction of coordinates. Since an $\eps$-PRG is simply a distribution over assignments in $\{0,1\}^n$ that preserves $\calC$'s bias up to an error of $\eps$, we see that we have made a ``$p$-fraction of progress" towards a PRG, while incurring $(\delta_\SL + \delta_\PRG)$ out of the total $\eps$ amount of error allowed. 

Our PRG construction will recurse on $\calC \uhr \pi$ for all $\pi \in \supp(\calR_\gentle)$, all of which are functions over at most $(1-p)n$ variables. (Since $\mathscr{C}$ is closed under restrictions, we note that $\calC \uhr \pi$ belongs to $\mathscr{C}$ and so we can indeed apply the same argument recursively.)  By fixing at least a $p$ fraction of the remaining coordinates in each stage, we ensure that there are at most $p^{-1}\ln n$ stages in total, after which $n$ coordinates will have been fixed. Hence, as long as 
\[ \delta_\SL + \delta_\PRG \le \frac{\eps}{p^{-1}\ln n}, \] 
i.e.~the total error incurred across all stages is at most $\eps$, we will have that the final distribution over $\zo^n$ does indeed $\eps$-fool $\calC$. 

As noted above, the seed length required to sample from $\calR_\gentle$ in each stage is $s_\SL + s_\PRG$. Since there are at most $p^{-1}\ln n$ stages in total, the overall seed length of this PRG construction is 
\[ (s_\SL + s_\PRG) \cdot p^{-1}\ln n.\]

The following theorem summarizes the upshot of our discussion in this section: 

\begin{theorem}[PRGs from pseudorandom simplification lemmas; implicit in~\cite{AjtaiWigderson:85}]
\label{thm:AW}
Let $\mathscr{C}$ and $\mathscr{C}_\simple$ be two function classes over $\zo^n$, and suppose we have 
\begin{enumerate}
\item A $\delta_\PRG$-PRG for $\mathscr{C}_\simple$ with seed length $s_\PRG(\delta_\PRG)$ for all $\delta_\PRG > 0$, and 
\item A pseudorandom $\mathscr{C}$-to-$\mathscr{C}_\simple$ simplification lemma with the following parameters: for all $\delta_\SL > 0$, there is a distribution $\calR_\stars$ over subsets of $[n]$ such that 
\begin{enumerate}
\item A draw $\bL \leftarrow\calR_\stars$ can be sampled with $s_\SL(\delta_\SL)$ random bits. 
\item Every $L \in \supp(\calR_\stars)$ satisfies $|L| \ge pn$ for some $p \in (0,1)$. 
\item For all $\calC \in \mathscr{C}$, we have that 
 \[ \Ex_{\bL\leftarrow\calR_{\stars}}\bigg[ \Prx_{\brho\leftarrow\zo^{[n]\setminus \bL}}\big[\, (\calC \uhr \brho) \notin \mathscr{C}_\simple\big] \bigg]  \le \delta_\SL.\] 
 \end{enumerate} 
\end{enumerate} 
Then for all $\eps > 0$, there is an $\eps$-PRG for $\mathscr{C}$ with seed length
\[ \left(s_\SL\left(\frac{\eps p}{2\ln n}\right) + s_\PRG\left(\frac{\eps p}{2\ln n}\right)\right) \cdot p^{-1}\ln n.\] 
\end{theorem}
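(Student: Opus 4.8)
The plan is to iterate the single-stage construction described just above for $m := \lceil p^{-1}\ln n\rceil$ stages. First I would instantiate hypotheses (1) and (2) at per-stage error parameters $\delta_\SL = \delta_\PRG := \Theta(\eps p/\ln n)$ to obtain a distribution $\calD_\simple$ and a distribution $\calR_\stars$, and assemble from them the distribution $\calR_\gentle$ over $\{0,1,\ast\}^n$ exactly as in the discussion preceding (\ref{eq:bias-preservation2}). I would record three properties of $\calR_\gentle$: (i) a draw from it uses $s_\SL(\delta_\SL) + s_\PRG(\delta_\PRG)$ random bits and is computable in $\poly(n)$ time; (ii) by (\ref{eq:bias-preservation2}), for every $\calC \in \mathscr{C}$ it preserves $\calC$'s bias up to additive error $\delta_\SL + \delta_\PRG$; and (iii) it is supported on restrictions that fix at least a $p$-fraction of the coordinates. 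I would then define the iterated distribution recursively: $\calD_0$ is the point distribution on the all-$\ast$ restriction, and $\calD_i$ is obtained by first drawing $\bpi \leftarrow \calD_{i-1}$ and then, independently, drawing a fresh restriction of the live coordinates of $\bpi$ from the $\calR_\gentle$-construction applied to $\calC \uhr \bpi$. This recursion is legitimate because $\mathscr{C}$ is closed under restrictions, so $\calC \uhr \bpi \in \mathscr{C}$ and hypotheses (1)--(2) apply to it on its (smaller) variable set, with the stated $n$-variable bounds serving as upper bounds.

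The core step is an induction on $i$ showing that, for every $\calC \in \mathscr{C}$,
\[ \Bigl| \Ex_{\bpi\leftarrow\calD_i}\,\Ex_{\bx\leftarrow\calU}\bigl[(\calC\uhr\bpi)(\bx)\bigr] \;-\; \Ex_{\bx\leftarrow\calU}\bigl[\calC(\bx)\bigr] \Bigr| \;\le\; i\,(\delta_\SL + \delta_\PRG). \]
The base case $i=0$ is immediate. For the inductive step I would rewrite $\Ex_{\bpi\leftarrow\calD_i}\Ex_\bx[(\calC\uhr\bpi)(\bx)]$ as $\Ex_{\bpi\leftarrow\calD_{i-1}}$ of the corresponding ``one more $\calR_\gentle$-stage'' quantity for $\calC\uhr\bpi$, invoke property (ii), i.e., (\ref{eq:bias-preservation2}) applied to $\calC\uhr\bpi$, to replace that inner quantity by $\Ex_\bx[(\calC\uhr\bpi)(\bx)] \pm (\delta_\SL + \delta_\PRG)$, and then combine the triangle inequality with the inductive hypothesis. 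The essential feature being exploited here is that the error accumulates \emph{additively} over the stages; this hybrid argument is exactly what lets us afford $m$ stages provided $m(\delta_\SL + \delta_\PRG) \le \eps$.

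To finish, I would count stages and read off the parameters. By property (iii), the number of live coordinates shrinks by a factor of at least $(1-p)$ per stage, so after $m = \lceil p^{-1}\ln n\rceil$ stages it is at most $(1-p)^m n \le e^{-pm} n \le 1$ (a lone surviving coordinate being fixed by one extra trivial stage, at negligible additional cost). Hence $\calD_m$ is supported on full assignments, so $\Ex_{\bx\leftarrow\calU}[(\calC\uhr\bpi)(\bx)] = \calC(\bpi)$ for $\bpi \in \supp(\calD_m)$, and the $i=m$ case of the inductive claim reads $|\Ex_{\bpi\leftarrow\calD_m}[\calC(\bpi)] - \Ex_{\bx\leftarrow\calU}[\calC(\bx)]| \le m(\delta_\SL + \delta_\PRG) \le \eps$ by the choice of $\delta_\SL,\delta_\PRG$; thus $\calD_m$ is an $\eps$-PRG for $\mathscr{C}$. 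Since each of the $m = O(p^{-1}\ln n)$ stages consumes $s_\SL(\delta_\SL) + s_\PRG(\delta_\PRG)$ fresh random bits and each stage map (sampling $\calR_\stars$, sampling $\calD_\simple$, composing the resulting restriction with the current one) runs in $\poly(n)$ time, the distribution $\calD_m$ is sampled efficiently with $(s_\SL(\delta_\SL) + s_\PRG(\delta_\PRG)) \cdot p^{-1}\ln n$ random bits, which is the claimed seed length (taking $\delta_\SL = \delta_\PRG = \eps p/(2\ln n)$).

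The step I expect to require the most care --- though it is conceptually routine rather than difficult --- is the recursion bookkeeping: verifying at each stage that $\calC\uhr\bpi$ is genuinely in $\mathscr{C}$ on a smaller variable set to which hypotheses (1)--(2) still apply, pinning down the exact constants in the stage count and the per-stage error so that they sum to at most $\eps$, and confirming that the individual $\poly(n)$-time samplers compose into a single $\poly(n)$-time sampler of the right support size. All of the genuinely substantive content --- that a single stage simultaneously preserves bias up to $\delta_\SL + \delta_\PRG$ and fixes a $p$-fraction of the coordinates --- is already in hand from (\ref{eq:bias-preservation2}) and the surrounding discussion, so no new idea is needed beyond carefully stacking these stages.
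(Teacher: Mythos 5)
Your proposal is correct and follows essentially the same route as the paper: the theorem is exactly the iterated version of the single-stage construction around (\ref{eq:bias-preservation2}), with additive error accumulation over at most $p^{-1}\ln n$ stages and per-stage error $\eps p/(2\ln n)$, which is precisely the paper's argument. The hybrid/induction bookkeeping you flag (restrictions staying in $\mathscr{C}$, stage count, composing samplers) is handled at the same level of detail in the paper's discussion, so nothing is missing.
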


\section{Pseudorandom simplification lemmas for $\acz$ circuits and sparse $\F_2$ polynomials}  \label{sec:prsl}

In order to apply Theorem~\ref{thm:derand-H14}, we need a PRG that can fool depth-3 circuits (to play the role of $\calD$ in that theorem).
We recall a very recent result of Harsha and Srinivasan giving the first PRG for fooling $\acz$ with a seed length whose $\eps$-dependence is $\log(1/\eps)$; we state this result, specialized to the notation of Section~\ref{sec:derand-H14}, below.
\begin{theorem}[\cite{HS16}]
\label{thm:HS16}
The class of size-$S$ depth-$d$ circuits over $\{0,1\}^{Y_q}$ is $\delta$-fooled by $r_{\mathrm{HS}}$-wise independence where 
\[ r_{\mathrm{HS}}(S,d,\delta) = \log^{3d+O(1)}(S) \cdot \log(1/\delta). \]
\end{theorem}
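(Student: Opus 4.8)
Theorem~\ref{thm:HS16} is simply a restatement, in the notation of Section~\ref{sec:derand-H14}, of a theorem of Harsha and Srinivasan~\cite{HS16}; the only observation needed is that a circuit over $\zo^{Y_q}$ is an ordinary Boolean circuit over $N := n(q+1)$ variables, so their result applies verbatim with the harmless factor $\log N = O(\log n)$ absorbed into the $\log^{O(1)}(S)$ slack (in our applications $S \ge n$). What remains is to recall the plan of the argument one would reproduce. The plan is to reduce to constructing \emph{sandwiching polynomials}: it suffices to show that every size-$S$ depth-$d$ circuit $C$ admits real polynomials $p_\ell \le C \le p_u$ (pointwise on $\zo^N$) of degree at most $r := \log^{3d+O(1)}(S)\log(1/\delta)$ with $\Ex_{\bx\leftarrow\calU}[p_u(\bx) - p_\ell(\bx)] \le \delta$. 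Given these, any $r$-wise independent $\calD$ matches $\calU$ on the expectation of every multilinear polynomial of degree at most $r$ (each monomial $\prod_{i\in A} x_i$ with $|A| \le r$ has expectation $2^{-|A|}$ under both), so $\Ex_\calD[C]$ is squeezed between $\Ex_\calD[p_\ell] = \Ex_\calU[p_\ell]$ and $\Ex_\calD[p_u] = \Ex_\calU[p_u]$, each within $\delta$ of $\Ex_\calU[C]$. (Such polynomials are in fact necessary for the conclusion, by LP duality, so this step is lossless.)

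To build $p_\ell$ and $p_u$ I would follow Braverman's scheme~\cite{Bra10}, sharpened by Tal~\cite{Tal:15tightbounds} and by Harsha and Srinivasan~\cite{HS16}, combining two incomparable approximators of $C$:
\begin{enumerate}
\item an $\ell_2$-approximator $g$ of degree $O(\log S)^{d-1}\log(1/\gamma)$ with $\Ex_\calU[(C-g)^2] \le \gamma$, from the H{\aa}stad--Tal bound on the Fourier tails of $\acz$ circuits~\cite{Tal:15tightbounds} (equivalently, from $d-1$ iterated switching lemmas followed by Linial--Mansour--Nisan~\cite{LMN:93});
\item a ``pointwise-off-a-small-set'' approximator: a polynomial $h$ of degree $\log^{O(d)}(S)$ and an $\acz$ circuit $E$ of size $\poly(S)$ and depth $d+O(1)$ with $C(x) = h(x)$ whenever $E(x) = 0$ and $\Prx_\calU[E = 1] \le \gamma$, from the probabilistic-polynomial / approximate-degree constructions of Tarui and of Beigel--Reingold--Spielman.
\end{enumerate}
Applying the $\ell_2$-approximator recursively to $E$ (and, in the refined analysis, once more to the error indicator of $E$) yields a low-degree polynomial dominating $\mathbf 1_E$; one then takes $p_\ell$ and $p_u$ to be suitable polynomial combinations of $h$, $g$ and these error polynomials, subtracting (resp.\ adding) a bounded amount of ``slack'' times an error polynomial so that the one-sided inequalities hold at every point, and verifies $\Ex_\calU[p_u - p_\ell] \le \delta$ by a union bound over the (tiny) probabilities of the bad events.

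The hard part is purely quantitative degree bookkeeping, and this is the content of~\cite{HS16} over earlier work: one must arrange that the target error $\delta$ enters the final degree only through a single multiplicative $\log(1/\delta)$ factor, rather than through $\log^{\Theta(d)}(1/\delta)$ as a naive iteration of the switching lemma in the style of Corollary~\ref{cor:HSL-depth-reduction} would produce --- exactly the decoupling of the error-dependence that this whole paper is organized around. The way to do this is to fix all switching/restriction parameters so they depend on $S$ only, making $\gamma$ a polynomial in $\delta$ cost just an additive $O(\log(1/\delta))$ in each approximator's degree; since the construction nests three approximations --- of $C$, of its error circuit $E$, and of $E$'s own error circuit --- and each contributes a factor $O(\log S)^{d-1+O(1)}$, the final degree works out to $\log^{3(d-1)+O(1)}(S)\log(1/\delta) = \log^{3d+O(1)}(S)\log(1/\delta)$, as claimed. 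Getting this accounting exactly right --- $\log(1/\delta)$ appearing once, the $S$-exponent staying at $3d+O(1)$ --- is where essentially all of the difficulty lies; the surrounding Braverman framework is by now standard.
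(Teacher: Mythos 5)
This statement is not proved in the paper at all: it is imported as a black box from Harsha and Srinivasan~\cite{HS16} (restated over $\{0,1\}^{Y_q}$), which is precisely what your first paragraph does, so your treatment coincides with the paper's. Your further outline of the sandwiching-polynomial argument in the Braverman--Tal--\cite{HS16} line is a reasonable sketch of the cited work, but it is not something the paper attempts or needs to verify.
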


We will need an elementary fact that states, roughly speaking, that if $\calD$ is a distribution that fools a class $\mathscr{F}$, then the distribution obtained by replacing a subset of its coordinates with fully random bits also fools $\mathscr{F}$. Specialized to our context, we state this fact as follows:

\begin{proposition}
\label{prop:swap} 
Let $\calD_{r\text{-}\mathrm{wise}}$ be an $r_{\mathrm{HS}}$-wise independent distribution over $\{0,1\}^{Y_q}$ where $r_{\mathrm{HS}}(S,d,\delta)$ is as defined in Theorem~\ref{thm:HS16}.  Consider the distribution $\calD_{\mathrm{mix}}$ over $\{0,1\}^{Y_q}$ where a draw from $\calD_{\mathrm{mix}}$ is $(\boldeta,\by) \in \zo^{n\times q}\times \zo^n$ where 
\begin{enumerate}
\item (Pseudorandom stars) $\boldeta$ is drawn from the marginal distribution of $\calD_{r\text{-}\mathrm{wise}}$ on $\zo^{n\times q}$, and 
\item (Non-stars filled in fully randomly) $\by$ is an independent uniform string drawn from $\zo^n$. 
\end{enumerate} 
Then like $\calD_{r\text{-}\mathrm{wise}}$, this distribution $\calD_{\mathrm{mix}}$ also $\delta$-fools the class of size-$S$ depth-$d$ circuits over $\{0,1\}^{Y_q}$.
\end{proposition}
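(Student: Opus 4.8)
The plan is to reduce the claim for $\calD_{\mathrm{mix}}$ to the hypothesis that $\calD_{r\text{-}\mathrm{wise}}$ $\delta$-fools the class, by conditioning on the fully random block $\by$. Fix an arbitrary size-$S$ depth-$d$ circuit $C$ over $\zo^{Y_q}$. For each fixed assignment $y \in \zo^n$ to the ``non-star'' coordinates, let $C_y$ be the circuit obtained from $C$ by hardwiring the $\zo^n$-block to the constants given by $y$. Since the class of size-$S$ depth-$d$ circuits is closed under setting input bits to constants, $C_y$ is again a size-$S$ depth-$d$ circuit over $\zo^{Y_q}$ (it simply ignores its $\zo^n$-block), so it lies in the same class as $C$.

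First I would record the identity
\[
\Ex_{(\boldeta,\by)\leftarrow\calD_{\mathrm{mix}}}\big[\,C(\boldeta,\by)\,\big]
\;=\;
\Ex_{\by\leftarrow\calU_n}\Big[\,\Ex_{\boldeta\leftarrow\calD_{\mathrm{marg}}}\big[\,C_{\by}(\boldeta)\,\big]\,\Big],
\]
where $\calD_{\mathrm{marg}}$ denotes the marginal of $\calD_{r\text{-}\mathrm{wise}}$ on the $\zo^{n\times q}$ block; this is immediate from the definition of $\calD_{\mathrm{mix}}$, in which $\boldeta$ and $\by$ are independent. Next, since $C_y$ does not depend on its $\zo^n$-block, we have $\Ex_{\boldeta\leftarrow\calD_{\mathrm{marg}}}[\,C_y(\boldeta)\,] = \Ex_{(\boldeta,\by')\leftarrow\calD_{r\text{-}\mathrm{wise}}}[\,C_y(\boldeta,\by')\,]$; because $C_y$ is in the class and $\calD_{r\text{-}\mathrm{wise}}$ $\delta$-fools the class, this quantity is within $\delta$ of $\Ex_{(\boldeta,\by')\leftarrow\calU}[\,C_y(\boldeta,\by')\,] = \Ex_{\boldeta\leftarrow\calU_{n\times q}}[\,C_y(\boldeta)\,]$.

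Finally I would average this per-$y$ bound over $\by\leftarrow\calU_n$ and apply the triangle inequality. The left-hand side becomes $\Ex_{\calD_{\mathrm{mix}}}[\,C\,]$, the right-hand side becomes $\Ex_{\by\leftarrow\calU_n}\Ex_{\boldeta\leftarrow\calU_{n\times q}}[\,C_{\by}(\boldeta)\,] = \Ex_{(\boldeta,\by)\leftarrow\calU}[\,C(\boldeta,\by)\,]$, and the total error is the average over $\by$ of the per-$y$ errors, hence at most $\delta$. This gives $\big|\Ex_{\calD_{\mathrm{mix}}}[\,C\,] - \Ex_{\calU}[\,C\,]\big| \le \delta$, which is the desired conclusion.

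There is no real obstacle here; the two points that merit a word of care are (i) that hardwiring input bits of a size-$S$ depth-$d$ circuit yields a circuit still of size $\le S$ and depth $\le d$, so that each $C_y$ remains in the class (this is where closure under restrictions is used), and (ii) that $\Ex_{\boldeta\leftarrow\calD_{\mathrm{marg}}}[\,g(\boldeta)\,]=\Ex_{(\boldeta,\by')\leftarrow\calD_{r\text{-}\mathrm{wise}}}[\,g(\boldeta)\,]$ for any $g$ depending only on $\boldeta$, which is what lets the conditioning step go through with no loss and is the only place the structure of $\calD_{r\text{-}\mathrm{wise}}$ enters.
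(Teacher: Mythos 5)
Your proof is correct, and it is essentially the paper's own argument: the paper simply defers to ``the same simple argument that gives Fact 9 of~\cite{TX13},'' which is exactly your conditioning/hardwiring-and-averaging step (fix $y$, note $C_y$ stays in the class, apply the fooling guarantee, average over uniform $\by$). No gap to report.
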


\begin{proof}
This follows from the same simple argument that gives Fact 9 of~\cite{TX13}. 
\end{proof} 

We can now state the pseudorandom multi-switching lemma that we will use for both our pseudorandom simplification lemmas (for $\acz$ circuits and for sparse $\F_2$ polynomials): 

\begin{lemma}[Stars chosen pseudorandomly, non-stars filled in fully randomly] 
\label{lem:PSL} 
Let $\mathscr{F} = (F_1,\ldots,F_M)$ be an ordered list of $Q$-clause $k$-CNFs. Let $\ell \geq k$, $t \in \N$ and $\delta,p \in (0,1)$, and define $q = \log(1/p)$.  There is a distribution $\calR_\stars$ over subsets of $[n]$ such that the following hold: 
\begin{enumerate} 
\item A draw $\bL\leftarrow\calR_\stars$ can be sampled with $O(r\log n)$ random bits, where 
\[ r = r_{\mathrm{HS}}\left(M\big(n^{O(\ell)} + Q  2^{O(kq)}\big), 3, \frac{\delta}{M^{\lceil t/\ell\rceil} n^{O(t)}}\right)\]
and $r_{\mathrm{HS}}(\cdot,\cdot,\cdot)$ is as defined in Theorem~\ref{thm:HS16}.  
\item $\calR_\stars$ is $p$-regular:  $\Prx_{\bL\leftarrow\calR_\stars}\big[i\in \bL\big] = p$ for all $i\in [n]$. 
\item A multi-switching lemma holds with respect to $\calR_\stars$: 
 \begin{equation} \mathop{\Prx_{\bL\leftarrow\calR_\stars}}_{\brho\leftarrow\zo^{[n]\setminus \bL}}\big[\,\depth(\CCDT_{\ell}(\mathscr{F}\uhr\brho)) \ge t\,\big] \le 16^{t+\ell}M^{\lceil t/\ell\rceil} (32pk)^t + \delta. \label{eq:failure}
 \end{equation} 
\end{enumerate} 
\end{lemma}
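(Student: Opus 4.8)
The plan is to read off $\calR_\stars$ as the star-set of a restriction drawn from a suitable $r$-wise independent distribution, and then to deduce the multi-switching guarantee directly from Theorem~\ref{thm:derand-H14} together with Proposition~\ref{prop:swap}. Concretely, set $S := M(n^{O(\ell)} + Q 2^{O(kq)})$ and $\delta' := \delta/(M^{\lceil t/\ell\rceil} n^{O(t)})$ --- exactly the circuit size and accuracy demanded in the hypothesis of Theorem~\ref{thm:derand-H14} --- and let $r := r_{\mathrm{HS}}(S,3,\delta')$ as in the statement; we instantiate Theorem~\ref{thm:HS16} at depth $d=3$ because the circuit of Claim~\ref{claim:overall-ckt} is an OR of depth-$3$ pieces $\calC_P$ over $\zo^{Y_q}$, and the top OR is absorbed by the union bound inside the proof of Theorem~\ref{thm:derand-H14}. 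Let $\calD_{r\text{-}\mathrm{wise}}$ be a uniform $r$-wise independent distribution over $\zo^{Y_q}$, and define $\calR_\stars$ to be the distribution over subsets of $[n]$ obtained by drawing $\boldeta$ from the marginal of $\calD_{r\text{-}\mathrm{wise}}$ on $\zo^{n\times q}$ and outputting $\bL := \{i \in [n] : \boldeta_{i,1} = \cdots = \boldeta_{i,q} = 1\}$, i.e.\ the set of stars of the restriction $\rho(\boldeta,\cdot)$ of Definition~\ref{def:string-to-restriction}.

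Items~1 and~2 of the lemma are then routine. A uniform $r$-wise independent distribution over the $n(q+1)$-bit domain $\zo^{Y_q}$ can be sampled with $O(r\log n)$ random bits via the standard construction, and $\bL$ is a deterministic function of $\boldeta$, giving item~1. For item~2, since $r \ge \Omega(\log S) \ge \Omega(kq) \ge q$, any $q$ coordinates of $\boldeta$ are \emph{exactly} uniform and independent, so $\Prx_{\bL\leftarrow\calR_\stars}[i \in \bL] = 2^{-q} = p$ for every $i\in[n]$, i.e.\ $\calR_\stars$ is $p$-regular.

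The substance is item~3. By Theorem~\ref{thm:HS16} with $d=3$, the distribution $\calD_{r\text{-}\mathrm{wise}}$ $\delta'$-fools size-$S$ depth-$3$ circuits over $\zo^{Y_q}$; by Proposition~\ref{prop:swap}, the mixed distribution $\calD_{\mathrm{mix}}$ (pseudorandom $\boldeta$, independent uniform $\by \leftarrow \zo^n$) $\delta'$-fools the same class. Hence $\calD_{\mathrm{mix}}$ meets the hypothesis of Theorem~\ref{thm:derand-H14}, which gives
\[ \Prx_{(\boldeta,\by)\leftarrow\calD_{\mathrm{mix}}}\big[\, \depth(\CCDT_\ell(\mathscr{F}\uhr\rho(\boldeta,\by))) \ge t \,\big] \le 16^{t+\ell} M^{\lceil t/\ell\rceil} (32pk)^t + \delta. \]
It then remains only to recognize the left-hand side as the quantity in (\ref{eq:failure}): by construction the star-set of $\rho(\boldeta,\by)$ is distributed as $\bL \leftarrow \calR_\stars$, and since $\by$ is uniform and independent of $\boldeta$, conditioning on $\bL = L$ leaves the coordinates of $[n]\setminus L$ filled with i.i.d.\ uniform bits --- that is, $\rho(\boldeta,\by)$ conditioned on $\bL = L$ is exactly a draw $\brho \leftarrow \zo^{[n]\setminus L}$. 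Averaging over $L$ yields (\ref{eq:failure}).

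I do not anticipate a genuine obstacle: all of the hard analytic work already lives in Theorem~\ref{thm:derand-H14} (proved in the appendix) and in Proposition~\ref{prop:swap}, so this proof is essentially the assembly of those ingredients. The only points requiring care are bookkeeping ones: (a) matching parameters --- using depth $3$, size $S = M(n^{O(\ell)} + Q 2^{O(kq)})$, and accuracy $\delta' = \delta/(M^{\lceil t/\ell\rceil} n^{O(t)})$ --- so that Theorem~\ref{thm:HS16} feeds \emph{exactly} what Theorem~\ref{thm:derand-H14} requires; (b) checking $r \ge q$ so that the $p$-regularity in item~2 is exact rather than merely approximate; and (c) verifying that $\calD_{\mathrm{mix}}$ genuinely has the ``stars pseudorandom, non-stars fully random'' form, which is precisely why Proposition~\ref{prop:swap} is inserted as an intermediate step instead of applying Theorem~\ref{thm:derand-H14} to $\calD_{r\text{-}\mathrm{wise}}$ directly.
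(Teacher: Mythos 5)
Your proposal is correct and follows essentially the same route as the paper: take an $r$-wise independent distribution over $\zo^{Y_q}$ with $r = r_{\mathrm{HS}}(M(n^{O(\ell)}+Q2^{O(kq)}),3,\delta/(M^{\lceil t/\ell\rceil}n^{O(t)}))$, read off $\calR_\stars$ as the star-set via Definition~\ref{def:string-to-restriction} (giving the $O(r\log n)$ seed length and exact $p$-regularity since $r \gg q$), and obtain item~3 by combining Theorem~\ref{thm:HS16} at depth $3$, Proposition~\ref{prop:swap}, and Theorem~\ref{thm:derand-H14}. The only difference is that you spell out the assembly of item~3, which the paper states in one line; your parameter bookkeeping matches the paper's exactly.
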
 

\begin{proof} 
Let $\calD$ be an $r$-wise independent distribution over $\{0,1\}^{Y_q}$; standard constructions~\cite{ABI86} show that $\calD$ can be sampled with $O(r \log |Y_q|) = O(r\log n)$ random bits. The marginal of $\calD$ on $\{0,1\}^{n\times q}$ naturally induces a distribution $\calR_\stars$ over subsets of $[n]$ via Definition~\ref{def:string-to-restriction}, where a draw $\bL \leftarrow \calR_\stars$ is defined to be $\rho(\bvrho,\bz)^{-1}(\ast)$ (i.e.~for all coordinates $i\in [n]$, $i \in \bL$ iff $\bvrho_{i,1} = \bvrho_{i,2} = \cdots =\bvrho_{i,q} = 1$).  Since $\calD$ is $r$-wise independent for $r \gg q$, we have that 
\[ \Prx_{\bL\leftarrow\calR_\stars}\big[i\in \bL\big]  = \Prx_{(\bvrho,\bz)\leftarrow \calD}\big[ \,\bvrho_{i,1} = \bvrho_{i,2} = \cdots =\bvrho_{i,q} = 1\,\big] =  2^{-q} = p, \] 
which establishes the second claim. The third claim follows by combining Theorem~\ref{thm:derand-H14}, Theorem~\ref{thm:HS16}, and Proposition~\ref{prop:swap}. 
\end{proof} 

\subsection{Pseudorandom simplification lemma for $\acz$ circuits} 

We will use the following instantiation of Lemma~\ref{lem:PSL} in our construction of a PRG for $\acz$ circuits: 

\begin{corollary}
\label{cor:multi-for-ac0} 
There is a universal constant $c > 0$ such that the following holds. Let $\mathscr{F} = (F_1,\ldots,F_M)$ be an ordered list of $Q$-clause $k$-CNFs with $\log M \geq k$ and $\eps_0 \in (0,1)$. There is a distribution $\calR_\stars$ over subsets of $[n]$ such that: 
\begin{enumerate}
\item A draw $\bL \leftarrow\calR_\stars$ can be sampled with $s = \log^c(MQ)\log(1/\eps_0)$ random bits. 
\item $\calR_\stars$ is $p$-regular for $p = \Omega(1/k)$.
\item A multi-switching lemma holds with respect to $\calR_\stars$: 
\[ \mathop{\Prx_{\bL\leftarrow\calR_\stars}}_{\brho\leftarrow\zo^{[n]\setminus \bL}}\big[\,\depth(\CCDT_{\log M}(\mathscr{F}\uhr\brho)) \ge \log(2M^{5}/\eps_0)\,\big] \le \eps_0. \]
\end{enumerate}  
\end{corollary}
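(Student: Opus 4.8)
The plan is to instantiate Lemma~\ref{lem:PSL} with a judicious choice of parameters and then simplify the resulting bounds. Concretely, I would invoke Lemma~\ref{lem:PSL} with $\ell := \lceil\log M\rceil$ --- legitimate since the hypothesis $\log M \ge k$ gives $\ell \ge k$ --- with $t := \lceil\log(2M^5/\eps_0)\rceil$, with $\delta := \eps_0/2$, and with $p := 1/(c_0 k)$ for a sufficiently large absolute constant $c_0$, so that $q = \log(1/p) = \log(c_0 k) = O(\log k) = O(\log\log M)$ (using $k \le \log M$). With $p$ of this form, part (2) of the corollary is immediate from Lemma~\ref{lem:PSL}(2): $\calR_\stars$ is $p$-regular and $p = \Omega(1/k)$. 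Also, since decision-tree depth is integer-valued, the event ``$\depth(\CCDT_{\log M}(\mathscr{F}\uhr\brho)) \ge \log(2M^5/\eps_0)$'' of part (3) is the same as ``$\depth \ge t$'', so it is exactly the event controlled by \eqref{eq:failure}.

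For part (3) I would then check that the bound $16^{t+\ell}M^{\lceil t/\ell\rceil}(32pk)^t + \delta$ of \eqref{eq:failure} is at most $\eps_0$; since $\delta = \eps_0/2$ it suffices to bound the first term by $\eps_0/2$. Using $\ell \le \log M + 1$ gives $16^{t+\ell} = O(2^{4t}M^4)$; using $\ell \ge \log M$ gives $M^{\lceil t/\ell\rceil} \le M\cdot 2^{t}$; and $(32pk)^t = (32/c_0)^t$. Multiplying, the first term is $O\bigl(M^5 (1024/c_0)^t\bigr)$, and choosing $c_0$ a large enough absolute constant (so that $1024/c_0 \le \tfrac14$, say) bounds this by $O(M^5)\cdot 4^{-t} \le O(M^5)\cdot(2M^5/\eps_0)^{-2} = O(\eps_0^2/M^5) \le \eps_0/2$, the last inequality using $\eps_0 < 1 \le M$ (the trivial cases of very small $M$ set aside). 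This is the one place where a little care is needed: because Lemma~\ref{lem:PSL} carries the extraneous factor $16^{t+\ell}$ compared to the clean multi-switching lemma, the star-probability $p$ must be a constant factor smaller than the naive ``$32pk < 1$'' would suggest --- but it remains $\Theta(1/k)$, which is all part (2) demands.

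It remains to bound the seed length, which by Lemma~\ref{lem:PSL}(1) is $O(r\log n)$ with $r = r_{\mathrm{HS}}\bigl(M(n^{O(\ell)} + Q2^{O(kq)}),\,3,\,\delta/(M^{\lceil t/\ell\rceil}n^{O(t)})\bigr)$ and $r_{\mathrm{HS}}(S,3,\delta) = \log^{O(1)}(S)\log(1/\delta)$ by Theorem~\ref{thm:HS16}. The key simplification is that the $\acz$ size convention (size counts input variables) gives $n \le M$, so every $\log n$ above may be replaced by $\log M$. Then $\log\bigl(M(n^{O(\ell)} + Q2^{O(kq)})\bigr) = O(\log M + \ell\log n + \log Q + kq) = O(\log^2 M + \log Q) = \log^{O(1)}(MQ)$, using $\ell = O(\log M)$, $q = O(\log\log M)$, $k \le \log M$; and, with $t = O(\log M + \log(1/\eps_0))$, $\log\bigl(M^{\lceil t/\ell\rceil}n^{O(t)}/\delta\bigr) = O(\lceil t/\ell\rceil\log M + t\log n + \log(1/\eps_0)) = O(t\log M + \log(1/\eps_0)) = \log^{O(1)}(M)\cdot\log(1/\eps_0)$, where the last step uses that we may assume $\eps_0$ is bounded away from $1$ (otherwise the statement is vacuous), so $\log(1/\eps_0) \ge 1$. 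Multiplying, $r = \log^{O(1)}(MQ)\log(1/\eps_0)$ and the seed length is $O(r\log M) = \log^{c}(MQ)\log(1/\eps_0)$ for a universal constant $c$. I do not expect a genuine obstacle: the whole proof is parameter bookkeeping, with the two mildly delicate points being the constant factor built into $p$ (paragraph two) and the use of $n \le M$ to eliminate the input-length dependence from the final seed length.
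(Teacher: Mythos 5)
Your proposal is correct and follows essentially the same route as the paper: instantiate Lemma~\ref{lem:PSL} with $\ell=\log M$, $t=\log(2M^5/\eps_0)$, $\delta=\eps_0/2$, and $p=\Theta(1/k)$ chosen small enough to absorb the $16^{t+\ell}$ and $M^{\lceil t/\ell\rceil}$ factors, then read off the seed length from the $d=3$ case of Theorem~\ref{thm:HS16}. You are merely more explicit than the paper about the bookkeeping it leaves implicit (ceilings, the exact constant in $p$, and using $n\le M$ to remove the $\log n$ dependence), so no substantive difference.
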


\begin{proof}
Applying Lemma~\ref{lem:PSL} with $\ell = \log M$, we see that the failure probability (\ref{eq:failure}) can be bounded by 
\[ 
16^{t+\ell}M^{\lceil t/\ell\rceil} (32pk)^t + \delta \le 16^{t}M^4 
M^{\lceil t/\ell\rceil} 
(64)^{-t} + \delta < M^5 2^{-t} + \delta
\]
by choosing $p = \Omega(1/k)$.  We make this at most $\eps_0$ by choosing $t = \log(2M^{5}/\eps_0)$ and $\delta = \eps_0/2$. The bound on $s$ follows from the $d=3$ case of Theorem~\ref{thm:HS16} and our setting of parameters, and this completes the proof. 
\end{proof} 

Following the standard bottom-up approach to $\acz$ circuit lower bounds, we compose $d-1$ iterative applications of the pseudorandom multi-switching lemma of Corollary~\ref{cor:multi-for-ac0} to obtain our pseudorandom simplification lemma for $\acz$:

\begin{lemma}[Pseudorandom simplification lemma for $\acz$]
\label{lem:PSL-AC0} 
There is a universal constant $C > 0$ such that the following holds. Let $\calC$ be a size-$M$ depth-$d$ Boolean circuit over $\{0,1\}^n$ (so recall that $M \geq n$) and $\eps_1 \in (0,1)$. There is a distribution $\calR_\stars$ over subsets of $[n]$ such that 
\begin{enumerate} 
\item A draw $\bL\leftarrow\calR_\stars$ can be sampled with $s = O(2^{d} \log^{C}(M) \log(1/\eps_1))$ random bits.
\item $\calR_\stars$ is $p$-regular for $p = \Omega(1/\log^{d-1}(M))$. 
\item The following simplification lemma holds with respect to $\calR_\stars$:\[ \mathop{\Prx_{\bL\leftarrow\calR_\stars}}_{\brho\leftarrow\zo^{[n]\setminus \bL}}\big[\,\text{$\calC \uhr \brho$ is not a decision tree of depth $O(2^d\log(M/\eps_1))$}\,\big] \le \eps_1. \]
\end{enumerate} 
\end{lemma}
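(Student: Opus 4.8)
The plan is to build $\calR_\stars$ as the composition of $d-1$ independent pseudorandom restrictions, one per layer of $\calC$, each drawn from the distribution supplied by Corollary~\ref{cor:multi-for-ac0}; this is the randomness-efficient analogue of the standard bottom-up depth-reduction argument sketched after Corollary~\ref{cor:HSL14-depth-reduction}. First I would normalize $\calC$, at the cost of at most a polynomial increase in $M$, to a layered circuit with $d$ alternating layers of AND/OR gates whose bottom gates read literals.

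The iteration maintains, after round $j$, an accumulated common partial decision tree $T_{<j}$ (with $T_{<1}$ empty) together with, at each leaf $\pi$ of $T_{<j}$, a restricted subcircuit $\calC_\pi$ of depth $d-j$ whose bottom fan-in is at most $\log M$ and each of whose layers has at most $M$ gates. In round $j$ I process each leaf $\pi$ \emph{separately}: I apply Corollary~\ref{cor:multi-for-ac0} (or its De Morgan dual, according to the parity of the bottom layer) to the family $\mathscr{F}_\pi$ of the at most $M$ bottom-layer depth-two subcircuits of $\calC_\pi$, with a per-leaf failure parameter $\eps_j^{\mathrm{leaf}}$ chosen small enough to absorb a union bound over the at most $2^{\depth(T_{<j})}$ leaves. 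Under a single draw from that round's pseudorandom distribution this refines $T_{<j}$, simultaneously at all leaves, to a common $(\log M)$-partial decision tree $T_{<j+1}$, and collapses each $\calC_\pi$ to a circuit of depth $d-j-1$ and bottom fan-in $\le\log M$ --- at round $j=d-1$, to an ordinary decision tree. (At round $j=1$ there is no bottom-fan-in hypothesis to invoke, so one instead takes $\mathscr{F}_\pi$ to be the bottom \emph{gates} themselves, i.e.\ $1$-CNFs/$1$-DNFs with $k=1$.) The fact that depth reduction persists --- that re-expressing the depth-$\log M$ trees at the new leaves as $\log M$-CNFs/DNFs and merging with the layer above keeps the bottom fan-in at $\log M$ rather than blowing it up --- is precisely the use of a \emph{common} partial decision tree, exactly as in the contrast between Corollary~\ref{cor:HSL-depth-reduction} and Corollary~\ref{cor:HSL14-depth-reduction}.

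For the bookkeeping: the $\ast$-probabilities are $p_1=\Omega(1)$ (the $k=1$ round) and $p_j=\Omega(1/\log M)$ for $2\le j\le d-1$, since the bottom fan-in stays $\le\log M$ throughout, so the composed distribution is $p$-regular with $p=\prod_j p_j=\Omega(1/\log^{d-1}M)$ and \emph{no} dependence on $\eps_1$ --- this $\eps_1$-independence of $p$ is exactly what the multi-switching lemma buys over an iterated single-switching lemma, which would instead leave a bottom fan-in of order $\log(M/\eps_1)$. The accumulated tree depth satisfies $t_j=\Theta\big(\log M+\depth(T_{<j})+\log(1/\eps_j^{\mathrm{leaf}})\big)$, and since the union bound forces $\log(1/\eps_j^{\mathrm{leaf}})=\Theta(\depth(T_{<j})+\log(d/\eps_1))$, one gets the recursion $\depth(T_{<j+1})\le 2\,\depth(T_{<j})+O(\log(Md/\eps_1))$, which solves to $\depth(T_{<d})=O(2^d\log(M/\eps_1))$ --- this is where the $2^d$ enters. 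Allotting $\eps_1/(d-1)$ of the error budget to each round makes the total failure probability $\le\eps_1$ (the additive $\delta$ loss of each invocation of Corollary~\ref{cor:multi-for-ac0} is folded into this allotment). Sampling round $j$'s live-set distribution costs $O\big(\log^{O(1)}(M)\cdot\log(1/\eps_j^{\mathrm{leaf}})\big)=O\big(2^j\log^{O(1)}(M)\log(M/\eps_1)\big)$ random bits --- here I use the $d=3$ case of Theorem~\ref{thm:HS16} inside Corollary~\ref{cor:multi-for-ac0}, together with $M\ge n$ --- so summing this geometric series over $j\le d-1$ yields seed length $O\big(2^d\log^C(M)\log(1/\eps_1)\big)$. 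Finally, the rounds compose cleanly because each round's distribution chooses its live positions \emph{inside} the previous round's live set and fixes the newly killed positions to independent uniform bits --- precisely the ``stars pseudorandom, non-stars fully random'' form of Corollary~\ref{cor:multi-for-ac0} (via Proposition~\ref{prop:swap}) --- so the final live set $\bL$ has the claimed $p$-regular marginals and the total seed is the sum of the per-round seeds.

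There is no new idea here beyond Corollary~\ref{cor:multi-for-ac0}; the content is entirely in the bookkeeping, and the step requiring the most care is invoking the multi-switching lemma \emph{at each leaf of the accumulated tree separately}. This is what keeps every family of size $\le M$, hence the bottom fan-in --- and therefore $p$ --- free of any $\eps_1$-dependence; the price is only the $2^{\depth}$ factor in the per-round failure probability, which is exactly what generates the $2^d$ in the final tree depth. The remaining routine-but-necessary checks are that the displayed depth recursion closes at $O(2^d\log(M/\eps_1))$, and that the geometrically growing per-round seed lengths --- including the $\log^{O(1)}(M)$ overheads coming from the sizes of the depth-$3$ circuits (the $n^{O(\ell)}$ and $Q2^{O(kq)}$ terms) fed to Theorem~\ref{thm:HS16} --- still sum to $O(2^d\log^C(M)\log(1/\eps_1))$ for an absolute constant $C$.
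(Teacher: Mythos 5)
Your proposal follows essentially the same route as the paper: iterate the pseudorandom multi-switching lemma (Corollary~\ref{cor:multi-for-ac0}) bottom-up, one round per layer plus a $k=1$ preprocessing round, applying it at every leaf of the accumulated common partial decision tree with the per-leaf failure parameter scaled down to absorb a union bound over the at most $2^{\depth}$ leaves, composing the independent per-round star-distributions by intersection, and letting the depth recursion $\depth(T_{<j+1})\le 2\,\depth(T_{<j})+O(\log(Md/\eps_1))$ produce the $2^d$ factor while $p$ stays free of $\eps_1$. Two bookkeeping slips are worth fixing. First, your invariant ``after round $j$ the leaf circuits have depth $d-j$'' is off by one: the $k=1$ round does not reduce depth, it only converts ``depth $d$, unbounded bottom fan-in'' into ``depth $d$, bottom fan-in $\log M$'' (the rewritten depth-$\log M$ trees re-expand into a new bottom layer of width-$\log M$ clauses that merges with the gate layer above). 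So with rounds $1,\dots,d-1$ as you describe, the final leaves would still be width-$\log M$ CNFs/DNFs rather than decision trees; you need $d$ rounds in total (one preprocessing round at $p=\Omega(1)$ plus $d-1$ depth-reduction rounds at $p=\Omega(1/\log M)$), which is exactly what the paper does and is also what your own stated product $p=\Omega(1/\log^{d-1}M)$ presupposes --- as written, your round count would give $\Omega(1/\log^{d-2}M)$. Second, your invariant that ``each layer has at most $M$ gates'' fails at the freshly created bottom layer: an $M$-way gate over depth-$(\log M)$ decision trees, re-expressed in depth two, can have $M\cdot 2^{\log M}=M^2$ clauses, so from the second depth-reduction round onward the families fed to Corollary~\ref{cor:multi-for-ac0} consist of at most $M$ CNFs/DNFs each with $Q=M^2$ clauses. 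This is harmless --- the family size (which drives the failure probability) stays at most $M$, and $Q=M^2$ only enters the seed length through $\log^c(MQ)=O(\log^c M)$ --- but it is exactly the subtlety the paper flags, and your proof should track it explicitly rather than assert the per-layer bound of $M$.
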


\begin{proof} 
Fix $t := \log(2dM^5/\eps_1)$.  

\medskip 
\noindent {\bf Preprocessing stage:} We begin with a zeroth stage of preprocessing to trim the bottom fan-in of $\calC$: applying Corollary~\ref{cor:multi-for-ac0} with $\mathscr{F}$ being the bottom layer gates of $\calC$ (viewed as depth-$2$ circuits of size $Q \le n$ and bottom fan-in $k=1$) and $\eps_0 = \eps_1/d$, we get that there is a distribution $\calR_\stars^{(0)}$ such that $\calR_\stars^{(0)}$ can be sampled with $s_0 := \log^{c}(Mn)\log(d/\eps_1)$ random bits (where $c$ is the universal constant from Corollary~\ref{cor:multi-for-ac0}), $\calR_\stars^{(0)}$ is $p_0$-regular for $p_0 = \Omega(1)$,  and 
\[ \mathop{\Prx_{\bL\leftarrow\calR_\stars^{(0)}}}_{\brho^{(0)}\leftarrow\zo^{[n]\setminus \bL}}\big[\,\calC \uhr \brho^{(0)} \text{~is not a $(t, \acz(\text{depth $d$, bottom fan-in $\log M$}))$-decision tree}\, \big] \le \frac{\eps_1}{d}. \] 
{\bf First stage:} Let $T^{(0)}$ be any good outcome of the zeroth stage above, a $(t, \acz($depth $d$, bottom fan-in $\log M))$-decision tree. Note that there are at most $2^t$ many $\acz(\text{depth $d$, bottom fan-in $\log M$})$ circuits at the leaves of this depth-$t$ decision tree $T^{(0)}$, each of size at most $M$. Fix any such circuit $\calC'$. Applying Corollary~\ref{cor:multi-for-ac0} to $\calC'$, with $\mathscr{F}$ being all its bottom layer depth-$2$ subcircuits of bottom fan-in $\log M$ (so $Q \le M$) and $\eps_0 = \eps_1/(d2^t)$, we get that there is a distribution $\calR_\stars^{(1)}$ such that $\calR_\stars^{(1)}$ can be sampled with $s_1 := \log^{c}(M^2) \log(d2^t/\eps_1)$  random bits, $\calR_\stars^{(1)}$ is $p_1$-regular for $p_1 = \Omega(1/\log M)$, and  
\[ \mathop{\Prx_{\bL\leftarrow\calR_\stars^{(1)}}}_{\brho^{(1)}\leftarrow\zo^{[n]\setminus \bL}}\big[\,\calC' \uhr \brho^{(1)} \text{~is not a $(2t, \acz(\text{depth $d-1$, bottom fan-in $\log M$}))$-decision tree}\, \big] \le \frac{\eps_1}{d\,2^t}. \] 
Taking a union bound over all the circuits at the leaves of $T^{(0)}$ (at most $2^t$ of them), we get that 
\[ \mathop{\Prx_{\bL\leftarrow\calR_\stars^{(1)}}}_{\brho^{(1)}\leftarrow\zo^{[n]\setminus \bL}}\big[\,T^{(0)} \uhr \brho^{(1)} \text{~is not a $(t+2t, \acz(\text{depth $d-1$, bottom fan-in $\log M$}))$-decision tree}\, \big] \le \frac{\eps_1}{d}. \] 
Let $T^{(1)}$ be any good outcome of the above, and consider any circuit $\calC''$ at a leaf of this depth-$3t$ decision tree. We note a subtlety at this point (this same subtlety is present in applications of the standard switching lemma): while $\calC''$ has at most $M$ gates in total from levels $1$ to $d-2$ (indeed, its number of gates in those layers is at most that of $\calC$), each of its bottom layer depth-$2$ subcircuits may have size as large as $M^2$. This is because the $M$-way AND of depth-$(\log M)$ decision trees, when expressed as depth-$2$ circuit, can have size as large as $M \cdot 2^{\log M} = M^2$. (And of course the same is true for the $M$-way OR.) Therefore from the second stage onwards, we will always apply Corollary~\ref{cor:multi-for-ac0} with $\mathscr{F}$ being a family of $M$ many $M^2$-clause $(\log M)$-CNFs (or DNFs), and so $Q = M^2$. \smallskip

\noindent {\bf The $i$-th stage:} We repeat for $d-2$ more stages, where in the $i$-th stage we consider a good outcome $T^{(i-1)}$ of the previous stage, a $((2^{i}-1)t, \acz(\text{depth $d-i+1$, bottom fan-in $\log M$}))$-decision tree.  Fix any subcircuit $\calC'''$ of at a leaf of this depth-$((2^i-1)t)$ decision tree $T^{(i-1)}$.  Applying Corollary~\ref{cor:multi-for-ac0} to $\calC'''$, with $\mathscr{F}$ being all its bottom layer depth-$2$ subcircuits of bottom fan-in $\log M$ (as noted above, we take $Q = M^2$) and 
\[ \eps_0 = \frac{\eps_1}{d\,2^{(2^{i}-1)t}}, \] 
we get that there is a distribution $\calD_{\stars}^{(i)}$ such that $\calR_\stars^{(i)}$ can be sampled with 
\[ s_i := \log^c(M^3) \log(1/\eps_0) = 2^i \cdot O(t\log^c(M)) \] 
random bits, $\calR_\stars^{(i)}$ is $p_i$-regular for $p_i = \Omega(1/\log M)$, and
\[ \mathop{\Prx_{\bL\leftarrow\calR_\stars^{(i)}}}_{\brho^{(i)}\leftarrow\zo^{[n]\setminus \bL}}\big[\,\calC''' \uhr \brho^{(i)} \text{~is not a $(2^i t, \acz(\text{depth $d-i$, bottom fan-in $\log M$}))$-decision tree}\, \big] \le \frac{\eps_1}{d\,2^{(2^{i}-1)t}}. \]  
(We have used the fact that $\log(2M^{5}/\eps_0) = (2^i-1)t + \log(2dM^{5}/\eps_1) = 2^it$.)  Taking a union bound over all the circuits at the leaves of $T^{(i-1)}$ (at most $2^{(2^i-1)t}$ of them), we get that 
\[ \mathop{\Prx_{\bL\leftarrow\calR_\stars^{(i)}}}_{\brho^{(i)}\leftarrow\zo^{[n]\setminus \bL}}\big[\,T^{(i-1)} \uhr \brho^{(i)} \text{~is not a $((2^{(i+1)}-1)t, \acz(\text{depth $d-i$, bottom fan-in $\log M$}))$-decision tree}\, \big] \le \frac{\eps_1}{d}. \] 
{\bf The overall distribution.} Composing all $d$ stages described above (including the zeroth preprocessing stage), we get an overall distribution $\calR_\stars$ where a draw $\bL \leftarrow \calR_\stars$ is simply 
\[ \bL = \bL^{(0)} \cap \bL^{(1)} \cap \cdots \cap \bL^{(d-1)}, \quad \bL^{(i)} \leftarrow \calR_\stars^{(i)} \text{~for all $0\le i \le d-1$}. \]
This distribution $\calR_\stars$ can be sampled with 
\[ \sum_{i=0}^{d-1} s_i = O( 2^{d}  \log^C(M)\log(1/\eps_1)) \]
random bits for some constant $C > 0$, $\calR_\stars$ is $p$-regular for 
\[ p = \prod_{i=0}^{d-1} p_i = \Omega(1/\log^{d-1}(M)), \] 
and by a union bound over the $d$ many failure probabilities of $\eps_1/d$ from each of the $d$ stages, we have that indeed 
\[ \mathop{\Prx_{\bL\leftarrow\calR_\stars}}_{\brho\leftarrow\zo^{[n]\setminus \bL}}\big[\,\text{$\calC \uhr \brho$ is not a depth-$((2^d-1)t)$ decision tree}\,\big] \le \eps_1. \]
Since $(2^d-1)t = O( 2^d \log(M/\eps_1))$ (using $d \leq M$ so $\log(2dM^5/\eps_1) \leq \log(2M^6/\eps_1)$), this completes the proof. 
\end{proof}

\subsection{Pseudorandom simplification lemma for sparse $\F_2$ polynomials} 

To motivate the parameter settings used in this subsection, we recall the discussion about multi-switching lemmas and sparse $\F_2$ polynomials right before Section \ref{sec:ccdt}; observe that both the $\ast$-probability $p$ and the degree of the $\F_2$ polynomials obtained below are independent of the failure probability $\eps_2$.
\begin{lemma}[Pseudorandom simplification lemma for sparse $\F_2$ polynomials] 
\label{lem:PSL-F2} 
There is a universal constant $C > 0$ such that the following holds. Let $P$ be an $S$-sparse $\F_2$ polynomial and $\eps_2 \in (0,1)$. There is a distribution $\calR_\stars$ over subsets of $[n]$ such that 
\begin{enumerate}
\item A draw $\bL \leftarrow \calR_\stars$ can be sampled with $s = \log^C(Sn) \log (1/\eps_2)$ random bits. 
\item $\calR_\stars$ is $p$-regular for $p = 2^{-O(\sqrt{\log S})}$. 
\item The following simplification lemma holds with respect to $\calR_\stars$: 
\[ \mathop{\Prx_{\bL\leftarrow\calR_\stars}}_{\brho\leftarrow\zo^{[n]\setminus \bL}}\big[\,\text{$P \uhr \brho$ is not a $\big(O(\sqrt{\log S}) + \log(2/\eps_2), \F_2(\text{degree $\sqrt{\log S}$})\big)$-decision tree}\,\big] \le \eps_2. \] 

\end{enumerate}
\end{lemma}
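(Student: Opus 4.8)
The plan is to obtain Lemma~\ref{lem:PSL-F2} as essentially a single application of the pseudorandom multi-switching lemma, Lemma~\ref{lem:PSL}, with the family $\mathscr{F}$ taken to be the monomials of $P$ themselves, written as width-$1$ CNFs. Concretely, I would fix any ordering $m_1,\dots,m_S$ of the monomials of $P$ and write each $m_i = x_{j_1}\cdots x_{j_{w_i}}$ as the CNF $F_{m_i} := x_{j_1}\wedge\cdots\wedge x_{j_{w_i}}$ (a $w_i$-clause CNF whose clauses are single positive literals), so that $\mathscr{F} := (F_{m_1},\dots,F_{m_S})$ is an ordered list of $Q$-clause $k$-CNFs with $M=S$, $Q\le n$, $k=1$. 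The virtue of this encoding is that the canonical common $\ell$-partial decision tree machinery then tracks exactly the quantity we care about: for any restriction $\sigma$ the function $m_i\uhr\sigma$ is either $\equiv 0$ or a monomial whose degree is the number of stars of $\sigma$ lying inside $m_i$, and a degree-$g$ monomial has a depth-$\ell$ decision tree iff $g\le\ell$ (a monomial on $g$ variables requires exactly $g$ queries). Consequently, if $T$ is \emph{any} common $\ell$-partial decision tree for $\mathscr{F}\uhr\rho$ then along every root-to-leaf path $\pi$ of $T$ each $m_i\uhr\rho\pi$ has degree $\le\ell$, hence $P\uhr\rho\pi=\sum_i m_i\uhr\rho\pi$ has degree $\le\ell$; that is, $P\uhr\rho$ is a $(\depth(T),\F_2(\text{degree }\ell))$-decision tree. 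In particular, whenever $\depth(\CCDT_\ell(\mathscr{F}\uhr\rho))<t$, the polynomial $P\uhr\rho$ is a $(t,\F_2(\text{degree }\ell))$-decision tree.

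It then remains to choose the parameters of Lemma~\ref{lem:PSL} to make $\depth(\CCDT_\ell(\mathscr{F}\uhr\brho))\ge t$ rare while hitting the advertised bounds. I would set $\ell:=\lfloor\sqrt{\log S}\rfloor$ (so $\ell\ge k=1$ and the leaf degree is $\le\sqrt{\log S}$), $p:=2^{-q}$ with $q:=\ell+13$ (so $p=2^{-\Theta(\sqrt{\log S})}$ is independent of $\eps_2$ and $32pk=2^{-(\ell+8)}$), $\delta:=\eps_2/2$, and --- this is the only delicate point --- take $t$ to be the \emph{smallest multiple of $\ell$} that is at least $4\ell+\log(2/\eps_2)$, so that $t=O(\sqrt{\log S})+\log(2/\eps_2)$ \emph{and} $\ell\mid t$. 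Divisibility is essential: with $\ell\mid t$ we have $\lceil t/\ell\rceil=t/\ell$ exactly, so $M^{\lceil t/\ell\rceil}=S^{t/\ell}=2^{(\log S)(t/\ell)}$ with no rounding loss, whereas the weaker bound $\lceil t/\ell\rceil\le t/\ell+1$ would cost an extra factor of $M=S$ in the failure probability and force $t=\Omega(\log S)$, ruining the claimed depth. Using $\ell^2\le\log S<(\ell+1)^2$ (hence $(\log S)(t/\ell)<(\ell+1)^2(t/\ell)\le(\ell+3)t$) one then checks
\[
16^{t+\ell}\,M^{\lceil t/\ell\rceil}\,(32pk)^t \;=\; 2^{\,4(t+\ell)+(\log S)(t/\ell)-(\ell+8)t} \;<\; 2^{\,-t+4\ell}\;\le\;2^{-\log(2/\eps_2)}\;=\;\tfrac{\eps_2}{2},
\]
the last step using $t\ge 4\ell+\log(2/\eps_2)$; the choice $q=\ell+13$ is exactly what makes the coefficient of $t$ in the exponent negative, the factor $(32pk)^t$ having to absorb both the $2^{\Theta(\ell)\cdot t}$ growth of $2^{(\log S)(t/\ell)}$ and the $(2/\eps_2)^{\Theta(\sqrt{\log S})}$ hidden inside $S^{t/\ell}$.

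With these parameters, Lemma~\ref{lem:PSL}(3) makes the right-hand side of~\eqref{eq:failure} at most $\eps_2/2+\delta=\eps_2$, which together with the structural observation of the first paragraph gives the third claim (a $(t,\F_2(\text{degree }\ell))$-decision tree is a fortiori a $(O(\sqrt{\log S})+\log(2/\eps_2),\F_2(\text{degree }\sqrt{\log S}))$-decision tree). Part (2) is immediate: $p=2^{-(\ell+13)}=2^{-O(\sqrt{\log S})}$. For part (1), I would plug into Lemma~\ref{lem:PSL}(1): since $k=1$ and $q=O(\sqrt{\log S})$, the depth-$3$ circuits being fooled have size $M(n^{O(\ell)}+Q2^{O(kq)})=S\cdot n^{O(\sqrt{\log S})}$, whose logarithm is $\log S+O(\sqrt{\log S}\log n)=\polylog(Sn)$, and the requisite fooling error $\delta/(M^{\lceil t/\ell\rceil}n^{O(t)})$ has logarithm $O((t/\ell)\log S+t\log n+\log(1/\eps_2))=\polylog(Sn)\cdot\log(1/\eps_2)$ (using $t/\ell=O(1)+\log(2/\eps_2)/\ell$ and $\log S/\ell=O(\sqrt{\log S})$); hence $r=\polylog(Sn)\cdot\log(1/\eps_2)$ and $\bL\leftarrow\calR_\stars$ is sampled with $O(r\log n)=\log^C(Sn)\log(1/\eps_2)$ bits for a suitable constant $C$.

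The one place I expect to need genuine care is the parameter bookkeeping of the second paragraph: simultaneously keeping $t=O(\sqrt{\log S})+\log(2/\eps_2)$ (which forces $\ell\mid t$, lest $M^{\lceil t/\ell\rceil}$ acquire a spurious factor $S$), keeping the leaf degree $\le\sqrt{\log S}$ (which pins $\ell=\lfloor\sqrt{\log S}\rfloor$ and so makes $(\log S)(t/\ell)$ as large as $\approx\ell t$), and choosing $p=2^{-\Theta(\sqrt{\log S})}$ \emph{independent of $\eps_2$} but still small enough that $(32pk)^t$ dominates everything --- i.e., verifying the displayed inequality. Everything else is a direct invocation of Lemma~\ref{lem:PSL} together with the first-paragraph observation relating common $\ell$-partial decision trees for $\mathscr{F}$ to the degree of $P$ along paths.
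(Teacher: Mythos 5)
Your proposal is correct and is essentially the paper's own proof: the paper likewise views $P$ as a parity of $S$ AND gates (so $\mathscr{F}$ consists of $S$ many $Q$-clause $k$-CNFs with $Q\le n$, $k=1$), applies Lemma~\ref{lem:PSL} with $\ell=\sqrt{\log S}$, $t=A\sqrt{\log S}+\log(2/\eps_2)$, $p=2^{-B\sqrt{\log S}}$, $\delta=\eps_2/2$, and gets the seed length from the $d=3$ case of Theorem~\ref{thm:HS16}; your first paragraph just makes explicit the (intended, and correct) translation from common $\ell$-partial decision trees to $(t,\F_2(\text{degree }\ell))$-decision trees. One side remark is wrong, though it does not affect your argument: requiring $\ell\mid t$ is not essential, and dropping it does not force $t=\Omega(\log S)$. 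The rounding loss is only an extra factor $M^{1}=S\le 2^{O(\ell^2)}\le 2^{O(\ell t)}$ (since $t\ge\ell$), which is absorbed by taking the constant $B$ in $p=2^{-B\sqrt{\log S}}$ a bit larger (as the paper does with its constants $A,B$), rather than by your choice $q=\ell+13$; insisting on $q=\ell+O(1)$ is what creates the apparent need for divisibility.
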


\begin{proof}
We observe that an $S$-sparse $\F_2$ polynomial is simply a $\PAR \circ \AND$ circuit with $S$ many bottom layer gates of unbounded fan-in.  With this point of view in mind, we apply Lemma~\ref{lem:PSL} with $\mathscr{F}$ being this family of $S$ many AND gates (viewed as depth-$2$ circuits of size $Q \le n$ and bottom fan-in $k=1$) and $\ell = \sqrt{\log S}$.  By choosing $t = A \cdot \sqrt{\log S} +  \log(2/\eps_2)$, $p = 2^{-B\sqrt{\log S}}$, and $\delta = \eps_2/2$, we get that the failure probability (\ref{eq:failure}) can be bounded by  
\begin{align*}  16^{t+\ell}S^{\lceil t/\ell\rceil} (32pk)^t + \delta &=
16^{(A+1)\sqrt{\log S} + \log(2/\eps_2)} \cdot S^{1+A} \cdot 2^{\sqrt{\log S} \cdot \log(2/\eps_2) } \cdot {\frac {32^{A \cdot \sqrt{\log S} + \log(2/\eps_2)}} {S^{AB} \cdot 2^{B \sqrt{\log S} \cdot \log(2/\eps_2)}}} + \frac{\eps_2}{2}\\
& < \eps_2,
\end{align*}
\ignore{We make this at most $\eps_2$ by choosing $t = 2\sqrt{\log S} + \log(2/\eps_2)$ and $\delta = \eps_2/2$.} where the inequality holds for a suitable choice of absolute constant values $A,B$.  The bound on $s$ follows from the $d=3$ case of Theorem~\ref{thm:HS16} and our setting of parameters,
and this completes the proof. 
\end{proof}

\section{PRGs for $\acz$ and sparse $\F_2$ polynomials from pseudorandom simplification lemmas} \label{sec:puttogether}

We will need the following easy fact for both our PRG constructions: we can derive from a $p$-regular distribution $\calR_\stars$ satisfying a pseudorandom simplification lemma (in the sense of our main results in the previous section, Lemmas~\ref{lem:PSL-AC0} and~\ref{lem:PSL-F2}) a distribution $\calR_\stars'$ supported entirely on sets of size $(pn)/2$, such that $\calR_\stars'$ also satisfies a pseudorandom simplification lemma with only a slightly worse failure probability.  More precisely, and in more generality:

\begin{proposition}[Condition on having sufficiently many stars]
\label{prop:condition} 
Fix any property $\Phi : \{0,1,\ast\}^n \to \zo$ of restrictions. 
Let $\calR_\stars$ be a $p$-regular distribution over subsets of $[n]$ and suppose  
\[ \mathop{\Prx_{\bL\leftarrow\calR_\stars}}_{\brho\leftarrow\zo^{[n]\setminus \bL}} \big[ \Phi(\brho) = 1 \big] \le \tau \]
for some $\tau > 0$.   Let $\calR_\stars'$ be the distribution of $\bL\leftarrow\calR_\stars$ conditioned on $\bL$ satisfying $|\bL| \ge (pn)/2$. Then 
\[ \mathop{\Prx_{\bL\leftarrow\calR_\stars'}}_{\brho\leftarrow\zo^{[n]\setminus \bL}} \big[ \Phi(\brho) = 1 \big] \le \frac{2\tau}{p}. \]
 \end{proposition}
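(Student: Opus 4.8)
The plan is a short first-moment argument followed by a conditioning step. First I would set $X := |\bL|$ for $\bL \leftarrow \calR_\stars$ and use $p$-regularity to compute the first moment: by linearity of expectation, $\Ex_{\bL\leftarrow\calR_\stars}[X] = \sum_{i\in[n]} \Prx_{\bL\leftarrow\calR_\stars}[i\in\bL] = pn$.

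Next, the key quantitative point is that the conditioning event $\{|\bL| \ge (pn)/2\}$ is not too unlikely. Since $X \le n$ always, I would split $\Ex[X] = pn$ according to whether $X < (pn)/2$ or $X \ge (pn)/2$:
\[ pn = \Ex_{\bL\leftarrow\calR_\stars}[X] \le \tfrac{pn}{2}\cdot \Prx_{\bL\leftarrow\calR_\stars}\big[X < (pn)/2\big] + n\cdot \Prx_{\bL\leftarrow\calR_\stars}\big[X \ge (pn)/2\big] \le \tfrac{pn}{2} + n\cdot \Prx_{\bL\leftarrow\calR_\stars}\big[X \ge (pn)/2\big], \]
which rearranges to $\Prx_{\bL\leftarrow\calR_\stars}[\,|\bL| \ge (pn)/2\,] \ge p/2$. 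This ``reverse Markov'' step, exploiting the upper bound $X \le n$, is the only place the structure of $\calR_\stars$ (its support being subsets of $[n]$) is used; a naive Markov bound on $X$ itself would not suffice here.

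Finally, I would combine this with the hypothesis via an elementary conditioning inequality. Write $\calE$ for the event $\Phi(\brho) = 1$ and $\calA$ for the event $|\bL| \ge (pn)/2$, both with respect to the joint draw of $\bL \leftarrow \calR_\stars$ and $\brho \leftarrow \zo^{[n]\setminus\bL}$. The hypothesis gives $\Pr[\calE] \le \tau$, and the previous step gives $\Pr[\calA] \ge p/2$. Since a draw from $\calR_\stars'$ followed by the induced uniform fill-in of $[n]\setminus\bL$ is exactly the joint distribution conditioned on $\calA$ (note $\calA$ depends only on $\bL$, so conditioning on it does not disturb the independent uniform choice of values for $[n]\setminus\bL$), we get
\[ \mathop{\Prx_{\bL\leftarrow\calR_\stars'}}_{\brho\leftarrow\zo^{[n]\setminus \bL}}\big[\,\Phi(\brho) = 1\,\big] = \Pr[\calE \mid \calA] = \frac{\Pr[\calE \cap \calA]}{\Pr[\calA]} \le \frac{\Pr[\calE]}{\Pr[\calA]} \le \frac{\tau}{p/2} = \frac{2\tau}{p}, \]
as claimed. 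I do not anticipate any real obstacle; the only points that require a moment of care are the reverse-Markov estimate in the second step and the observation just made that conditioning $\calR_\stars$ on $\calA$ is compatible with the independent uniform fill-in of the non-star coordinates.
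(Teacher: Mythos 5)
Your proof is correct and follows essentially the same route as the paper: compute $\Ex[|\bL|]=pn$ from $p$-regularity, deduce $\Prx_{\bL\leftarrow\calR_\stars}[|\bL|\ge pn/2]\ge p/2$, and then bound the conditional probability by $\Pr[\calE]/\Pr[\calA]\le 2\tau/p$. The only difference is that you spell out the reverse-Markov step (using $|\bL|\le n$) and the compatibility of conditioning with the uniform fill-in, both of which the paper leaves implicit.
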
 

\begin{proof} 
Since $\calR_\stars$ is $p$-regular we have that $\Ex_{\bL\leftarrow\calR_\stars} \big[|\bL|\big] = pn$, and so
\[ \Prx_{\boldsymbol{L}\leftarrow\calR_\stars}\big[\boldsymbol{L} \in \supp(\calR_\stars')\big]  = \Prx_{\boldsymbol{L}\leftarrow\calR_\stars}\Big[|\boldsymbol{L}| \ge \frac{pn}{2}\Big] \ge \frac{p}{2}.\] 
Hence 
\begin{align*}
 \mathop{\Prx_{\bL\leftarrow\calR_\stars'}}_{\brho\leftarrow\zo^{[n]\setminus \bL}}\big[\Phi(\brho)=1\big]  
& =  \mathop{\Prx_{\bL\leftarrow\calR_\stars}}_{\brho\leftarrow\zo^{[n]\setminus \bL}}\big[\Phi(\brho)=1 \mid \bL \in\supp(\calR_\stars')\big]  \\
& \le  \mathop{\Prx_{\bL\leftarrow\calR_\stars}}_{\brho\leftarrow\zo^{[n]\setminus \bL}}\big[\Phi(\brho)=1\big] \cdot  \frac1{\Prx_{\bL\leftarrow\calR_\stars}[\bL \in \supp(\calR_\stars')]} \ \le \ \frac{2\tau}{p}. \qedhere
\end{align*}
\end{proof}

\subsection{PRGs for $\acz$ circuits}

\begin{reptheorem}{thm:first}
For every $d\ge 2$, $M \geq n$, and $\eps > 0$, there is an $\eps$-PRG for the class $\mathscr{C}$ of $n$-variable size-$M$ depth-$d$ circuits with seed length $\log^{d+O(1)}(M)\log(1/\eps)$. 
\end{reptheorem}

\begin{proof} 
Applying Proposition~\ref{prop:condition} to the pseudorandom simplification lemma of Lemma~\ref{lem:PSL-AC0}, we get that for all $\eps_1 > 0$, there is a distribution $\calR'_\stars$ over subsets of $[n]$ such that 
\begin{enumerate} 
\item A draw $\bL\leftarrow\calR'_\stars$ can be sampled with $s_\SL = O( 2^{d} \log^{C}(M) \log(1/\eps_1)$ random bits, where $C > 0$ is the universal constant from Lemma~\ref{lem:PSL-AC0}.\footnote{Recalling the definition of efficient samplability (Section~\ref{sec:prelim}), we note that assuming $p\ge 1/\poly(n)$ (as is the case here), if $\calR_\stars$ can be sampled efficiently with $s$ random bits then so can $\calR_\stars'$.}
\item Every $L \in \supp(\calR'_\stars)$ satisfies $|L| \ge pn$ where $p = \Omega(1/\log^{d-1}(M))$. 
\item For all $\calC \in \mathscr{C}$,  
\[ \mathop{\Prx_{\bL\leftarrow\calR'_\stars}}_{\brho\leftarrow\zo^{[n]\setminus \bL}}\big[\,\text{$\calC \uhr \brho$ is not a decision tree of depth $O(2^d\log(M/\eps_1))$}\,\big] \le \frac{\eps_1}{p}. \]
\end{enumerate}
Setting $\eps_1 = \eps p^2/(2\ln n)$ and taking $\mathscr{C}_\simple$ to be the class of depth-$t$ decision trees where 
\[ t = O(2^d \log(M/\eps_1)) = O(d \,2^d\log(M/\eps)), \]
we get that a draw $\bL \leftarrow\calR'_\stars$ can be sampled with  
\[ s_\SL = O(2^d \log^C(M) \log(1/\eps_1) )=  O(d 2^d \log^C(M) \log((\log M)/\eps)) \]
random bits, and $\calR'_\stars$ satisfies
 \[ \Ex_{\bL\leftarrow\calR'_{\stars}}\bigg[ \Prx_{\brho\leftarrow\zo^{[n]\setminus \bL}}\big[\, (\calC \uhr \brho) \notin \mathscr{C}_\simple\big] \bigg]  \le \frac{\eps p}{2\ln n} \] 
for all $\calC \in \mathscr{C}$. Since $\mathscr{C}_\simple$ is $0$-fooled by any $t$-wise independent distribution, we get from Theorem~\ref{thm:AW} that there is an $\eps$-PRG for $\mathscr{C}$ with seed length 
 \[ O(s_\SL + t \log n) \cdot p^{-1}\ln n  = \log^{d+O(1)}(M)\log(1/\eps),\]
 and this completes the proof. \ignore{\rnote{I apologize for how ridiculous this rnote is....the final claimed bound $ \log^{d+O(1)}(M)\log(1/\eps)$ is ignoring the $2^{d}$ type factor that's present in $s_\SL$ and in $t$ so if $d$ is superconstant and bigger than $\Theta(\log \log M)$(as it could be) it seems to me the final claimed bound is not strictly 100 percent accurate.  (Do you think of the ``For every $d \geq 2$'' in the theorem statement as meaning that $d$ is a constant?  If so, then this is a non-issue.)   Should we write ``$(O(1) \cdot \log M)^{^{d+O(1)}}\log(1/\eps)$'' instead?}}
\end{proof}

\subsection{PRGs for sparse $\F_2$ polynomials}

\begin{reptheorem}{thm:second}
For every $S = 2^{\omega(\log \log n)^2}$ and $\eps > 0$ there is an $\eps$-PRG for the class $\mathscr{C}$ of $n$-variable $S$-sparse $\F_2$ polynomials with seed length $2^{O(\sqrt{\log S})}\log(1/\eps)$. 
\end{reptheorem} 

\begin{proof} 
Applying Proposition~\ref{prop:condition} to the pseudorandom simplification lemma of Lemma~\ref{lem:PSL-F2}, we get that for all $\eps_2 > 0$, there is a distribution $\calR'_\stars$ over subsets of $[n]$ such that 
\begin{enumerate}
\item A draw $\bL \leftarrow \calR'_\stars$ can be sampled with $s_\SL = \log^C(Sn)\log(1/\eps_2)$ random bits, where $C > 0$ is the universal constant from Lemma~\ref{lem:PSL-F2}.
\item Every $L\in \supp(\calR'_\stars)$ satisfies $|L| \ge pn$ where $p = 2^{-O(\sqrt{\log S})}$. 
\item For all $P \in \mathscr{C}$, 
\[ \mathop{\Prx_{\bL\leftarrow\calR'_\stars}}_{\brho\leftarrow\zo^{[n]\setminus \bL}}\big[\,\text{$P \uhr \brho$ is not a $\big(O(\sqrt{\log S}) +  \log(2/\eps_2), \F_2(\text{degree $\sqrt{\log S}$})\big)$-decision tree}\,\big] \le \frac{\eps_2}{p}. \] 
\end{enumerate}
Setting $\eps_2 = \eps p^2/(2\ln n)$ and taking $\mathscr{C}_\simple$ to be the class of $(t,\F_2(\text{degree $\sqrt{\log S}$}))$-decision trees where 
\[ t = O(\sqrt{\log S}) + \log(2/\eps_2) = O(\sqrt{\log S})  +  \log(1/\eps) \]
(where the second equality uses $S = 2^{\omega(\log \log n)^2}$),
we get that a draw $\bL \leftarrow\calR'_\stars$ can be sampled with  
\[ s_\SL = \log^C(Sn) \log(1/\eps_2) =  O\big(\log^{C+ \frac1{2}}(Sn)\log(1/\eps)\big)  \]
random bits, and $\calR'_\stars$ satisfies
 \[ \Ex_{\bL\leftarrow\calR_{\stars}}\bigg[ \Prx_{\brho\leftarrow\zo^{[n]\setminus \bL}}\big[\, (P \uhr \brho) \notin \mathscr{C}_\simple\big] \bigg]  \le \frac{\eps p}{2\ln n}\] 
 for all $P \in \mathscr{C}$.

We claim that the class of $(t,\F_2(\text{degree $k$}))$-decision trees can be $\delta$-fooled with seed length 
\[ s_\PRG(\delta) =  k \cdot O(t + 2^k\log(1/\delta)) + O(t \log n); \]
we defer the proof of this claim to the next subsection (see Lemma~\ref{lem:fool-simple}).   Recalling our definition of $\mathscr{C}_\simple$ where $t = O(\sqrt{\log S}) + \log(1/\eps)$ and $k = \sqrt{\log S}$, it follows from this claim that $\mathscr{C}_\simple$ can be $(\eps p/(2\ln n))$-fooled with seed length 
\begin{align*}s_\PRG &= 2^{O(\sqrt{\log S})} \log(1/\eps)
+ O(t \log n) = 2^{O(\sqrt{\log S})} \log(1/\eps) + O(\sqrt{\log S} \log n) + \log(1/\eps) \log n\\
&=2^{O(\sqrt{\log S})} \log(1/\eps)
\end{align*}
(where we have again used $S = 2^{\omega(\log \log n)^2}$).
Now applying Theorem~\ref{thm:AW}, we get that there is an $\eps$-PRG for $\mathscr{C}$ with seed length 
\begin{align*}
(s_\SL + s_\PRG) \cdot p^{-1}\ln n &= \Big(O\big( \log^{C+\frac1{2}}(Sn) \log(1/\eps)\big) + 2^{O(\sqrt{\log S})}\log(1/\eps)\Big) \cdot 2^{O(\sqrt{\log S})} \ln n \\
&= 2^{O(\sqrt{\log S})} \log(1/\eps)
\end{align*} 
(where the last equality yet again uses $S = 2^{\omega(\log \log n)^2}$),
and this completes the proof. 
 \end{proof}

\subsubsection{Fooling depth-$t$ decision trees with degree-$k$ $\F_2$ polynomials at its leaves} 

We recall the following well-known result of Viola:
\begin{theorem}[\cite{Vio09}] 
\label{thm:viola}
The sum of $k$ independent $(\frac1{16}\, \delta^{2^{k-1}})$-biased distributions $\delta$-fools the class of degree-$k$ $\F_2$ polynomials. 
\end{theorem}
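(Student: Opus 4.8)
The plan is to follow Viola's argument~\cite{Vio09}, proving the statement by induction on the degree $k$, driven by one Cauchy--Schwarz (``squaring'') step per level of the induction. Write $\mathrm{bias}_{\calD}(q) := \Ex_{x\leftarrow\calD}[(-1)^{q(x)}]$ for a distribution $\calD$ over $\F_2^n$ and a polynomial $q$, and let $D_1,\dots,D_k$ be independent $\eps$-biased distributions with $\eps=\tfrac1{16}\delta^{2^{k-1}}$. The base case $k=1$ is exactly the definition of an $\eps$-biased distribution: for an affine form $\langle a,x\rangle+b$, the two biases agree (both equal $(-1)^b$) if $a=0$, and otherwise $|\mathrm{bias}_{D_1}(\langle a,\cdot\rangle+b)|\le\eps\le\delta$ while $\mathrm{bias}_{\calU}(\langle a,\cdot\rangle+b)=0$.

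For the inductive step, fix a degree-$k$ polynomial $p$, set $D':=D_1+\cdots+D_{k-1}$, and write $\mathrm{bias}_{D_1+\cdots+D_k}(p)=\Ex_{v\leftarrow D',\,u\leftarrow D_k}[(-1)^{p(v+u)}]$. Squaring, applying Cauchy--Schwarz in $v$, introducing an independent copy $u'$ of $u$, and using the identity $p(v+u)+p(v+u')=(\Delta_{u+u'}p)(v+u)$, where $\Delta_h p:=p+p(\cdot+h)$ has degree $\le k-1$, one obtains
\[ \big|\mathrm{bias}_{D_1+\cdots+D_k}(p)\big|^2 \;\le\; \Ex_{h}\big[\,\mathrm{bias}_{\mathcal{E}_h}(\Delta_h p)\,\big], \]
where $h=u+u'$ is distributed as the self-convolution $D_k\ast D_k$ (hence is $\eps^2$-biased), and $\mathcal{E}_h$ is the conditional distribution of $v+u$ given $h$, which is $D'$ convolved with some independent (conditional) distribution. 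Since $\Delta_h p$ has degree $k-1$, the inductive hypothesis applies: $D'$ already $\delta_{k-1}$-fools degree-$(k-1)$ polynomials, and convolving $D'$ with an independent summand preserves this (average the guarantee over the extra shift). Thus $\mathrm{bias}_{\mathcal{E}_h}(\Delta_h p)=\mathrm{bias}_{\calU}(\Delta_h p)\pm\delta_{k-1}$, and expanding the autocorrelation $\mathrm{bias}_{\calU}(\Delta_h p)=\Ex_{x\leftarrow\calU}[(-1)^{p(x)}(-1)^{p(x+h)}]$ in the Fourier basis gives
\[ \Ex_{h\leftarrow D_k\ast D_k}\big[\,\mathrm{bias}_{\calU}(\Delta_h p)\,\big] = \sum_{S}\widehat{(-1)^p}(S)^2\cdot\mathrm{bias}_{D_k}(S)^2 \;\le\; \mathrm{bias}_{\calU}(p)^2+\eps^2, \]
using Parseval and $|\mathrm{bias}_{D_k}(S)|\le\eps$ for $S\ne\emptyset$. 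Altogether $\big|\mathrm{bias}_{D_1+\cdots+D_k}(p)\big|^2\le\mathrm{bias}_{\calU}(p)^2+\eps^2+\delta_{k-1}$, and the resulting recursion (with $\delta_1=\eps$) solves to $\delta_k=O\!\big(\eps^{1/2^{k-1}}\big)$; choosing $\eps=\tfrac1{16}\delta^{2^{k-1}}$ makes $\delta_k\le\delta$.

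The step I expect to be the main obstacle is \emph{converting this squaring estimate into a genuine two-sided fooling bound}. The inequality $|\mathrm{bias}_{D_1+\cdots+D_k}(p)|^2\le\mathrm{bias}_{\calU}(p)^2+(\text{small})$ bounds $|\mathrm{bias}_{D_1+\cdots+D_k}(p)|$ but not its sign, so it is vacuous when $|\mathrm{bias}_{\calU}(p)|$ is close to its maximum possible value $1-2^{-k+1}$ (attained, e.g., by $p=x_1\cdots x_k$). Such highly biased polynomials must be handled separately: by the classical description of minimum-weight Reed--Muller codewords, a degree-$k$ polynomial whose bias is extremal (or near-extremal) is, up to negation and an invertible linear change of variables, a product $\ell_1\cdots\ell_k$ of affine forms with linearly independent linear parts, and for such a polynomial a \emph{single} $\eps$-biased distribution already $O(\eps)$-fools it (expand the product-of-affine-forms indicator into $2^k$ parities and apply $\eps$-biasedness to each nonconstant one). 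Organizing this case split cleanly --- so that the ``small bias'' regime is controlled by the squaring recursion, the ``large bias'' regime by the structural argument, and the two meet to yield the stated constant $\tfrac1{16}$ --- is the delicate part; everything else is the routine Cauchy--Schwarz and Fourier bookkeeping sketched above.
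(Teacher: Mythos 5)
The paper does not prove Theorem~\ref{thm:viola} at all---it is imported verbatim from \cite{Vio09}---so the benchmark here is Viola's own argument. Your Cauchy--Schwarz/derivative/Fourier computation is indeed the first half of that argument, and you have correctly located the crux: the squaring step only controls $|\mathrm{bias}_{D_1+\cdots+D_k}(p)|$ and gives nothing two-sided once $p$ is noticeably biased. But the patch you propose for the biased case has a genuine hole. The Reed--Muller structure theory you invoke (minimum-weight codewords are products of $k$ affine forms, with Kasami--Tokura-type extensions reaching only weights below twice the minimum) applies only when $|\Ex_{\bx\leftarrow\calU}[(-1)^{p(\bx)}]|$ is within a $2^{-\Omega(k)}$ window of the extremal value $1-2^{1-k}$, whereas the squaring bound $|\mathrm{bias}_{D_1+\cdots+D_k}(p)|^2\le \mathrm{bias}_{\calU}(p)^2+O(\eps^2)+\delta_{k-1}$ is informative only when $\mathrm{bias}_{\calU}(p)$ is at most on the order of the target error (roughly $\sqrt{\delta_{k-1}}$). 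Everything in between is covered by neither case: for instance, the sum of about $2^{k-1}$ degree-$k$ monomials on disjoint variable sets has bias about $e^{-1}$ and weight bounded away from the minimum by a constant factor of $2^n$, and for tiny $\delta$ your squaring bound is consistent with the pseudorandom bias being $0$ or even $-e^{-1}$, while no product-of-affine-forms structure exists. This middle regime cannot be reached by sharpening coding-theoretic characterizations (and even in the Kasami--Tokura range the codewords are not all products of affine forms), so the case split as you organize it does not close.

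What \cite{Vio09} actually does for \emph{every} polynomial with non-negligible bias is a structure-free self-correction step. Since $(-1)^{p(x)+p(x+z)}=(-1)^{p(x)}(-1)^{p(x+z)}$, one has the pointwise identity $\Ex_{\bz\leftarrow\calU}\big[(-1)^{p(x)+p(x+\bz)}\big]=(-1)^{p(x)}\cdot\mathrm{bias}_{\calU}(p)$, and for each fixed $z$ the polynomial $p(x)+p(x+z)$ has degree at most $k-1$ in $x$. Averaging over $x\leftarrow D'=D_1+\cdots+D_{k-1}$ (which by induction $\delta_{k-1}$-fools degree-$(k-1)$ polynomials) gives $\mathrm{bias}_{D'}(p)\cdot\mathrm{bias}_{\calU}(p)=\mathrm{bias}_{\calU}(p)^2\pm\delta_{k-1}$, i.e.\ $\mathrm{bias}_{D'}(p)=\mathrm{bias}_{\calU}(p)\pm\delta_{k-1}/|\mathrm{bias}_{\calU}(p)|$; adding the independent $D_k$ only helps, since the class is closed under shifts. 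Thus all $p$ with $|\mathrm{bias}_{\calU}(p)|\ge\tau$ are fooled with error $\delta_{k-1}/\tau$, the case $|\mathrm{bias}_{\calU}(p)|<\tau$ is exactly your squaring bound with error $\tau+\sqrt{\tau^2+O(\eps^2)+\delta_{k-1}}$, and balancing $\tau\approx\sqrt{\delta_{k-1}}$ yields the recursion $\delta_k=O(\sqrt{\delta_{k-1}}+\eps)$, hence $\delta_k=O(\eps^{1/2^{k-1}})$ and, after the constant bookkeeping carried out in \cite{Vio09}, the stated parameters. Replacing your Reed--Muller case with this derivative-averaging argument is precisely the missing ingredient; the rest of your sketch is sound up to absolute constants.
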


Earlier work of Lovett~\cite{Lov09} proved the weaker statement with $2^k$ independent copies instead of $k$. We note that Lovett's result suffices for our purposes. 

We will need a few simple facts about distributions. Recall that a distribution $\calD$ is a \emph{mixture} of
\emph{component} distributions $\calD^{(1)},\dots,\calD^{(\ell)}$ if there exist non-negative weights $w_1,\dots,w_\ell$ summing to 1 such that making a draw
from $\calD$ corresponds to first drawing $i \in [\ell]$ with probability $w_i$ and then making a draw from $\calD^{(i)}.$

\begin{fact} \label{fact:mix}
Let $\mathscr{C}$ be a class of functions and suppose that distributions $\calD^{(1)},\dots,\calD^{(\ell)}$ each $\delta$-fool $\mathscr{C}.$
Then any mixture $\calD$ of distributions $\calD^{(1)},\dots,\calD^{(\ell)}$ also $\delta$-fools $\mathscr{C}.$
\end{fact}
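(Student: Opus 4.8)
The plan is a direct one-line calculation via the triangle inequality; there is no real obstacle here, which is why the statement is labelled a ``Fact''. Fix any $f \in \mathscr{C}$, and let $w_1,\dots,w_\ell$ be the non-negative weights (summing to $1$) defining the mixture $\calD$ of $\calD^{(1)},\dots,\calD^{(\ell)}$. The first step is to use linearity of expectation to expand $\Ex_{\bx\leftarrow\calD}[f(\bx)] = \sum_{i=1}^\ell w_i \cdot \Ex_{\bx \leftarrow \calD^{(i)}}[f(\bx)]$, which is immediate from the definition of a mixture.

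The second step is to observe that, since the $w_i$ sum to $1$, we may likewise write $\Ex_{\bx\leftarrow\calU}[f(\bx)] = \sum_{i=1}^\ell w_i \cdot \Ex_{\bx\leftarrow\calU}[f(\bx)]$. Subtracting, applying the triangle inequality, and then invoking the hypothesis that each component distribution $\calD^{(i)}$ $\delta$-fools $\mathscr{C}$ gives
\[
\left| \Ex_{\bx\leftarrow\calD}[f(\bx)] - \Ex_{\bx\leftarrow\calU}[f(\bx)] \right|
= \left| \sum_{i=1}^\ell w_i \Big( \Ex_{\bx\leftarrow\calD^{(i)}}[f(\bx)] - \Ex_{\bx\leftarrow\calU}[f(\bx)] \Big) \right|
\le \sum_{i=1}^\ell w_i \cdot \delta = \delta.
\]
Since $f \in \mathscr{C}$ was arbitrary, this shows $\calD$ $\delta$-fools $\mathscr{C}$, completing the proof.

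The only point that warrants any care is making sure the relevant notion of ``$\delta$-fools'' is the distribution-level one (bounding $|\Ex_{\calD}[f] - \Ex_{\calU}[f]|$ for every $f \in \mathscr{C}$) rather than something tied to a specific PRG encoding; with that reading the argument above is complete, and there is no substantive difficulty.
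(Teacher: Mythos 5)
Your proof is correct, and it is exactly the standard convexity/triangle-inequality argument that the paper implicitly relies on (the paper states Fact~\ref{fact:mix} without proof, treating it as routine). Your closing remark is also the right reading: in this context ``$\delta$-fools'' refers only to the distribution-level condition $\bigl|\Ex_{\calD}[f]-\Ex_{\calU}[f]\bigr|\le\delta$ for all $f\in\mathscr{C}$, not to the seed-length/samplability clause of the paper's PRG definition.
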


We say that a class $\mathscr{C}$ of Boolean functions is \emph{closed under reorientations} if for all $f \in \mathscr{C}$ and $y \in \zo^n$, the function $g(x) := f(x + y)$ is also in $\mathscr{C}$ (where addition is coordinate-wise over $\F_2$).
An easy consequence of Fact~\ref{fact:mix} is the following:

\begin{fact}
\label{fact:more-is-better} 
Let $\mathscr{C}$ be a class of functions, closed under reorientations, that is $\delta$-fooled by a distribution $\calD$. Let $\calD'$ be any other independent distribution. Then the distribution $\calD + \calD'$, where a draw from $\calD + \calD'$ is $\bx + \by$ where $\bx \leftarrow \calD$ and $\by\leftarrow\calD'$, also $\delta$-fools $\mathscr{C}$. 
\end{fact}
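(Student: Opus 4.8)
The plan is to deduce Fact~\ref{fact:more-is-better} from Fact~\ref{fact:mix} by exhibiting $\calD + \calD'$ as a mixture of translates of $\calD$. First I would observe that, since $\calD$ and $\calD'$ are independent, a draw $\bx + \by$ from $\calD + \calD'$ (with $\bx \leftarrow \calD$ and $\by \leftarrow \calD'$) can equivalently be generated by first drawing $\by \leftarrow \calD'$ and then, with the outcome of $\by$ now fixed to some value $y \in \zo^n$, sampling from the translated distribution whose draw is $\bx + y$ for $\bx \leftarrow \calD$. In other words, $\calD + \calD'$ is the mixture, over $y \in \zo^n$ with weights $w_y := \Prx_{\by \leftarrow \calD'}[\by = y]$, of the component distributions $\calD + y$ (where $\calD + y$ denotes the distribution of $\bx + y$, $\bx \leftarrow \calD$).

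Next I would check that every component $\calD + y$ $\delta$-fools $\mathscr{C}$. Fix $f \in \mathscr{C}$ and $y \in \zo^n$. Because $\mathscr{C}$ is closed under reorientations, the function $g(x) := f(x + y)$ is also in $\mathscr{C}$, so the hypothesis that $\calD$ $\delta$-fools $\mathscr{C}$ gives
\[
\Big| \Ex_{\bx \leftarrow \calD}[g(\bx)] - \Ex_{\bx \leftarrow \calU}[g(\bx)] \Big| \le \delta.
\]
Now $\Ex_{\bx \leftarrow \calD}[g(\bx)] = \Ex_{\bx \leftarrow \calD}[f(\bx + y)] = \Ex_{\bx \leftarrow \calD + y}[f(\bx)]$ by definition of $\calD + y$, while $\Ex_{\bx \leftarrow \calU}[g(\bx)] = \Ex_{\bx \leftarrow \calU}[f(\bx + y)] = \Ex_{\bx \leftarrow \calU}[f(\bx)]$ since the uniform distribution on $\zo^n$ is invariant under the bijection $x \mapsto x + y$. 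Substituting, $\big| \Ex_{\bx \leftarrow \calD + y}[f(\bx)] - \Ex_{\bx \leftarrow \calU}[f(\bx)] \big| \le \delta$; as $f \in \mathscr{C}$ was arbitrary, $\calD + y$ $\delta$-fools $\mathscr{C}$.

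Finally I would apply Fact~\ref{fact:mix}: $\calD + \calD'$ is a mixture of distributions each of which $\delta$-fools $\mathscr{C}$, hence $\calD + \calD'$ itself $\delta$-fools $\mathscr{C}$, completing the proof. I do not expect any real obstacle here — the argument is entirely routine; the only two points that need a moment's care are that the uniform distribution is translation-invariant (which is what lets us drop the shift under $\calU$) and that the independence of $\calD$ and $\calD'$ is precisely what legitimizes the mixture decomposition in the first step, both of which are immediate over the finite domain $\zo^n$.
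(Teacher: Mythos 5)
Your proof is correct and follows exactly the route the paper intends: the paper states Fact~\ref{fact:more-is-better} as ``an easy consequence of Fact~\ref{fact:mix},'' and your argument—writing $\calD + \calD'$ as a mixture over outcomes $y$ of $\calD'$ of the translates $\calD + y$, using closure under reorientations plus translation-invariance of the uniform distribution to show each component $\delta$-fools $\mathscr{C}$, then invoking Fact~\ref{fact:mix}—is precisely that consequence spelled out. No gaps.
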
 

Finally we recall the following which is an easy consequence of the definition of a $\delta$-biased distribution:
\begin{fact}[Conditioning a $\delta$-biased distribution]
\label{fact:conditioning}
Let $\calD$ be a $\delta$-biased distribution over $\zo^n$. Fix $i\in [n]$ and $b \in \zo$, and let $\calD'$ denote the distribution of $\bx\leftarrow\calD$ conditioned on $\bx_i = b$.  Then the marginal distribution of $\calD'$ on the coordinates in $[n] \setminus \{i\}$ is $2\delta/(1-\delta) \le 4\delta$ biased.\ignore{\lnote{The overall distribution $\calD'$ (over all of $\zo^n$) is \emph{not} $\delta$-biased since $\ds\Prx_{\bx \leftarrow\calD'}[x_i = b] = 1$.}}
\end{fact}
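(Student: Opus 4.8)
The plan is to verify the definition of $\gamma$-biasedness directly for $\calD'$ restricted to $[n]\setminus\{i\}$, with $\gamma = 2\delta/(1-\delta)$. Writing $\chi_S(x) := (-1)^{\sum_{j\in S}x_j}$ for the $\pm1$-valued parity character, it suffices to show $|\Ex_{\bx\leftarrow\calD'}[\chi_S(\bx)]| \le 2\delta/(1-\delta)$ for every \emph{nonempty} $S\subseteq[n]\setminus\{i\}$: the empty set contributes bias $0$, matching the uniform distribution, and since $\chi_S$ does not depend on coordinate $i$, its expectation under $\calD'$ agrees with its expectation under the marginal of $\calD'$ on $[n]\setminus\{i\}$.

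First I would rewrite the conditional expectation as the ratio
\[ \Ex_{\bx\leftarrow\calD'}[\chi_S(\bx)] = \frac{\Ex_{\bx\leftarrow\calD}\big[\chi_S(\bx)\cdot\mathds{1}[\bx_i=b]\big]}{\Prx_{\bx\leftarrow\calD}[\bx_i=b]}, \]
and expand the indicator via $\mathds{1}[\bx_i=b] = \tfrac12\big(1+(-1)^b\chi_{\{i\}}(\bx)\big)$. The numerator then equals $\tfrac12\big(\Ex_{\bx\leftarrow\calD}[\chi_S(\bx)] + (-1)^b\,\Ex_{\bx\leftarrow\calD}[\chi_{S\cup\{i\}}(\bx)]\big)$, and the key point is that since $i\notin S$ and $S\neq\emptyset$, both $S$ and $S\cup\{i\}$ are nonempty subsets of $[n]$; hence $\delta$-biasedness of $\calD$ bounds each of the two expectations by $\delta$ in absolute value, so the numerator has magnitude at most $\delta$. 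Applying the same expansion to the denominator gives $\Prx_{\bx\leftarrow\calD}[\bx_i=b] = \tfrac12\big(1+(-1)^b\,\Ex_{\bx\leftarrow\calD}[\chi_{\{i\}}(\bx)]\big) \ge \tfrac12(1-\delta)$, again by $\delta$-biasedness of $\calD$ on the singleton $\{i\}$. Dividing yields $|\Ex_{\bx\leftarrow\calD'}[\chi_S(\bx)]|\le 2\delta/(1-\delta)$, and finally $2\delta/(1-\delta)\le 4\delta$ whenever $\delta\le\tfrac12$, which is the only regime of interest.

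There is essentially no obstacle here: this is a one-line Fourier-analytic computation. The only thing to keep track of is that the test parity $S$ ranges over $[n]\setminus\{i\}$ while the conditioning event touches coordinate $i$, so that expanding the conditioning event produces precisely the two nontrivial parities $S$ and $S\cup\{i\}$, to each of which the quantitative $\delta$-bias hypothesis on $\calD$ applies.
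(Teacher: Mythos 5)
Your proof is correct and is precisely the short Fourier computation the paper has in mind: Fact~\ref{fact:conditioning} is stated without proof as an easy consequence of the definition, and your expansion of the conditioning event into the two nontrivial parities $S$ and $S\cup\{i\}$ reproduces the stated constant $2\delta/(1-\delta)$ exactly. The only caveat is a convention issue internal to the paper rather than a gap in your argument: the Preliminaries define $\delta$-biased via fooling the $\{0,1\}$-valued parities, under which one only gets $|\mathbf{E}_{\mathcal{D}}[\chi_S]|\le 2\delta$ (not $\delta$), so the identical calculation yields $2\delta/(1-2\delta)$ for the conditioned marginal in that convention---still at most $4\delta$ in the relevant regime $\delta\le 1/4$, which is all that is used downstream in Lemma~\ref{lem:fool-simple}.
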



\begin{lemma}[Fooling decision trees with low-degree polynomials at leaves] 
\label{lem:fool-simple}
Let $\calD^{(1)}_{\text{$\delta'$-biased}},\ldots,\calD^{(k)}_{\text{$\delta'$-biased}}$ be $k$ independent $\delta'$-biased distribution where $\delta' = \frac1{16}\,\delta^{2^{k-1}} \cdot 4^{-t}$. Let $\calD_{\text{$t$-wise}}$ be an independent $t$-wise independent distribution.  Then the sum 
\[ \calD := \calD^{(1)}_{\text{$\delta'$-biased}} + \cdots + \calD^{(k)}_{\text{$\delta'$-biased}} + \calD_{\text{$t$-wise}} \]
$\delta$-fools the class of depth-$t$ decision trees with degree-$k$ polynomials at its leaves.  Since $\delta'$-biased distributions can be generated with seed length $O(\log n + \log(1/\delta'))$, and $t$-wise independent distributions with seed length $O(t \log n)$, we get that we can sample from $\calD$ using 
\[ k \cdot O(t + 2^k\log(1/\delta)) + O(t \log n) \] 
random bits. 
\end{lemma}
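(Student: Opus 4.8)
The plan is to write the decision tree as a disjoint sum over its leaves, then use $t$-wise independence to control the ``which leaf'' indicators and the $k$ biased distributions to fool the degree-$k$ polynomials sitting at the leaves. Fix a function $g$ in the class: a depth-$t$ decision tree $T$ in which each leaf $v$ is labelled by a degree-$\le k$ $\F_2$ polynomial $P_v$. The path to leaf $v$ has some length $m_v \le t$ and fixes a set $A_v \sse [n]$ of $m_v$ variables to a string $c^v$; let $\mathds{1}_v(x) := \mathds{1}[x|_{A_v} = c^v]$ be the indicator that $x$ reaches leaf $v$, and let $Q_v$ be the degree-$\le k$ polynomial on the coordinates in $[n]\setminus A_v$ obtained by substituting $x|_{A_v} = c^v$ into $P_v$. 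Then $g(x) = \sum_v \mathds{1}_v(x)P_v(x) = \sum_v \mathds{1}_v(x)Q_v(x|_{[n]\setminus A_v})$, with exactly one summand nonzero on any input, and $\sum_v 2^{-m_v} = 1$ because the leaves partition $\zo^n$. Hence it suffices to show, for each leaf $v$, that $\bigl|\Ex_{\bx\leftarrow\calD}[\mathds{1}_v(\bx)Q_v(\bx)] - \Ex_{\bx\leftarrow\calU}[\mathds{1}_v(\bx)Q_v(\bx)]\bigr| \le 2^{-m_v}\delta$; summing over $v$ then yields total error $\delta$.

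For the per-leaf bound, fix $v$ and abbreviate $A := A_v$, $m := m_v$, $\bar A := [n]\setminus A$. Write a draw $\bx\leftarrow\calD$ as $\bx = \bm{u}^{(1)} + \cdots + \bm{u}^{(k)} + \bm{w}$ with the $\bm{u}^{(i)}\leftarrow\calD^{(i)}_{\text{$\delta'$-biased}}$ and $\bm{w}\leftarrow\calD_{\text{$t$-wise}}$ independent. The first (easy) observation, using $m\le t$, is that $\bm{w}|_A$ is exactly uniform on $\zo^A$ and independent of $(\bm{u}^{(1)},\dots,\bm{u}^{(k)})$; consequently $\Prx_{\bx\leftarrow\calD}[\bx|_A = c^v] = 2^{-m}$ (as under $\calU$), and, conditioned on $\{\bx|_A = c^v\}$, the string $\bx|_{\bar A}$ is a \emph{mixture}, over tuples $(a_1,\dots,a_k)\in(\zo^A)^k$ with weights $\prod_i\Prx_{\calD^{(i)}}[\bm{u}^{(i)}|_A = a_i]$, of the independent sums $\bm{u}^{(1)}|_{\bar A} + \cdots + \bm{u}^{(k)}|_{\bar A} + \bm{w}|_{\bar A}$ in which $\bm{u}^{(i)}$ is now drawn from $\calD^{(i)}$ conditioned on $\bm{u}^{(i)}|_A = a_i$ and $\bm{w}$ from $\calD_{\text{$t$-wise}}$ conditioned on $\bm{w}|_A = c^v + \sum_i a_i$ (there is no reweighting of the $\bm{u}^{(i)}$'s precisely because $\bm{w}|_A$ is uniform). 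Now fix a tuple $(a_1,\dots,a_k)$: each $\bm{u}^{(i)}|_{\bar A}$ is the marginal on $\bar A$ of a $\delta'$-biased distribution conditioned on $m\le t$ of its coordinates, hence is $4^m\delta' \le 4^t\delta' = \tfrac1{16}\delta^{2^{k-1}}$-biased by $m$-fold application of Fact~\ref{fact:conditioning}. By Viola's theorem (Theorem~\ref{thm:viola}) the sum $\sum_i \bm{u}^{(i)}|_{\bar A}$ of these $k$ independent distributions therefore $\delta$-fools degree-$\le k$ $\F_2$ polynomials, and adding the independent $\bm{w}|_{\bar A}$ preserves this by Fact~\ref{fact:more-is-better} (degree-$\le k$ polynomials are closed under reorientations). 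Since $Q_v$ has degree $\le k$, we get $\Ex[Q_v(\bx|_{\bar A}) \mid \bx|_A = c^v] = \Ex_\calU[Q_v] \pm \delta$ uniformly over the tuple; averaging over the tuple and multiplying by $\Prx_\calD[\bx|_A = c^v] = 2^{-m}$ gives $\Ex_\calD[\mathds{1}_v Q_v] = 2^{-m}\Ex_\calU[Q_v] \pm 2^{-m}\delta$, whereas $\Ex_\calU[\mathds{1}_v Q_v] = 2^{-m}\Ex_\calU[Q_v]$; this is the required $2^{-m}\delta$ estimate.

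For the seed length, a $\delta'$-biased distribution over $\zo^n$ costs $O(\log n + \log(1/\delta'))$ random bits, and $\log(1/\delta') = \log 16 + 2t + 2^{k-1}\log(1/\delta) = O(t + 2^k\log(1/\delta))$; together with $O(t\log n)$ bits for $\calD_{\text{$t$-wise}}$ this is $k\cdot O(t + 2^k\log(1/\delta)) + O(t\log n)$ (the $O(k\log n)$ from the biased parts being absorbed since $k = O(t)$, as holds in our applications). The step I expect to be the crux is the structural description of $\bx|_{\bar A}$ conditioned on reaching leaf $v$: $t$-wise independence is far too weak to fool the degree-$k$ leaf polynomials on its own, so the argument must cleanly separate the two roles — $t$-wise independence only ensures the ``reach leaf $v$'' event has the correct probability $2^{-m}$ and leaves the $\bm{u}^{(i)}$'s unreweighted, while each $\bm{u}^{(i)}$ loses up to a factor $4^t$ in bias from the conditioning on the $\le t$ path coordinates, and the factor $4^{-t}$ built into $\delta'$ is calibrated to exactly absorb this so that Theorem~\ref{thm:viola} applies with final parameter $\delta$. (One also invokes the routine strengthening of Viola's theorem allowing the $k$ biased distributions to have possibly unequal biases, each at most $\tfrac1{16}\delta^{2^{k-1}}$.)
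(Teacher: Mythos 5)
Your proof is correct and follows essentially the same route as the paper's: decompose over the leaves, use $t$-wise independence to get the exact branch probabilities, write the conditional distribution on the live coordinates as a mixture of sums of conditioned distributions, bound each conditioned bias via Fact~\ref{fact:conditioning} (the $4^{-t}$ factor in $\delta'$ absorbing the loss), and conclude with Theorem~\ref{thm:viola} together with Facts~\ref{fact:more-is-better} and~\ref{fact:mix}. Your explicit identification of the mixture weights (no reweighting of the biased components because the $t$-wise part is uniform on the path coordinates) is a nice touch, but the paper sidesteps it by noting the weights are irrelevant once every component fools.
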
 

The intuition underlying Lemma~\ref{lem:fool-simple} is as follows: 

\begin{enumerate}
\item $\calD_{\text{$t$-wise}}$ ensures that every branch of the decision tree is taken with the right probability. 
\item By Fact~\ref{fact:conditioning}, each $\calD_{\text{$\delta'$-biased}}$ remains $(\frac1{16}\,\delta^{2^{k-1}}4^{-t}) \cdot 4^t =  \frac1{16}\,\delta^{2^{k-1}}$-biased even when conditioned on a length-$t$ branch.  By Theorem~\ref{thm:viola}, their sum $\delta$-fools the degree-$k$ polynomial at the leaf. 
\end{enumerate}

\begin{proof} 
Let $F$ be computed by a depth-$t$ decision tree $T$ with degree-$k$ polynomials at its leaves.  We begin by noting that every branch $\pi$ of $T$ is taken with the right probability under a random draw from $\calD$: 
\begin{align*}
\Ex_{\by\leftarrow\calD}\big[ F(\by) = 1\big] &= \sum_{\pi \in T}\, \Prx_{\by\leftarrow\calD}[\text{~$\by$ follows $\pi$~}] \cdot \Ex_{\by\leftarrow\calD}\big[ (F\uhr\pi)(\by)\mid \text{$\by$ follows $\pi$} \big] \\
&= \sum_{\pi \in T}\, \Prx_{\bx\leftarrow\calU}[\text{~$\bx$ follows $\pi$~}] \cdot \Ex_{\by\leftarrow\calD}\big[ (F\uhr\pi)(\by)\mid \text{$\by$ follows $\pi$} \big], \tag*{(since $\calD$ is $t$-wise independent and $|\pi|\le t$)}
\end{align*}
so it remains to show that 
\[ \Ex_{\by\leftarrow\calD}\big[ (F\uhr\pi)(\by)\mid \text{$\by$ follows $\pi$} \big]  = \Ex_{\bx\leftarrow\calU}\big[ (F\uhr\pi)(\bx) \big] \pm \delta \qquad \text{for all $\pi \in T$.}  \]
Since for all $\pi\in T$ $F \uhr \pi$ is a degree-$k$ polynomial over the coordinates in $[n] \setminus \supp(\pi)$, it suffices to show that $\calD \uhr \pi$, the distribution of $\by\leftarrow\calD$ conditioned on $\by$ following $\pi$, $\delta$-fools the class of degree-$k$ polynomials over the coordinates in $[n] \setminus \supp(\pi)$.
 
Fix $\pi \in T$ and let $S$ denote $\supp(\pi)$. We will express $\calD \uhr {\pi}$ as a mixture of distributions, and argue that each component distribution in the mixture $\delta$-fools the class of degree-$k$ polynomials over the coordinates in $[n]\setminus S$. Recall that $\calD$ is the sum of $k+1$ many independent distributions
 \[ \calD = \calD^{(1)}_{\text{$\delta'$-biased}} + \cdots + \calD^{(k)}_{\text{$\delta'$-biased}} + \calD_{\text{$t$-wise}}, \]
and so a draw $y = z^{(1)} + \cdots + z^{(k+1)}$ is consistent with $\pi$ iff 
\[ z^{(1)}_S + \cdots + z^{(k+1)}_S = \pi_S. \] 
Therefore, $\calD \uhr \pi$ is a mixture of component distributions each of which is the sum of $k+1$ independent distributions.  Each component distribution is specified by a $(k+1)$-tuple $(\pi^{(1)},\dots,\pi^{(k+1)})$ 
where $\supp(\pi^{(i)}) = S$ for all $i \in [k+1]$ and 
\[ \bigoplus_{i\in [k+1]} \pi^{(i)}_S = \pi_S.\]
Given such a 
$(k+1)$-tuple $(\pi^{(1)},\dots,\pi^{(k+1)})$, the corresponding component distribution is
\begin{equation} \Big(\calD^{(1)}_{\text{$\delta'$-biased}} \uhr \pi^{(1)}\Big) + \cdots + \Big(\calD^{(k)}_{\text{$\delta'$-biased}} \uhr \pi^{(k)}\Big) + \Big(\calD_{\text{$t$-wise}} \uhr 
\pi^{(k+1)}\Big). \label{eq:mixand}
\end{equation}
(The values of the mixing weights for the components are\ignore{ easy to describe, but} irrelevant for our purposes.) By Fact~\ref{fact:conditioning}, the marginal distribution of each $\calD^{(i)}_{\text{$\delta$-biased}} \uhr \pi^{(i)}$ on the coordinates in $[n]\setminus S$ is 
\[ \delta' \cdot 4^{|\pi^{(i)}|} \le  \left(\frac1{16}\,\delta^{2^{k-1}}4^{-t}\right) \cdot 4^t =  \frac1{16}\,\delta^{2^{k-1}} \]
biased, and hence by Viola's theorem (Theorem~\ref{thm:viola}) their sum 
\[  \Big(\calD^{(1)}_{\text{$\delta'$-biased}} \uhr \pi^{(1)}\Big) + \cdots + \Big(\calD^{(k)}_{\text{$\delta'$-biased}} \uhr \pi^{(k)}\Big)  \] 
$\delta$-fools the class of degree-$k$ polynomials over the coordinates in $[n] \setminus S$. By Fact~\ref{fact:more-is-better}, so does the distribution in (\ref{eq:mixand}).  By Fact~\ref{fact:mix} the mixture distribution $\calD \uhr \pi$ likewise
$\delta$-fools the class of degree-$k$ polynomials over the coordinates in $[n] \setminus S$, and the proof is complete.
\end{proof} 

\section*{Acknowledgements} 

We thank Prahladh Harsha and Srikanth Srinivasan for helpful discussions.

\bibliography{allrefs}{}
\bibliographystyle{alpha}

\appendix

\section{Proof sketch of Theorem~\ref{thm:HSL14-canonical}} \label{sec:HSL14-proof}

We sketch a proof of the following:

\begin{theorem}
\label{thm:our-HSL14}
Let $\mathscr{F} = (F_1,\ldots,F_M)$ be an ordered collection of $k$-CNFs. Then for all $t ,\ell \in \N$,
\[ \Prx_{\brho\leftarrow\calR_p}[\, \depth(\CCDT_\ell(\mathscr{F}\uhr\brho)\ge t\,] \le M^{\lceil t/\ell\rceil} (32pk)^t.  \] 
\end{theorem}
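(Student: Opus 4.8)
The plan is to adapt H{\aa}stad's encoding argument for the original switching lemma (Theorem~\ref{thm:HSL}) to the multi-switching setting, carefully tracking the segment structure of length-$t$ paths through canonical common $\ell$-partial decision trees. Recall that under $\brho\leftarrow\calR_{p}$ a fixed restriction $\rho$ with $s$ stars has probability $p^{s}((1-p)/2)^{n-s}$, so if $\rho^{\star}$ agrees with $\rho$ except that $t$ of its stars have been replaced by fixed bits, then $\Prx[\brho=\rho]=\big(\tfrac{2p}{1-p}\big)^{t}\Prx[\brho=\rho^{\star}]\le(4p)^{t}\Prx[\brho=\rho^{\star}]$, using $p\le\tfrac12$ (and the claimed bound is trivial unless $32pk<1$, so we may assume $p<\tfrac1{32k}\le\tfrac12$). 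Hence it suffices to construct an \emph{injective} map $\rho\mapsto(\rho^{\star},\mathsf{adv})$, defined on the set of bad restrictions $\rho$ (those with $\depth(\CCDT_{\ell}(\mathscr{F}\uhr\rho))\ge t$), such that $\rho^{\star}$ fixes exactly $t$ more coordinates than $\rho$ and $\mathsf{adv}$ ranges over an advice set $A$ of size at most $M^{\lceil t/\ell\rceil}(8k)^{t}$: injectivity together with $\sum_{\rho^{\star}}\Prx[\brho=\rho^{\star}]\le1$ then yields
\[ \Prx_{\brho\leftarrow\calR_{p}}[\,\brho\text{ bad}\,]\;\le\;(4p)^{t}\,|A|\;\le\;M^{\lceil t/\ell\rceil}(32pk)^{t}. \]

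To build the encoding, fix a bad $\rho$, a tree $T\in\CCDT_{\ell}(\mathscr{F}\uhr\rho)$, and a length-$t$ path $\Pi=\pi^{(1)}\circ\cdots\circ\pi^{(u)}$ through $T$; by Definition~\ref{def:ccpdt} and the analysis in Section~\ref{sec:bad-restrictions} we have $u\le\lceil t/\ell\rceil$, indices $1\le i_{1}\le\cdots\le i_{u}\le M$, and, for $j<u$, each $\eta^{(j)}$ is a \emph{full} path of length in $[\ell+1,\ell+k]$ through $\CDT(F_{i_{j}}\uhr\rho\,\pi^{(1)}\cdots\pi^{(j-1)})$ with $\supp(\pi^{(j)})=\supp(\eta^{(j)})$. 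Within each segment we run H{\aa}stad's single-CNF encoding: writing $\eta^{(j)}=\eta^{(j)}_{1}\circ\cdots\circ\eta^{(j)}_{r_{j}}$ along the clauses $C^{(j)}_{(1)},\dots,C^{(j)}_{(r_{j})}$ that the canonical-decision-tree construction visits (each being the first not-identically-$1$ clause at that point, with $\eta^{(j)}_{a}$ assigning \emph{all} its surviving literals, except possibly the very last clause of segment $u$), we replace each block $\eta^{(j)}_{a}$ by the assignment $\sigma^{(j)}_{a}$ that sets each queried variable so as to falsify its literal in $C^{(j)}_{(a)}$ (uniquely determined), and we set $\rho^{\star}$ to be $\rho$ composed with all of these $\sigma^{(j)}_{a}$; this fixes precisely the $t$ variables queried along $\Pi$. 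Into $\mathsf{adv}$ we record: the index list $(i_{1},\dots,i_{u})\in[M]^{u}$; and for each visited clause $C^{(j)}_{(a)}$, which $|\supp(\eta^{(j)}_{a})|$-element subset of its $\le k$ literal-positions was queried ($\le k^{|\supp(\eta^{(j)}_{a})|}$ choices, via $\binom{k}{m}\le k^{m}$) together with the bits of $\eta^{(j)}_{a}$ itself ($2^{|\supp(\eta^{(j)}_{a})|}$ choices), plus $O(1)$ bookkeeping bits per clause marking when a segment ends and when $\Pi$ ends, and the length ($\le t$) of the final segment. Since $\sum_{j,a}|\supp(\eta^{(j)}_{a})|=t$, the number of advice strings is at most $M^{u}\cdot k^{t}\cdot 2^{t}\cdot 2^{O(t)}$; being economical with the per-clause bookkeeping brings this to $M^{\lceil t/\ell\rceil}(8k)^{t}$, which combined with the factor $(4p)^{t}$ above gives the claimed bound $M^{\lceil t/\ell\rceil}(32pk)^{t}$.

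The step I expect to be the main obstacle is verifying that this map is injective, i.e.\ exhibiting a decoder that recovers $\rho$ (and the witnessing path) from $(\rho^{\star},\mathsf{adv})$; this is exactly the technical core of H{\aa}stad's argument. The decoder processes segments in order: having already reconstructed $\rho\,\pi^{(1)}\cdots\pi^{(j-1)}$ and read $i_{j}$, it simulates the canonical-decision-tree construction on $F_{i_{j}}$ restricted by this partial reconstruction; at each visited clause, which is falsified under $\rho^{\star}$, it uses the recorded literal-subset to identify the currently-queried variables and thereby ``peel off'' the corresponding block of $\sigma^{(j)}_{a}$, uses the recorded bits to recover $\eta^{(j)}_{a}$ (hence $\rho$ on those coordinates), and proceeds until the boundary flag declares the segment complete; the convention that each non-final $\eta^{(j)}$ is a \emph{full} path of length in $[\ell+1,\ell+k]$ (Definition~\ref{def:full}) is what makes the segmentation unambiguous. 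Carrying this through is routine given~\cite{Has14} and~\cite{TX13} except for two points of care: keeping $\ell$ a free parameter (the original lemma fixes $\ell=\log(2M)$) and accommodating the slightly different canonical-tree convention used here. No new idea is required, and we sketch the details below.
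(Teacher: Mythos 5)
Your proposal is correct and follows essentially the same route as the paper's own proof: a Razborov-style encoding--decoding argument that exploits the segment structure of a length-$t$ witnessing path (indices $i_1\le\cdots\le i_u$ with $u\le\lceil t/\ell\rceil$, full-path segments of length in $[\ell+1,\ell+k]$ except possibly the last), replaces each segment by the clause-falsifying assignment, and charges $M^u$ advice for the indices and roughly $(8k)^t$ for the per-clause subsets, bits, and bookkeeping against the $(4p)^t$ weight shift. The only cosmetic difference is that the paper packages the within-segment encoding/decoding as a black-box invocation of the main lemma of Razborov's proof of the single switching lemma (including its ``any extension'' form, which is exactly where the fullness of the non-final segments is used, as you correctly identify), whereas you re-derive that step inline.
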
 

Our proof sketch of Theorem~\ref{thm:our-HSL14} is carried out in the ``encoding-decoding'' framework of Razborov's alternative proof~\cite{Raz95} of the H{\aa}stad's original switching lemma~\cite{Hastad86}, Theorem \ref{thm:HSL}.  (For a detailed exposition of Razborov's proof technique see~\cite{Beame:94,Tha09} and Chapter \S14 of~\cite{AB09}.)
  We emphasize that the ideas in our proof of Theorem~\ref{thm:our-HSL14} are all from \cite{Has14}, but in our view the encoding--decoding presentation is more amenable to the derandomization that we ultimately require than the conditioning-based inductive argument given in \cite{Has14}.  We also note that a similar proof based on the encoding--decoding framework appears in Section 7 of~\cite{Tal:15tightbounds}.

\subsection{Bad restrictions and the structure of witnessing paths} 

Fix $\mathscr{F} = (F_1,\ldots,F_M)$ and consider the set $\calB \sse \{0,1,\ast\}^n$ of all \emph{bad} restrictions $\rho$, namely the ones such that 
\[ \depth(\CCDT_{\ell}(\mathscr{F}\uhr\rho)) \ge t. \] 
Fix any bad restriction $\rho \in \calB$. Recalling our definition of the set of canonical common partial decision trees (Definition~\ref{def:ccpdt}), there exists a canonical common $\ell$-partial decision tree $T \in \CCDT_\ell(\mathscr{F} \uhr \rho)$ and a path $\Pi$ of length exactly $t$ through $T$. Furthermore, we have that 
\begin{enumerate} 
\item There exist indices $1\le i_1 \le i_2 \le \cdots \le i_u \le M$ where $u \le \lceil t/\ell\rceil$, and 
\item $\Pi = \pi^{(1)} \circ \cdots \circ \pi^{(u)}$, where for all $j\in [u]$, we have that $\supp(\pi^{(j)}) =  \supp(\eta^{(j)})$ where  $\eta^{(j)}$ is a path through the canonical decision tree 
\[ \CDT(F_{i_j} \uhr \rho \circ \pi^{(1)}\circ\cdots\circ \pi^{(j-1)}).\]  
Furthermore, for every $j\in [u-1]$ we have that $\eta^{(j)}$ is a full path of length between $\ell+1$ and $\ell+k$ through the CDT, and $\eta^{(u)}$ is a path of length exactly $t - \sum_{j=1}^{u-1} |\supp(\eta^{(j)})|$. (Note that $\eta^{(u)}$ is not necessarily a full path.)

\end{enumerate}  
(Note that by (2), these restrictions $\pi^{(j)}$ are supported on mutually disjoint sets of coordinates.) 

\subsection{Encoding bad restrictions $\rho$} 

Recalling the statement of Theorem~\ref{thm:our-HSL14}, our goal is to bound $\Prx_{\brho\leftarrow\calR_p}[\, \text{$\brho \in \calB$}\,]$, the weight of the set $\calB$ of bad restrictions under $\calR_p$. To do so, we define an encoding of each bad restriction $\rho \in \calB$ as a different restriction $\rho' \in \{0,1,\ast\}^n$ and a small amount (say at most $m$ bits) of ``auxiliary information":  
\begin{align*}
\encode &: \calB \to \{0,1,\ast\}^n \times \zo^m \\
\encode(\rho) &= (\rho', \text{auxiliary information}) 
\end{align*}
This encoding should satisfy two key properties. First, it should be uniquely decodable, meaning that one is always able to recover $\rho$ given $\rho'$ and the auxiliary information; equivalently, the function $\encode(\cdot)$ is an injection.  Second, $\rho'$ should extend $\rho$ by exactly $t$ bits, meaning that $\supp(\rho) \sse \supp(\rho')$ and $|\supp(\rho') \setminus \supp(\rho)| = t$.  From this second property we get that 
\[ \frac{\Prx_{\brho\leftarrow\calR_p}[\, \brho = \rho'\,]}{\Prx_{\brho\leftarrow\calR_p}[\,\brho = \rho\,]} = \left(\frac{1-p}{2p}\right)^t, \] 
i.e.~that the weight of $\rho'$ under $\calR_p$ is larger than that of $\rho$ by a $O(p)^{-t}$ multiplicative factor.  It is not hard to see that together, these two properties imply that total weight of all bad restrictions with the \emph{same} auxiliary information is at most $O(p)^t$.  To complete the proof of Theorem~\ref{thm:our-HSL14}, we then bound the overall weight of $\calB$ via a union bound over all $2^m$ possible strings of auxiliary information, giving us a failure probability of 
\begin{equation} 2^m \cdot \left(\frac{2p}{1-p}\right)^t. 
\label{eq:failure-probability}
\end{equation} 

We now describe the encoding in more detail.  Given a bad restriction $\rho \in \calB$, the extension $\rho'$ of $\rho$ will be 
\begin{equation} \rho' = \rho\circ \sigma^{(1)} \circ\cdots\circ\sigma^{(u)}, \qquad u \le \lceil t/\ell \rceil \label{eq:our-encoding}
\end{equation} 
where $\sigma^{(j)}$ is a restriction that is supported on the same coordinates as $\pi^{(j)}$ for all $j\in [u]$.  (Hence these restrictions $\sigma^{(j)}$'s are supported on mutually disjoint sets of coordinates, every $\sigma^{(j)}$ has length between $\ell + 1$ and $\ell+k$, except $\sigma^{(u)}$ which has length $t - \sum_{j=1}^{u-1} |\supp(\sigma^{(j)})|$ and is not necessarily a full path.) 

We now define these restrictions $\sigma^{(j)}$. Recall that $\supp(\pi^{(j)}) = \supp(\eta^{(j)})$ where $\eta^{(j)}$ is a full path of length between $\ell+1$ and $\ell+k$ through the canonical decision tree $\CDT(F_{i_j} \uhr \rho \circ \pi^{(1)}\circ\cdots\circ \pi^{(j-1)})$, a full path witnessing that fact that $\CDT(F_{i_j} \uhr \rho \circ \pi^{(1)}\circ\cdots\circ \pi^{(j-1)}) > \ell$. That is, $\eta^{(j)}$ is a full path witnessing the fact that $\rho \circ \pi^{(1)}\circ\cdots\circ \pi^{(j-1)}$ is a bad restriction for the usual switching lemma, and $\pi^{(j)}$ is an assignment to the variables in $\supp(\eta^{(j)})$. (Once again this is with the possible exception of the segment $\pi^{(u)}$ of $\Pi$, which has length $t - \sum_{j=1}^{u-1} |\supp(\pi^{(j)})|$ and is not necessarily a full path.) Razborov's encoding--decoding proof of the usual switching lemma defines an encoding of this bad restriction $\rho \circ \pi^{(1)}\circ\cdots\circ \pi^{(j-1)}$ to an extension 
\[ \rho^{(j)} := \rho\circ \pi^{(1)} \circ \cdots \circ \pi^{(j-1)} \circ \sigma^{(j)} \] 
where $\sigma^{(j)}$ is supported on the same $\ell$ coordinates as $\eta^{(j)}$ (and hence $\pi^{(j)}$ as well). Razborov's proof hinges on the fact that given the $k$-CNF $F$, this encoding $\rho^{(j)}$, and a small amount of auxiliary information, one is able  to recover the bad restriction $\rho\circ \pi^{(1)}\circ\cdots\circ \pi^{(j-1)}$; that is, one as able to ``undo" $\sigma^{(j)}$ in $\rho^{(j)}$, flipping the coordinates in $\supp(\sigma^{(j)})$ from $\zo$ back to $\ast$.  This restriction $\sigma^{(j)}$ as defined in Razborov's proof is precisely the $\sigma^{(j)}$ we will use in our encoding (\ref{eq:our-encoding}).

We summarize the discussion above in the following fact: 

\begin{fact}[Main lemma in encoding--decoding proof of the usual switching lemma, notation specialized to our current context]
\label{fact:raz} 
Let $F_{i_j}$ be a $k$-CNF, $\rho \circ \pi^{(1)}\circ\cdots \pi^{(j-1)}$ be a restriction, and $\eta^{(j)}$ be  a path in $\CDT(F_{i_j}\uhr \rho\circ \pi^{(1)} \circ \cdots \circ \pi^{(j-1)})$.  There is a restriction $\sigma^{(j)}$ to the coordinates in $\supp(\eta^{(j)})$ such that given 
\begin{enumerate}
\item The $k$-CNF $F_{i_j}$, 
\item The restriction $\rho^{(j)} = \rho \circ \pi^{(1)} \circ \cdots \circ \pi^{(j-1)} \circ \sigma^{(j)}$, 
\item $|\supp(\eta^{(j)})|\cdot (2 + \log k)$ bits of auxiliary information $\iota(\rho \circ \pi^{(1)}\circ\cdots \circ\pi^{(j-1)}, F_{i_j})$,
\end{enumerate} 
a decoder is able to recover the restriction $\pi^{(1)} \circ \cdots \circ \pi^{(j-1)}$. 

Furthermore, if $\eta^{(j)}$ is a \emph{full} path in $\CDT(F_{i_j}\uhr \rho\circ \pi^{(1)} \circ \cdots \circ \pi^{(j-1)})$ (recall the definition of a full path given in Definition~\ref{def:full}) then given 
\begin{enumerate}
\item The $k$-CNF $F_{i_j}$, 
\item \emph{Any extension $\varrho^{(j)}$ of} the restriction $\rho^{(j)} = \rho \circ \pi^{(1)} \circ \cdots \circ \pi^{(j-1)} \circ \sigma^{(j)}$, 
\item $|\supp(\eta^{(j)})|\cdot (2 + \log k)$ bits of auxiliary information $\iota(\rho \circ \pi^{(1)}\circ\cdots \circ\pi^{(j-1)}, F_{i_j})$,\end{enumerate}
a decoder is able to ``undo" $\sigma^{(j)}$ in $\varrho^{(j)}$, by which we mean that she is able to recover the restriction $\bar{\varrho}^{(j)}$ where 
\[ 
\bar{\varrho}^{(j)}_i = \begin{cases} 
\ast & \text{if $i\in \supp(\sigma^{(j)})$} \\
\varrho^{(j)}_i & \text{otherwise.} 
\end{cases} 
\] 
\end{fact}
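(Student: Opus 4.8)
The plan is to recognize Fact~\ref{fact:raz} as the main technical lemma of Razborov's encoding--decoding re-proof of H{\aa}stad's switching lemma (see~\cite{Raz95,Beame:94} and \S14 of~\cite{AB09}), lightly adapted, and to establish it by exhibiting the encoding $\sigma^{(j)}$ explicitly together with a decoder. Writing $\tau := \rho\circ\pi^{(1)}\circ\cdots\circ\pi^{(j-1)}$, $F := F_{i_j}$, $\eta := \eta^{(j)}$, I would first decompose $\eta = \eta^{[1]}\circ\cdots\circ\eta^{[s]}$ as in Definition~\ref{def:full}, so $\eta^{[r]}$ assigns the surviving variables of the clause $C_{(r)}$ --- the first clause of $F$ not identically-$1$ under $\tau\circ\eta^{[1]}\circ\cdots\circ\eta^{[r-1]}$ --- and for $r<s$ this fixes \emph{all} surviving variables of $C_{(r)}$ and, since the path continues, satisfies $C_{(r)}$. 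I would then define $\sigma^{(j)} := \sigma^{[1]}\circ\cdots\circ\sigma^{[s]}$, where $\sigma^{[r]}$ is the \emph{unique} assignment to $\supp(\eta^{[r]})$ that falsifies every surviving literal of $C_{(r)}$; this $\sigma^{(j)}$ is supported on $\supp(\eta)$ as required, and $\iota(\tau,F)$ records, for each variable of $\supp(\eta)$ in turn, its index among the $\le k$ literals of its clause ($\log k$ bits), one bit flagging the last variable of its block, and one bit giving its value under $\eta$ --- at most $|\supp(\eta)|\,(2+\log k)$ bits.

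For the decoder I would maintain a running restriction $\gamma$, initialized to $\rho^{(j)} = \tau\circ\sigma^{(j)}$; for $r=1,\dots,s-1$ take $C_{(r)}$ to be the first clause \emph{false} under $\gamma$, use $\iota$ to read off which variables of $C_{(r)}$ were newly fixed and the values $\eta^{[r]}$ assigns them, and overwrite those coordinates of $\gamma$ with the $\eta^{[r]}$-values; for the last block take $C_{(s)}$ to be the first clause \emph{not satisfied} under $\gamma$ and proceed likewise. Correctness is the usual invariant: at the start of round $r$, $\gamma$ agrees with $\tau\circ\eta^{[1]}\circ\cdots\circ\eta^{[r-1]}$ off of $\supp(\sigma^{[r]})\cup\cdots\cup\supp(\sigma^{[s]})$, so every clause before $C_{(r)}$ carries a true literal inherited from $\tau$ or from $\eta^{[1]},\dots,\eta^{[r-1]}$ and is satisfied under $\gamma$, whereas $C_{(r)}$ is false (resp.\ still alive) under $\gamma$ because $\gamma$ agrees with $\sigma^{[r]}$ on $\supp(\sigma^{[r]})$; hence the ``first false/unsatisfied clause'' rule recovers $C_{(r)}$. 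After $s$ rounds $\gamma = \tau\circ\eta$, $\supp(\eta)$ is known, and $\ast$-ing out those coordinates yields $\tau$ (the ``$\pi^{(1)}\circ\cdots\circ\pi^{(j-1)}$'' of the statement). For the ``furthermore'' clause, when $\eta$ is \emph{full} every block $\eta^{[r]}$ fixes \emph{all} surviving variables of $C_{(r)}$, so $\sigma^{[r]}$ falsifies $C_{(r)}$ outright and $C_{(1)},\dots,C_{(s)}$ stay false under any extension $\varrho^{(j)}$ of $\rho^{(j)}$; running the same decoder on $\varrho^{(j)}$, the extra fixings carried by the extension only keep satisfied clauses satisfied and false clauses false, so ``first false clause'' still returns $C_{(r)}$ at every round, and after $s$ rounds the decoder knows $\supp(\sigma^{(j)})=\supp(\eta)$ and can output $\bar\varrho^{(j)}$ by $\ast$-ing exactly those coordinates out of $\varrho^{(j)}$.

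The hard part is the ``furthermore'' clause, specifically pinning down why fullness is indispensable there: if the last block $\eta^{[s]}$ were only partial then $C_{(s)}$ would still have live variables under $\rho^{(j)}$, an outer extension could fix them, and ``first false/unsatisfied clause'' would no longer be guaranteed to return $C_{(s)}$, nor could the decoder separate $C_{(s)}$'s encoded coordinates from those merely touched by the extension --- so I would need to verify carefully that, under the fullness hypothesis, nothing visible to the decoder is perturbed by passing from $\rho^{(j)}$ to an arbitrary extension. The remaining work is bookkeeping: confirming that the $2+\log k$ bits per variable really do let the decoder reconstruct each block and its $\eta$-values (the flag bits delimit blocks, the position bits identify variables within a clause, the value bits give $\eta$), and noting that the separate $\pi^{(j)}$-values needed to chain this lemma across $j$ in the proof of Theorem~\ref{thm:our-HSL14} are not part of $\iota$ but of an additional $t$-bit global auxiliary string, which is what makes the overall bound in (\ref{eq:failure-probability}) come out to $M^{\lceil t/\ell\rceil}(32pk)^t$.
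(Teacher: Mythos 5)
Your proposal is correct and is exactly the argument the paper intends: the paper states Fact~\ref{fact:raz} without proof as the main lemma of Razborov's encoding--decoding proof of the switching lemma~\cite{Raz95,Beame:94,Tha09,AB09}, and your reconstruction (falsifying assignments $\sigma^{[r]}$ on each CDT segment, $\log k + 2$ bits per queried variable for position/segment-delimiter/$\eta$-value, and the ``first non-satisfied clause'' decoding invariant) is that standard argument. You also correctly isolate the one point where the ``Furthermore'' clause differs from the basic lemma---fullness of every segment makes each $C_{(r)}$ outright falsified, so satisfied/falsified statuses are preserved under an arbitrary extension $\varrho^{(j)}$ and the decoder still recovers $\supp(\sigma^{(j)})$---and your remark that the $\pi^{(j)}$-values live in the separate $t$-bit global auxiliary string matches the accounting in Section~\ref{sec:aux}.
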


\subsection{Our auxiliary information}  
\label{sec:aux}
We will provide the decoder with 
\begin{enumerate}
\item $u \log M$ bits of information specifying the $u$ indices $i_1,\ldots,i_u \in [M]$.
\item  The auxiliary information $\iota(\rho \circ \pi^{(1)}\circ\cdots \circ\pi^{(j-1)}, F_{i_j})$ for all $j\in [u]$ (as defined in Fact~\ref{fact:raz}), a total of 
\[ \sum_{j=1}^u |\supp(\eta^{(j)})| \cdot (2+ \log k) = t \cdot (2 + \log k) \] 
many bits.
\item $t$ bits of information specifying the length-$t$ path $\Pi = \pi^{(1)} \circ \cdots \circ \pi^{(u)}$ through $\CCDT_\ell(\mathscr{F} \uhr \rho)$. 
\end{enumerate} 
This is a total of 
\[ m := u\log M + t\log k + 3t \] 
bits of auxiliary information; recalling equation (\ref{eq:failure-probability}) and the preceding discussion, to establish Theorem~\ref{thm:our-HSL14} it remains to argue that the map $\encode(\rho) = (\rho',\text{auxiliary information})$ is indeed invertible. 

\subsection{Decoding} 

Fix $\mathscr{F} = (F_1,\ldots,F_M)$ and consider a bad restriction $\rho \in\calB$, one such that 
\[ \depth(\CCDT_{\ell}(\mathscr{F}\uhr\rho)) \ge t. \] 
Let $\Pi = \pi^{(1)} \circ \cdots \circ \pi^{(u)}$ be a path of length $t$ through a canonical common $\ell$-partial decision tree $T \in \CCDT_\ell(\mathscr{F} \uhr \rho)$ that witnesses the badness of $\rho$.  We claim that for all $j\in [u]$, given 
\begin{enumerate}
\item The family of $k$-CNFs $\mathscr{F}$, 
\item The ``hybrid" restriction $\varrho^{(j)} := \rho \circ \pi^{(1)}\circ\cdots\circ \pi^{(j-1)} \circ \sigma^{(j)} \circ\cdots \circ \sigma^{(u)}$, 
\item The auxiliary information described in Section~\ref{sec:aux}, 
\end{enumerate} 
the decoder can recover the ``next" hybrid restriction $\varrho^{(j+1)} := \rho\circ \pi^{(1)} \circ \cdots \pi^{(j)} \circ \sigma^{(j+1)}\circ \cdots \circ \sigma^{(u)}$.  Before justifying this claim, we note that from this claim we get that the map $\encode(\rho) = (\rho',\text{auxiliary information})$ is indeed invertible, i.e.~that given $\rho'$ as defined in (\ref{eq:our-encoding}) and the auxiliary information described above, we can recover $\rho$ (this would complete our proof of Theorem~\ref{thm:our-HSL14}). To see this, we first observe that $\rho'$ is simply $\varrho^{(1)}$. Applying the claim $u$ times the decoder is able to iteratively recover $\varrho^{(2)},\ldots,\varrho^{(u+1)} = \rho\circ\pi^{(1)}\circ\cdots\circ \pi^{(u)}$, and having done so she will have identified $\supp(\pi^{(1)}\circ\cdots\circ \pi^{(u)})$. With this information she is then able to recover $\rho$ from $\rho^{(u+1)}$ (simply by flipping the bits in $\supp(\pi^{(1)}\circ\cdots\circ \pi^{(u)})$ back to $\ast$'s).

We now show how the decoder obtains $\varrho^{(j+1)}$ from $\varrho^{(j)}$ for all $j \in [u]$. First, since the auxiliary information specifies $i_j \in [M]$ she is able to identify $F_{i_j}$ within $\mathscr{F}$. Next, 
\begin{itemize}
\item for $j\in [u-1]$, we recall that $\eta^{(j)}$ is a full path in $\CDT(F_{i_j} \uhr \rho\circ \pi^{(1)} \circ \cdots \circ \pi^{(j-1)})$ and hence we may apply the ``Furthermore" part of Fact~\ref{fact:raz} to ``undo" $\sigma^{(j)}$ in $\varrho^{(j)}$ and obtain the restriction $\rho \circ \pi^{(1)} \circ \cdots \pi^{(j-1)} \circ \sigma^{(j+1)}\circ \cdots \circ \sigma^{(u)}$; 
\item for $j=u$, while $\eta^{(u)}$ is not necessarily a full path in $\CDT(F_{i_u} \uhr \rho\circ \pi^{(1)}\circ \cdots \circ \pi^{(u-1)})$ we observe that $\varrho^{(u)}$ is simply $\rho^{(u)}$, and hence we may apply the first part of Fact~\ref{fact:raz} to obtain the restriction $\rho\circ\pi^{(1)}\circ\cdots \circ \pi^{(u-1)}$. 
\end{itemize}
In either case, since our auxiliary information to the decoder specifies the values of $\pi^{(j)}$ on $\supp(\sigma^{(j)})$, the decoder is able to fill in these coordinates accordingly to obtain $\varrho^{(j+1)}$. 
\end{document}